\documentclass[twoside]{article}
\usepackage{amsfonts, amsthm, amssymb}
\usepackage[mathcal]{eucal}

\pagestyle{headings}

\newtheorem{theorem}{Theorem}[section]
\newtheorem{lemma}[theorem]{Lemma}
\newtheorem{prop}[theorem]{Proposition}
\newtheorem{cor}[theorem]{Corollary}

\def\l{\lambda}
\def\a{\alpha}
\def\b{\beta}
\def\g{\gamma}
\def\d{\delta}

\def\t{\tau}

\def\R{\mathbb{R}}

\def\C{\mathbb{C}}

\def\Z{\mathbb{Z}}
\def\P{\mathbb{P}}

\def\T{\mathbb{T}}

\def\M{\mathcal{M}}
\def\Mba{\M_{\beta, \alpha}}
\def\Hba{H_{\beta, \alpha}}

\def\P{\mathcal{P}}
\def\Pba{\P_{\beta, \alpha}}

\def\nba{\nabla_{b,a}}

\def\mbf2{\mathbf{2}}

\def\1N1{1 \leq k \leq N-1}

\begin{document}

\title{Nekhoroshev theorem for the periodic Toda lattice}
\author{Andreas Henrici\footnote{Supported in part by the Swiss National Science Foundation}
 \and Thomas Kappeler\footnote{Supported in part by the Swiss National Science Foundation, the programme SPECT and the European Community through the FP6 Marie Curie RTN ENIGMA (MRTN-CT-2004-5652)}}

\maketitle

\begin{abstract}
The periodic Toda lattice with $N$ sites is globally symplectomorphic to a two parameter family of $N-1$ coupled harmonic oscillators. The action variables fill out the whole positive quadrant of $\R^{N-1}$. We prove that in the interior of the positive quadrant as well as in a neighborhood of the origin, the Toda Hamiltonian is strictly convex and therefore Nekhoroshev's theorem applies on (almost) all parts of phase space.\footnote{2000 Mathematics Subject Classification: 37J35, 37J40, 70H06}
\end{abstract}

\section{Introduction} \label{introduction}

Consider the periodic Toda lattice with period $N$ ($N \geq 2$), 
\begin{displaymath}
\dot{q}_n = \partial_{p_n} H_{Toda}, \quad \dot{p}_n = -\partial_{q_n} H_{Toda}, \quad n \in \Z
\end{displaymath}
where the (real) coordinates $(q_n, p_n)_{n \in \Z}$ satisfy $(q_{n+N}, p_{n+N}) = (q_n, p_n)$ for any $n \in \Z$ and the Hamiltonian $H_{Toda}$ is given by
\begin{displaymath}
  H_{Toda} = \frac{1}{2} \sum_{n=1}^N p_n^2 + \sum_{n=1}^N V(q_n - q_{n+1})
\end{displaymath}
with potential
\begin{equation} \label{potentialformel}
	V(x) = \g^2 e^{\d x} + V_1 x + V_2
\end{equation}
and constants $\g$, $\d$, $V_1$, $V_2 \in \R$ ($\g, \d \neq 0$). The Toda lattice has been introduced by Toda \cite{toda} and studied extensively in the sequel. It is an FPU lattice, i.e. a Hamiltonian system of particles in one space dimension with nearest neighbor interaction. Models of this type have been studied by Fermi-Pasta-Ulam [FPU]. In numerical experiments they found recurrent features for the lattices they considered. Despite an enormous effort from the physics and mathematics community in the past fifty years, by and large, these numerical experiments still defy an explanation. For a recent account of the fascinating history of the FPU problem, see e.g. \cite{beiz} or \cite{gall}. At least in the case of the periodic Toda lattice, the recurrent features can be fully accounted for. In fact, Flaschka \cite{fla1}, H\'enon \cite{henon}, and Manakov \cite{mana} independently proved that the periodic Toda lattice is integrable. In this paper, we show that on the open dense subset of the phase space where all action variables are strictly positive, the Nekhoroshev theorem \cite{nekh1, nekh2} applies. It means that the action variables of the Toda lattice vary slowly over an exponentially long time interval along solutions of a Hamiltonian system with Hamiltonian sufficiently close to $H_{Toda}$.

To continue, let us note that in (\ref{potentialformel}), without loss of generality, we can assume that $V_1 = V_2 = 0$. When expressed in the canonical coordinates $(\d q_j, \frac{1}{\d} p_j)_{1 \leq j \leq N}$, the Hamiltonian $H_{Toda}$ is, up to a scaling factor $\d^{-2}$, of the form
\begin{equation} \label{htoda}
H_{Toda} = \frac{1}{2} \sum_{n=1}^N p_n^2 + \alpha^2 \sum_{n=1}^N e^{q_n - q_{n+1}}
\end{equation}
where $a = |\g \d|$. Moreover, notice that the total momentum $\sum_{n=1}^N p_n$ is conserved. Hence the motion of the center of mass $\frac{1}{N} \sum_{n=1}^N q_n$ is linear and therefore unbounded. However, the orbits of the system relative to the center of mass all lie on tori. To describe these orbits, consider the relative coordinates $v_n := q_{n+1} - q_n$ ($1 \leq n \leq N-1$) and their canonically conjugate ones, $u_n := n\b - \sum_{j=1}^n p_k$ ($1 \leq n \leq N-1$), where $\b = \frac{1}{N} \sum_{j=1}^N p_n$. In the sequel, we view the Toda lattice as a two parameter family of integrable systems with the two parameters $\a>0$ and $\b \in \R$. For $\a>0$ and $\b \in \R$ arbitrary, denote by $\Hba$ the Toda Hamiltonian when expressed in the canonical coordinates $(v_k, u_k)_{\1N1} \in \R^{2N-2}$ and the parameters $\a$ and $\b$.%; we will give a formula for $\Hba$, the Hamiltonian $H_{Toda}$ in terms of the variables $(v_k, u_k)_{\1N1}$, in section \ref{prelim}.

In \cite{ahtk2}, we proved the following result.

\begin{theorem} \label{bnftodatheoremintro}
The periodic Toda lattice admits Birkhoff coordinates. More precisely, there exist globally defined canonical coordinates $(x_k, y_k)_{\1N1} \in \R^{2N-2}$ so that for any $\b \in \R$ and $\a>0$, the Toda Hamiltonian $\Hba$, when expressed in these coordinates, takes the form $\frac{N\b^2}{2} + H_\a(I)$, where $H_\a(I)$ is a real analytic function of the action variables $I_k = (x_k^2 + y_k^2)/2$ ($\1N1$) alone.% Moreover, near $I=0$, $H_\a(I)$ has an expansion of the form
%\begin{equation} \label{halphataylor}
%  N \a^2 + 2\a \sum_{k=1}^{N-1} s_k I_k + \frac{1}{4N} \sum_{k=1}^{N-1} I_k^2 + O(I^3),
%\end{equation}
%with $s_k = \sin \frac{k \pi}{N}$ for $\1N1$.
\end{theorem}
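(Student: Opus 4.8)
The plan is to exploit the complete integrability of the periodic Toda lattice and to build the coordinates $(x_k,y_k)$ directly out of spectral data. First I would pass to Flaschka's variables, replacing the canonical pair $(v_k,u_k)$ by the off-diagonal and diagonal entries of a periodic Jacobi matrix $L=L_{\b,\a}$, so that the Toda flow becomes a Lax equation $\dot L=[B,L]$ and the spectrum of $L$ is conserved. Viewing $L$ as a periodic difference operator, I would introduce its discriminant $\Delta(\l)$ (the trace of the monodromy matrix over one period) and the associated hyperelliptic spectral curve $\mu^2=\Delta(\l)^2-4$. The periodic eigenvalues together with a Dirichlet-type auxiliary spectrum then furnish a complete, Poisson-commuting set of invariants on the phase space relative to the center of mass.

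Next I would define the action variables $I_k$ as suitably normalized period integrals of an Abelian differential over the cycle $A_k$ encircling the $k$-th spectral gap (equivalently, as a real integral of an $\mathrm{arccosh}$-type density across the gap). These are manifestly flow-invariant and in involution, hence the natural candidates for $I_k=(x_k^2+y_k^2)/2$, and they vanish exactly when the corresponding gap closes. The conjugate angles $\theta_k$ I would obtain from the Abel map applied to the Dirichlet divisor, normalized against a fixed homology basis of the curve. That each pair $(I_k,\theta_k)$ is canonical would follow from the standard Riemann bilinear relations, and since $\Hba$ is assembled from spectral invariants alone, it depends on the $I_k$ only; the explicit summand $\frac{N\b^2}{2}$ merely records the kinetic energy of the conserved mean momentum $\b=\frac1N\sum_n p_n$.

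Setting $x_k=\sqrt{2I_k}\cos\theta_k$ and $y_k=\sqrt{2I_k}\sin\theta_k$ then yields action-angle coordinates away from the locus where some gap closes. The hard part — and the real content of the theorem — is to show that these Cartesian coordinates extend globally and real-analytically across every stratum on which one or more $I_k$ vanish, in particular at the origin, where $L$ sits at equilibrium and every gap is closed. There $\theta_k$ is undefined, so one cannot argue in polar coordinates; instead I would study the spectral curve near a collapsing gap, show that the relevant period integral degenerates to precisely first order, so that the combination $x_k+iy_k$ stays analytic in the gap endpoints, and then exploit the analytic dependence of the periodic and Dirichlet spectra on $(v_k,u_k)$ to patch the local models into one global real-analytic diffeomorphism. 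Uniformity in the parameters $\a>0$ and $\b\in\R$ would be handled by tracking the explicit scaling of the spectral data, the principal technical burden being the estimates that preserve analyticity as gaps close.
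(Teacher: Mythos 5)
The paper does not prove this theorem in the text you were given: it is quoted verbatim from the authors' earlier work \cite{ahtk2} (restated in more detail as Theorem \ref{sumthm}, again with a citation), and the present paper only uses it. Your outline does match the strategy of that reference: Flaschka variables, the Lax pair, the discriminant $\Delta(\l)$ and the hyperelliptic curve $w^2=\Delta(\l)^2-4$, actions as normalized periods over the cycles around the spectral gaps (this is exactly formula (\ref{actsba})), angles from the Abel map of the Dirichlet divisor, and Cartesian coordinates $x_k+iy_k=\sqrt{2I_k}\,e^{i\theta_k}$. You also correctly locate where the theorem actually lives: in the real-analytic extension of these Cartesian coordinates across the strata $D_n=\{\gamma_n=0\}$, where the angle is undefined.

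As a proof, however, your proposal stops exactly where the work begins. Everything before the last paragraph is the classical integrability picture (Flaschka, van Moerbeke, Krichever); the content of the theorem is (a) that the $(x_k,y_k)$ extend real-analytically and diffeomorphically to \emph{all} of phase space, including the fully degenerate point where every gap is closed, (b) that they are globally canonical there, and (c) that the map is onto $\R^{2N-2}$ with the stated dependence on the Casimirs $\b,\a$. You announce that one should ``show that the relevant period integral degenerates to precisely first order'' and ``patch the local models,'' but you give no mechanism for this: no choice of analytic normalizations of $x_k+iy_k$ in terms of the gap endpoints and the Dirichlet eigenvalue (the quotients of square roots of gap lengths that must be shown to be analytic and nonvanishing as $\gamma_k\to 0$), no argument for canonicity on the degenerate strata where the angle coordinates cease to exist, and no argument for bijectivity. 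These are the estimates that occupy the bulk of \cite{ahtk1} and \cite{ahtk2}, so the proposal should be regarded as a correct program rather than a proof.
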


In particular, Theorem \ref{bnftodatheoremintro} states that the action variables $(I_n)_{1 \leq n \leq N-1}$ are independent of $\b \in \R$ and $\a>0$. Note that each of the $N-1$ frequencies
	\[ \omega_i = \partial_{I_i} \left( \frac{N \b^2}{2} + H_\a(I) \right) = \partial_{I_i} H_\a(I)
\]
of the Toda lattice $\Hba$ is independent of the parameter $\b$.

The main result of this paper says that the Hamiltonian $H_\a$, introduced in Theorem \ref{bnftodatheoremintro}, is a convex function of the action variables $(I_k)_{\1N1}$:% introduced in Theorem \ref{bnftodatheoremintro}:

\begin{theorem} \label{nekhtodatheorem}
%For any fixed $\b \in \R$, $\a > 0$, and $N \geq 2$, the periodic Toda lattice admits globally defined canonical coordinates $(x_k, y_k)_{\1N1}$ so that the Hamiltonian $\Hba$ is a function of the action variables $I_k = (x_k^2 + y_k^2)/2$ alone. Moreover, for any $I \in \R_{\geq 0}^{N-1}$, the Hessian $\partial_I^2 \Hba(I)$ of $\Hba$ is positive definite. 
In the open quadrant $\R_{>0}^{N-1}$, the Hamiltonian $H_\a$ is a strictly convex function of the action variables $(I_k)_{\1N1}$. More precisely, for any compact subset $U \subseteq \R_{>0}^{N-1}$ and any compact interval $[\a_1, \a_2] \subseteq \R_{>0}$, there exists $m>0$, such that %In particular, there exists a neighborhood $U_{r_0}$ of $I$ and a constant $m = m(I) > 0$ such that $\Hba$ is $m$-convex in $U_{r_0}$, i.e.
\begin{equation} \label{hbamconv}
  \langle \partial_I^2 H_\a(I) \xi, \xi \rangle \geq m \| \xi \|^2, \qquad \forall \xi \in \R^{N-1}
\end{equation}
for any $I \in U$, any $\b \in \R$, and any $\a_1 \leq \a \leq \a_2$.
\end{theorem}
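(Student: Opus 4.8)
The plan is to reduce the uniform estimate to a pointwise statement and then to establish pointwise positive definiteness of the Hessian through the spectral picture of the Toda lattice.

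First I would record two soft reductions. By Theorem~\ref{bnftodatheoremintro} the function $H_\alpha(I)$ is real analytic on $\R_{>0}^{N-1}$ and, crucially, independent of $\beta$; hence the Hessian $\partial_I^2 H_\alpha(I)$ and its smallest eigenvalue depend continuously on $(I,\alpha) \in \R_{>0}^{N-1}\times\R_{>0}$ alone. Since $U\times[\alpha_1,\alpha_2]$ is compact and positive definiteness is an open condition, the existence of a uniform $m>0$ in (\ref{hbamconv}), together with its independence of $\beta$, follows immediately from the pointwise assertion that $\partial_I^2 H_\alpha(I)$ is positive definite at each $(I,\alpha)$ with $I\in\R_{>0}^{N-1}$ and $\alpha>0$. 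Thus the whole problem reduces to pointwise strict convexity of $H_\alpha$ on the open quadrant.

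Next I would pass to the integrable description. On the open quadrant every spectral gap of the associated periodic Jacobi (Lax) operator is open (an action $I_k$ vanishes exactly when the $k$-th gap collapses), so the spectral curve $\mu^2 = \Delta_\alpha(\lambda)^2 - 4$ is a smooth hyperelliptic curve of genus $N-1$ depending analytically on $(I,\alpha)$; both the actions $I_k$ and the Hamiltonian $H_\alpha$ (a symmetric function of the spectral data, essentially $\frac{1}{2}\,\mathrm{tr}\,L^2$ up to the center-of-mass term $N\beta^2/2$) then admit analytic representations as period integrals attached to the gaps and branch points $\lambda_\ell$. From these I would express the frequencies $\omega_i = \partial_{I_i}H_\alpha$ as explicit spectral integrals and differentiate once more to obtain a closed formula for $\partial_{I_i}\partial_{I_j}H_\alpha = \partial_{I_j}\omega_i$ in terms of the branch-point derivatives $\partial\lambda_\ell/\partial I_k$.

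The heart of the matter, and the step I expect to be the main obstacle, is to prove that this matrix is positive definite. The plan is to write the quadratic form $\xi\mapsto\langle\partial_I^2 H_\alpha\,\xi,\xi\rangle$ as a Gram-type expression $\int\bigl|\sum_k \xi_k\,\psi_k(\lambda)\bigr|^2\,d\nu(\lambda)$ against a positive measure $d\nu$ coming from the spectral representation, with $\psi_k$ a family built from the normalized differentials of the spectral curve and indexed by the gaps. Positive semidefiniteness is then automatic, and strictness reduces to linear independence of the $\psi_k$; this is where I expect positivity of the imaginary part of the period matrix of the genus-$(N-1)$ curve (Riemann's bilinear relations) to enter, distinct open gaps carrying linearly independent differentials. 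As a cross-check, and to settle the one-dimensional case $N=2$ directly, one identifies $\langle\partial_I^2 H_\alpha\,\xi,\xi\rangle>0$ with monotonicity of the frequency in the energy, i.e.\ with the hardening character of the exponential interaction: for $N=2$ one computes $H_\alpha = u_1^2 + 2\alpha^2\cosh v_1$, a steeper-than-harmonic well whose period strictly decreases with energy, so $\partial_{I_1}^2 H_\alpha>0$. Once pointwise definiteness is secured, the compactness argument of the first paragraph delivers the uniform constant $m>0$ and completes the proof.
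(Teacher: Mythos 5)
Your opening reduction (compactness plus $\beta$-independence reduces everything to pointwise positive definiteness of $\partial_I^2 H_\alpha$ on the open quadrant) is correct and consistent with the paper, and your passage to the spectral curve and to frequencies as periods of Abelian differentials matches the machinery of section \ref{freqtoda}. The gap is in what you call the heart of the matter. You assert that the quadratic form $\xi \mapsto \langle \partial_I^2 H_\alpha\, \xi, \xi\rangle$ can be written as a Gram expression $\int \bigl|\sum_k \xi_k \psi_k(\lambda)\bigr|^2\, d\nu(\lambda)$ against a positive measure, with strict positivity to follow from linear independence of the $\psi_k$ and positivity of the imaginary part of the period matrix. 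No such representation is derived, and it is not clear one exists: differentiating the $d$-periods of $\Omega_2$ with respect to the actions produces terms governed by the motion of the branch points $\partial \lambda_\ell / \partial I_k$, and the resulting matrix is not the period matrix of the curve nor any evident Gram matrix. Riemann's bilinear relations give positivity of $\mathrm{Im}\,\tau$ for the normalized \emph{holomorphic} differentials, but $\partial_{I_j}\omega_i$ involves the second-kind differential $\Omega_2$ and the variation of the curve itself, so that positivity does not transfer. In short, the decisive sign statement is exactly the hard point, and your proposal leaves it as a hope rather than a proof; the $N=2$ cross-check does not generalize.

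The paper avoids this difficulty by a different mechanism, and the contrast is instructive. It never proves positive definiteness of the Hessian directly away from the origin. Instead it proves two weaker facts: (a) by the Birkhoff normal form at the elliptic fixed point (Theorem \ref{bnf4prop}, Corollary \ref{strictconvexlocal}), $\partial_I^2 H_\alpha = \frac{1}{2N}\mathrm{Id} + O(I)$ is positive definite near $I=0$; and (b) by the version of Krichever's theorem (Theorem \ref{krichevertheorem}) applied to the extended period map $\mathcal{F}$, composed with the embedding $\Lambda$ and evaluated via Theorem \ref{freqintrepr} and Proposition \ref{mathcalflambdaprop}, the Hessian $\bigl(\partial_{I_n}\partial_{I_l} H_\alpha\bigr)$ is \emph{nondegenerate} at every point of $\R_{>0}^{N-1}$. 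Since the Hessian is continuous, nonsingular on the connected open quadrant, and positive definite near the origin, its signature cannot change, so it is positive definite everywhere; compactness then yields the uniform $m$. If you want to complete your argument, you should either supply an actual derivation of the claimed Gram representation (which would be a genuinely new and stronger result) or replace that step by a nondegeneracy-plus-connectedness argument of the above type, which is what the known proof does.
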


Theorem \ref{nekhtodatheorem} implies that Nekhoroshev's theorem holds for the Toda lattice on %almost all of phase space. %That the Nekhoroshev theorem can be applied near the equilibrium point already follows from Proposition \ref{bnf4prop}.
%More precisely, introduce for any $\b \in \R$, $\a>0$
	\[ \mathcal{P}^\bullet := \{ (v,u) \in \R^{2N-2} | \, I_n(v,u) > 0 \; \forall \, 1 \leq n \leq N-1 \},
\]
an open and dense subset of $\R^{2N-2}$ by Theorem \ref{bnftodatheoremintro}.% - see Corollary \ref{mbacor}.% and the remark towards the end of section \ref{prelim}, where $\Mba^\bullet$ is conveniently characterized in terms of spectral data of the Jacobi matrices $L^\pm(b,a)$, defined in (\ref{jacobi}) for $(b,a) \in \M$.

\begin{cor}
For any $\b \in \R$ and $\a>0$, Nekhoroshev's theorem applies to (sufficiently small) Hamiltonian perturbations of the Toda Hamiltonian $\Hba$ on all of $\mathcal{P}^\bullet$. (See \cite{lochak}, \cite{lone}, \cite{nekh1}, \cite{nekh2}, \cite{poeschel0}, \cite{poeschel} for various versions of Nekhoroshev's theorem and their proofs.)
\end{cor}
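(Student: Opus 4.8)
The plan is to deduce the corollary directly from the convex (or ``quasi-convex'') version of Nekhoroshev's theorem, as formulated by Lochak and P\"oschel \cite{lochak, poeschel}, by checking its two structural hypotheses — that the integrable part is real-analytic in action-angle variables and that it is uniformly convex on the region of interest — and then localizing to compact subsets of $\mathcal{P}^\bullet$. The two inputs are already in place: Theorem \ref{bnftodatheoremintro} supplies the normal form, and Theorem \ref{nekhtodatheorem} supplies the convexity.

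First I would set up the normal form. By Theorem \ref{bnftodatheoremintro}, in the globally defined Birkhoff coordinates $(x_k, y_k)_{\1N1}$ the Toda Hamiltonian $\Hba$ reads $\frac{N\b^2}{2} + H_\a(I)$, a function of the actions $I_k = (x_k^2 + y_k^2)/2$ alone. For fixed $\b$ the additive constant $\frac{N\b^2}{2}$ has no effect on Hamilton's equations and may be dropped, leaving the integrable Hamiltonian $H_\a(I)$. On $\mathcal{P}^\bullet$ every action is strictly positive, so the polar-type substitution $x_k = \sqrt{2 I_k}\cos\theta_k$, $y_k = \sqrt{2 I_k}\sin\theta_k$ is a real-analytic symplectomorphism onto genuine action-angle variables $(\theta, I) \in \T^{N-1} \times \R_{>0}^{N-1}$; the coordinate singularity at $I_k = 0$ is avoided precisely because $\mathcal{P}^\bullet$ excludes the hyperplanes $\{I_k = 0\}$. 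Since $H_\a$ is real-analytic (Theorem \ref{bnftodatheoremintro}), it extends holomorphically to a complex neighborhood of any compact $U \subseteq \R_{>0}^{N-1}$, which is exactly the analytic setting demanded by the theorem.

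Next I would read off the convexity hypothesis from Theorem \ref{nekhtodatheorem}. Fix a compact $U \subseteq \R_{>0}^{N-1}$ and a compact interval $[\a_1,\a_2] \subseteq \R_{>0}$. Estimate (\ref{hbamconv}) gives a uniform lower bound $\langle \partial_I^2 H_\a(I)\xi,\xi\rangle \geq m\|\xi\|^2$, while continuity of $\partial_I^2 H_\a$ on the compact set yields a matching upper bound $\|\partial_I^2 H_\a(I)\| \leq M$. Together these are precisely the strict-convexity constants required by the Lochak--P\"oschel version of Nekhoroshev's theorem (a $C^2$ function with positive-definite Hessian bounded away from $0$ and $\infty$ is steep, with all steepness indices equal to one). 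Hence on the phase-space region $\T^{N-1} \times U$ the integrable Hamiltonian meets every hypothesis of the theorem.

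It remains to assemble the estimate and handle the perturbation. Given a Hamiltonian perturbation in the original coordinates, I would push it forward by the (real-analytic) Birkhoff symplectomorphism of Theorem \ref{bnftodatheoremintro} and then by the polar substitution; both maps are real-analytic, so a perturbation that is small in an analytic norm remains real-analytic and small in the $(\theta, I)$ norm. Applying the convex Nekhoroshev theorem on $\T^{N-1} \times U$ yields constants $a,b,c > 0$ and $\e_0 > 0$ (depending on $U$, $[\a_1,\a_2]$, $m$, $M$, and the analyticity width) such that $\|I(t) - I(0)\| \leq C\e^{b}$ for $|t| \leq T\exp(c\,\e^{-a})$ whenever the perturbation size $\e \leq \e_0$. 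Covering $\mathcal{P}^\bullet$ by such compact sets $U$ gives the conclusion on all of $\mathcal{P}^\bullet$. I expect the only genuine subtlety to be the meaning of ``on all of $\mathcal{P}^\bullet$'': the Nekhoroshev constants depend on $U$ through the analyticity width and the convexity bound $m$, both of which may degenerate as $U$ approaches the boundary $\bigcup_k \{I_k = 0\}$, so the statement must be read as holding on each compact subset with constants depending on that subset, rather than with constants uniform up to the boundary. The uniformity in $\b$, by contrast, is immediate: changing $\b$ alters $\Hba$ only by the dynamically irrelevant constant $\frac{N\b^2}{2}$, so the same constants work for every $\b \in \R$.
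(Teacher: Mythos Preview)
The paper does not give a proof of this corollary at all: it is stated immediately after Theorem~\ref{nekhtodatheorem} as a direct consequence of that convexity result together with the cited Nekhoroshev literature, and no further argument is offered. Your proposal is a correct and careful fleshing-out of precisely the reasoning the paper leaves implicit---passing to action-angle coordinates via Theorem~\ref{bnftodatheoremintro}, invoking the uniform convexity bound (\ref{hbamconv}) on compact subsets of $\R_{>0}^{N-1}$, and then appealing to the convex/quasi-convex Nekhoroshev theorem of Lochak and P\"oschel. Your observation that the Nekhoroshev constants depend on the compact set $U$ and may degenerate as $U$ approaches $\partial\R_{>0}^{N-1}$ is a genuine clarification of what the paper's phrase ``on all of $\mathcal{P}^\bullet$'' should mean; the paper itself is silent on this point.
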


In practice, it is difficult to verify for an integrable system with a given Hamiltonian $H$ whether the convexity (or steepness) condition of Nekhoroshev's theorem is satisfied as this condition refers to $H$, when expressed in action variables, and is \emph{not} invariant under canonical transformations. Typically one does not know the Hamiltonian as a function of the action variables explicitly enough to derive the convexity property.

To prove Theorem \ref{nekhtodatheorem}, we make use of the Birkhoff normal form of the Toda lattice $\Hba$ on $\R^{2N-2}$ near the elliptic fixed point $(v,u) = (0,0)$, established in \cite{ahtk3}.% where $1_N = (1, \ldots, 1)$.

\begin{theorem} \label{bnf4prop}
Let $\a>0$ be arbitrary. Near $I=0$, the function $H_\a(I)$, introduced in Theorem \ref{bnftodatheoremintro}, has an expansion of the form
\begin{equation} \label{halphataylor}
  N \a^2 + 2\a \sum_{k=1}^{N-1} s_k I_k + \frac{1}{4N} \sum_{k=1}^{N-1} I_k^2 + O(I^3),
\end{equation}
with $s_k = \sin \frac{k \pi}{N}$ for $\1N1$. In particular, the Hessian of $H_\a(I)$ at $I=0$ is given by
\begin{displaymath}
  \partial^2_I H_\a|_{I=0} = \frac{1}{2N} \textrm{Id}_{N-1}.
\end{displaymath}
\end{theorem}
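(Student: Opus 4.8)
The plan is to obtain the expansion (\ref{halphataylor}) by computing the Birkhoff normal form of $\Hba$ up to order four directly from the Taylor expansion of the Hamiltonian at the elliptic equilibrium $(v,u)=(0,0)$, and then reading off the coefficients of $I_k^2$ and $I_kI_l$. First I would expand the potential, using $q_n-q_{n+1}=-v_n$ for $\1N1$ and $q_N-q_{N+1}=\sum_{n=1}^{N-1}v_n$, via $e^x=1+x+\frac{x^2}{2}+\frac{x^3}{6}+\dots$. The $N$ exponentials contribute the constant $N\a^2$, the linear terms cancel (equilibrium), and there remain a quadratic form $H_2$ together with cubic and quartic parts $H_3,H_4$. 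Equivalently, linearizing the equations of motion on the full periodic lattice yields the discrete wave equation $\ddot{q}_n=-\a^2(2q_n-q_{n-1}-q_{n+1})$, whose dispersion relation is $\omega_k=2\a\sk$. A linear symplectic change of variables (the discrete Fourier/sine transform adapted to the cyclic lattice) diagonalizes $H_2$ into $\sum_{k=1}^{N-1}\omega_kI_k$, where $I_k=(x_k^2+y_k^2)/2$. This already produces the constant $N\a^2$ and the linear term $2\a\sum_{k}s_kI_k$ in (\ref{halphataylor}); the content of the theorem is the fourth-order (quadratic in $I$) part.

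Next I would express $H_3$ and $H_4$ in the normal-mode coordinates, passing to complex variables $z_k=\frac{1}{\sqrt2}(x_k+iy_k)$. Since the potential is a sum over sites of a function of $q_n-q_{n+1}$ alone, the transition to Fourier modes imposes a momentum selection rule: a monomial in $z_k,\bar z_k$ can occur only when the signed sum of its indices vanishes modulo $N$. Combined with the explicit coefficients coming from $\frac16(q_n-q_{n+1})^3$ and $\frac1{24}(q_n-q_{n+1})^4$, this gives closed trigonometric formulas for the coefficients of $H_3$ and $H_4$ and severely restricts which modes couple.

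I would then carry out the standard normalization. A cubic generating function $\chi_3$ removes $H_3$ provided every monomial in $H_3$ is non-resonant, i.e. $\pm\omega_{k_1}\pm\omega_{k_2}\pm\omega_{k_3}\neq0$ whenever $\pm k_1\pm k_2\pm k_3\equiv0\pmod N$; checking this from $\omega_k=2\a\sk$ is the first delicate point. Eliminating $H_3$ feeds, at fourth order, the resonant part of $\frac12\{\chi_3,H_3\}$, which must be combined with the resonant part of $H_4$. Since the only resonant quartic monomials are products of actions, the normalized Hamiltonian equals $N\a^2+\sum_k\omega_kI_k+\frac12\sum_{k,l}A_{kl}I_kI_l+O(|z|^5)$, and everything reduces to evaluating the twist coefficients $A_{kl}$.

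The main obstacle is this evaluation. Each $A_{kl}$ is a sum of a direct quartic contribution and a second-order cubic contribution — a sum over an intermediate mode $m$ with denominators of the form $\omega_k\pm\omega_l\pm\omega_m$ — and both are intricate trigonometric sums in $\sk,\sl,\sm$. The crucial and a priori surprising claim is that all off-diagonal coefficients vanish while every diagonal coefficient equals the same number $A_{kk}=\frac{1}{2N}$. I would organize the computation so that the selection rule collapses the intermediate sums to a handful of terms, and use identities such as $\sum_{m=1}^{N-1}\sin^2\frac{m\pi}{N}=\frac N2$ to produce the constant $\frac{1}{2N}$; I expect the vanishing of the cross terms $I_kI_l$ ($k\neq l$) to hinge on a cancellation reflecting the extra structure guaranteed by integrability (Theorem \ref{bnftodatheoremintro}). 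Once $A_{kl}=\frac1{2N}\delta_{kl}$ is established, the expansion (\ref{halphataylor}) follows with coefficient $\frac{1}{4N}$ on each $I_k^2$, and the Hessian $\partial_I^2H_\a|_{I=0}=\frac1{2N}\,\textrm{Id}_{N-1}$ is immediate.
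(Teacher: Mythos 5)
A point of reference first: this paper does not actually prove Theorem \ref{bnf4prop} --- it quotes it from \cite{ahtk3}, where the Birkhoff normal form of $\Hba$ at the elliptic fixed point is computed to fourth order. Your strategy (expand $H_{Toda}$ at $(v,u)=(0,0)$, diagonalize the quadratic part by the discrete Fourier transform to obtain the frequencies $2\a\sk$, remove $H_3$ by a cubic generating function after a non-resonance check, and read the twist matrix off the resonant part of $H_4+\frac12\{\chi_3,H_3\}$) is exactly the route taken there, so the approach is the right one and the constant and linear terms of (\ref{halphataylor}) are indeed obtained as you describe.

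As a proof, however, the proposal has a genuine gap: everything that makes the theorem non-trivial is deferred. The statement to be proved is precisely that the twist matrix equals $\frac{1}{2N}\mathrm{Id}_{N-1}$, and at that point you write only that you ``would organize the computation'' and ``expect'' the off-diagonal cancellation. Two concrete steps are missing. (i) The non-resonance check: one must verify that $\pm\omega_{k_1}\pm\omega_{k_2}\pm\omega_{k_3}\neq 0$ for every index triple admitted by the selection rule $\pm k_1\pm k_2\pm k_3\equiv 0 \pmod N$; this follows from identities such as $\sin x+\sin y-\sin(x+y)=4\sin\frac{x}{2}\sin\frac{y}{2}\sin\frac{x+y}{2}$, but the various sign and wrap-around cases actually have to be gone through. (ii) The evaluation of $A_{kl}$: the direct quartic contribution and the sum over intermediate modes with small denominators $\omega_k\pm\omega_l\pm\omega_m$ must be computed explicitly and shown to combine to $\frac{1}{2N}\delta_{kl}$. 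Note also that your heuristic for (ii) is misleading: integrability (Theorem \ref{bnftodatheoremintro}) guarantees that the normal form is a function of the actions alone, but a function of the actions may perfectly well contain cross terms $I_kI_l$, so the vanishing of the off-diagonal twist coefficients is \emph{not} a consequence of integrability --- it is a Toda-specific trigonometric cancellation that must be exhibited. Without (i) and (ii) carried out, the argument establishes only the general shape of the expansion, not the coefficient $\frac{1}{4N}$ nor the diagonal form of the Hessian, which is the content the rest of the paper relies on.
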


As an immediate consequence of Theorem \ref{bnf4prop} we get

\begin{cor} \label{strictconvexlocal}
Near $I=0$, $H_\a(I)$ is strictly convex for any $\a>0$.
\end{cor}

Outside of $I=0$, we argue differently. For any $\a>0$, consider the frequency map
	\[ \R_{\geq 0}^{N-1} \to \R^{N-1}, I \mapsto \omega(I;\a) := \partial_I H_\a.
\]
%where $\omega_i = \omega_i(I,\a)$, $1 \leq i \leq N-1$, are referred to as the frequencies of the Toda lattice. 
In view of Corollary \ref{strictconvexlocal}, Theorem \ref{nekhtodatheorem} follows once we can show that the frequency map is \emph{nondegenerate} on all of $\R_{>0}^{N-1}$. Note that the property of being nondegenerate is \emph{invariant} under coordinate transformations, an observation used in a crucial way in the sequel. In section \ref{freqtoda}, we prove that on $\mathcal{P}^\bullet$, the frequencies can be expressed in terms of periods of a certain Abelian differential of the $2^{nd}$ kind. To show that the frequency map is nondegenerate on $\mathcal{P}^\bullet$ we use, in addition to Theorem \ref{bnftodatheoremintro}, a version of Krichever's theorem (Theorem \ref{krichevertheorem}) suited for applications to the Toda frequencies. In \cite{krich}, Krichever stated his result concerning the period map of certain Abelian differentials for Hill's curve. Bikbaev and Kuksin presented a proof of this result in \cite{biku}. In section \ref{krichever} we apply their scheme of proof to prove the version of Krichever's theorem needed for our purposes. In section \ref{mainsection}, we prove Theorem \ref{nekhtodatheorem}.

%Consequenctly, the frequency map $I \mapsto \n_I \Hba$ is nondegenerate at $I=0$ and hence, by analyticity, nondegenerate on an open dense subset of $(\R_{\geq 0})^{N-1}$.

%\emph{Outline of the paper:}

\emph{Acknowledgements:} It is a great pleasure to thank Sergei Kuksin for valuable discussions.

\section{Preliminaries} \label{prelim}

To prove the integrability of the Toda lattice, Flaschka introduced the (noncanonical) coordinates (cf. \cite{fla1})
\begin{displaymath}
b_n := -p_n \in \R, \quad a_n := \alpha e^{\frac{1}{2} (q_n - q_{n+1})} \in \R_{>0} \quad (n \in \Z).
\end{displaymath}
These coordinates describe the motion of the Toda lattice relative to the center of mass. They are related to the relative coordinates defined in section \ref{introduction} as follows
	\[ ((u_n, v_n)_{1 \leq n \leq N-1}, \b, \a) \mapsto (b_n, a_n)_{1 \leq n \leq N}
\]
with $a_n = \a \exp \left( -\frac{1}{2} v_n \right)$ ($1 \leq n \leq N-1$), $a_N = \a \exp \left( \frac{1}{2} \sum_{k=1}^{N-1} v_k \right)$, $b_1 = u_1 - \b$, $b_n = u_n - u_{n-1} - \b$ ($2 \leq n \leq N-1$), and $b_N = -u_{N-1} - \b$. In the sequel we will work with the coordinates $(b_n, a_n)_{1 \leq n \leq N}$ rather than the relative coordinates $(u_n, v_n)_{1 \leq n \leq N-1}$. %Note that the total momentum is conserved by the Toda flow, hence any trajectory of the center of mass is a straight line. 
In these coordinates, the Hamiltonian $H_{Toda}$ takes the simple form
\begin{equation} \label{htodaflaschka}
	H_{Toda} = \frac{1}{2} \sum_{n=1}^N b_n^2 + \sum_{n=1}^N a_n^2,
\end{equation}
and the equations of motion are
\begin{equation} \label{flaeqn}
\left\{ \begin{array}{lllll}
 \dot{b}_n & = & a_n^2 - a_{n-1}^2 \\
 \dot{a}_n & = & \frac{1}{2} a_n (b_{n+1} - b_n)
\end{array} \right. \quad (n \in \Z).
\end{equation}
Note that $(b_{n+N}, a_{n+N}) = (b_n, a_n)$ for any $n \in \Z$, and $\prod_{n=1}^N a_n = \alpha^N$. Hence we can identify the sequences $(b_n)_{n \in \Z}$ and $(a_n)_{n \in \Z}$ with the vectors $(b_n)_{1 \leq n \leq N} \in \R^N$ and $(a_n)_{1 \leq n \leq N} \in \R_{>0}^N$. The phase space of the system (\ref{flaeqn}) is then given by
$$ \M := \R^N \times \R_{>0}^N, $$
and it turns out that (\ref{flaeqn}) is a Hamiltonian system with a nonstandard Poisson structure $J$ found by Flaschka \cite{fla1} (cf. \cite{ahtk1}). This Poisson structure is degenerate and admits the two Casimir functions\footnote{A smooth function $C: \M \to \R$ is a Casimir function for $J$ if $dC(x) \in$ Ker$J(x)$ for any $x \in \M$.}
\begin{equation} \label{casimirdef}
C_1 := -\frac{1}{N} \sum_{n=1}^N b_n \quad \textrm{and} \quad C _2 := \left( \prod_{n=1}^N a_n \right)^\frac{1}{N}.
\end{equation}
%whose gradients $\nba C_i \! = \! (\n_b C_i, \n_a C_i)$ $(i = 1,2)$ are linearly independent at each point $(b,a) \in \M$. 
Let $\Mba := \{ (b,a) \in \M : (C_1, C_2) = (\b, \a) \}$ denote the level set of $(C_1, C_2)$ at $(\b, \a) \in \R \times \R_{>0}$. As $C_1$ and $C_2$ are real analytic on $\M$ and the gradients $\nba C_1$ and $\nba C_2$ are linearly independent everywhere on $\M$, the sets $\Mba$ are real analytic submanifolds of $\M$ of (real) codimension two. Furthermore, the pullback of the Poisson structure $J$ to $\Mba$, is nondegenerate everywhere on $\Mba$ and therefore induces a symplectic structure on $\Mba$. In this way, we obtain a symplectic foliation of $\M$ with $\Mba$ being the symplectic leaves. By a slight abuse of notation with respect to the definition made in section \ref{introduction}, we denote by $\Hba$ the restriction of the Hamiltonian $H_{Toda}$ to $\Mba$.

As a model space for the construction of canonical Cartesian coordinates on $\M$, we introduced in \cite{ahtk2} the space $\P := \R^{2(N-1)} \times \R \times \R_{>0}$, foliated by the leaves $\Pba := \R^{2(N-1)} \times \{ \beta \} \times \{ \a \}$ which are endowed with the standard symplectic structure. Denote by $J_0$ the degenerate Poisson structure on $\P$ having $\Pba$ as its symplectic leaves with standard symplectic structure and the coordinates $\b$ and $\a$ as its Casimirs. In \cite{ahtk2} we proved the following theorem which describes in more detail the results stated in Theorem \ref{bnftodatheoremintro}.

\begin{theorem} \label{sumthm}
There exists a map
\begin{displaymath}
\begin{array}{ccll}
 \Phi: & (\M, J) & \to & (\P, J_0) \\
 & (b,a) & \mapsto & ((x_n, y_n)_{1 \leq n \leq N-1}, C_1, C_2)
\end{array}
\end{displaymath}
with the following properties:
\begin{itemize}
  \item[(i)] $\Phi$ is a real analytic diffeomorphism.
  \item[(ii)] $\Phi$ is canonical, i.e. it preserves the  Poisson brackets. In particular, the symplectic foliation of $\M$ by $\Mba$ is trivial.
  \item[(iii)] The coordinates $(x_n, y_n)_{1 \leq n \leq N-1}, C_1, C_2$ are
  global Birkhoff coordinates for the periodic Toda lattice, i.e. the transformed Toda Hamiltonian $\hat{H} = H \circ \Phi^{-1}$
  is a function of the actions $I_n := (x_n^2 + y_n^2)/2$ $(1 \leq n \leq N-1)$ and $C_1, C_2$ alone. It is of the form $\frac{N \b^2}{2} + H_\a(I)$.
\end{itemize}
\end{theorem}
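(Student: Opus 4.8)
The plan is to construct $\Phi$ from the integrable (Lax-pair) structure of the periodic Toda lattice via its periodic spectral data, in the spirit of the now-standard action--angle constructions for soliton lattices. The starting point is the periodic Jacobi (Lax) matrix $L = L(b,a)$ built from the Flaschka variables, whose discriminant $\Delta(\l) = \Delta(\l; b,a)$ --- the trace of the monodromy matrix --- is, for each fixed $\l$, a real analytic function on $\M$. The $2N$ solutions of $\Delta(\l)^2 = 4$ (the periodic and antiperiodic eigenvalues) bound $N$ bands and $N-1$ spectral gaps, and the hyperelliptic spectral curve $\mu^2 = \Delta(\l)^2 - 4$ has genus at most $N-1$. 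Since the Casimirs $C_1, C_2$ are preserved and pin down the leaves $\Mba$, the $2(N-1)$ symplectic directions on each leaf are exactly those coordinatized by data attached to these $N-1$ gaps.

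First I would introduce the action variables $I_k$ ($\1N1$) as $\frac{1}{\pi}$ times the period of a suitable Abelian differential of the second kind over a cycle encircling the $k$-th gap; with the appropriate normalization these integrals are real analytic on $\M$ and can be arranged to be nonnegative, vanishing precisely when the $k$-th gap collapses. Conjugate angle variables $\theta_k$ are obtained from the Dirichlet divisor $(\mu_k, \pm)$ through the Abel map, and on the open dense set where all gaps are open one verifies the canonical relations $\{I_j, I_k\} = 0$ and $\{\theta_j, I_k\} = \delta_{jk}$. The Birkhoff coordinates are then defined by $x_k + i y_k = \sqrt{2 I_k}\, e^{i\theta_k}$, with $C_1, C_2$ retained as the remaining (Casimir) coordinates; together these assemble into a candidate map $\Phi \colon \M \to \P$, whose $C_1, C_2$ components are trivially analytic.

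The technical heart --- and the step I expect to be the main obstacle --- is to prove that $\Phi$ is a real analytic diffeomorphism onto $\P$, and in particular that the $(x_k, y_k)$ extend analytically across the singular set $\{I_k = 0\}$ where the angle $\theta_k$ itself degenerates. This requires showing that, as the $k$-th gap closes, the Dirichlet eigenvalue $\mu_k$ and the gap endpoints coalesce at a rate that exactly cancels the $\sqrt{I_k}$ prefactor, so that $z_k = x_k + i y_k$ is a genuinely analytic function of $(b,a)$; one achieves this by expressing $z_k$ directly through the gap endpoints, the value of $\mu_k$, and normalizing constants of the differentials, all of which are analytic, and then verifying the removable-singularity estimates gap by gap. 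Injectivity and surjectivity follow from the inverse spectral theory of periodic Jacobi matrices: the periodic spectrum together with the Dirichlet divisor and its signs determines $(b,a)$ uniquely, and every admissible such configuration --- equivalently, every point of $\P$ --- is realized. Property (ii) then holds because the computed Poisson brackets of the $(I_k, \theta_k)$ show that the standard symplectic form on each leaf $\Pba$ pulls back to the Flaschka structure on $\Mba$.

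Finally, property (iii) comes essentially for free once the construction is in place. The restriction $H_{Toda}|_{\Mba} = \frac{1}{2}\sum b_n^2 + \sum a_n^2 = \frac{1}{2}\mathrm{tr}(L^2)$ is a spectral invariant, expressible through $\Delta$ and hence through the periodic spectrum alone; consequently it Poisson-commutes with every action $I_k$ and is invariant under the angle flows, so $\hat H = H \circ \Phi^{-1}$ depends only on the actions $(I_k)$ and the Casimirs $(C_1, C_2) = (\b, \a)$. Splitting off the center-of-mass term $\frac{N\b^2}{2}$ and recording the remainder as $H_\a(I)$ yields the asserted Birkhoff normal form.
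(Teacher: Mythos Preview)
The present paper does not actually prove Theorem~\ref{sumthm}: it is quoted from the authors' earlier work \cite{ahtk2} (see the sentence immediately preceding the theorem statement) and is used here as a black box to set up the frequency map and the Riemann surface machinery of Sections~\ref{krichever}--\ref{mainsection}. There is therefore no ``paper's own proof'' to compare your proposal against.

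That said, your outline is an accurate high-level description of the strategy carried out in \cite{ahtk2} (and, in close analogy, in \cite{kapo} for KdV): the actions are defined as contour integrals of the form (\ref{actsba}), angle-type coordinates are extracted from the Dirichlet divisor via an Abel-type map, and the central analytic difficulty is precisely the one you flag --- extending the Cartesian coordinates $z_k = x_k + i y_k$ real-analytically across the loci $\{I_k = 0\}$ where the corresponding spectral gap collapses. Your argument for (iii), via $H_{Toda} = \tfrac{1}{2}\,\mathrm{tr}\bigl((L^+)^2\bigr)$ being a spectral invariant and the decomposition $\tfrac{1}{2}\sum b_n^2 = \tfrac{N\beta^2}{2} + \tfrac{1}{2}\sum(b_n+\beta)^2$, is correct. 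What your sketch does not supply, and what occupies the bulk of \cite{ahtk2}, are the detailed normalizations and estimates that make the analytic extension go through and that establish global bijectivity from inverse spectral theory; the phrase ``removable-singularity estimates gap by gap'' hides a substantial amount of work that cannot be bypassed.
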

%Further properties of $\Omega$ are discussed at the end of section \ref{diffeochapter}.

As an immediate consequence of Theorem \ref{sumthm} one gets
\begin{cor} \label{mbacor}
For any $\b \in \R$, $\a>0$, the set
	\[ \Mba^\bullet = \{ (b,a) \in \Mba | I_n(b,a)  > 0 \; \forall \, 1 \leq n \leq N-1 \}
\]
is open and dense in $\Mba$.
\end{cor}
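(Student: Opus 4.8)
The plan is to transport the statement to the model space $\P$ via the diffeomorphism $\Phi$ of Theorem \ref{sumthm} and there reduce it to an elementary fact about complements of coordinate subspaces. By property (ii) of Theorem \ref{sumthm}, $\Phi$ carries the Casimirs $(C_1, C_2)$ to the coordinates $(\b, \a)$, so it maps the level set $\Mba = \{(C_1,C_2) = (\b,\a)\}$ onto the leaf $\Pba = \R^{2(N-1)} \times \{\b\} \times \{\a\}$. Since $\Phi$ is a real analytic diffeomorphism of the ambient spaces, its restriction $\Phi|_{\Mba}\colon \Mba \to \Pba$ is again a diffeomorphism; in particular it is a homeomorphism, hence preserves both openness and density.

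First I would identify the image of $\Mba^\bullet$. By definition $I_n = (x_n^2 + y_n^2)/2$, so after applying $\Phi$ the condition $I_n(b,a) > 0$ for all $n$ becomes $(x_n, y_n) \neq (0,0)$ for all $\1N1$. Thus $\Phi(\Mba^\bullet)$ is the complement in $\Pba \cong \R^{2(N-1)}$ of the set $\bigcup_{n=1}^{N-1} L_n$, where $L_n := \{(x,y) : x_n = y_n = 0\}$ is a linear subspace of codimension two.

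The remaining step is the observation that $\bigcup_{n=1}^{N-1} L_n$ is closed and nowhere dense: each $L_n$ is a proper linear subspace, hence closed with empty interior, and a finite union of such sets is again closed with empty interior. Consequently its complement is open and dense in $\R^{2(N-1)}$, so $\Phi(\Mba^\bullet)$ is open and dense in $\Pba$. Pulling this back through the homeomorphism $\Phi|_{\Mba}$ shows that $\Mba^\bullet$ is open and dense in $\Mba$, as claimed.

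I expect no serious obstacle here, since the entire content is supplied by Theorem \ref{sumthm}, which trivializes the foliation and provides a diffeomorphism leaf by leaf. The only point deserving a moment's care is to confirm that $\Phi$ genuinely restricts to a diffeomorphism between the corresponding leaves rather than a mere smooth bijection, but this is immediate from the fact that $\Phi$ sends $(C_1,C_2)$ to $(\b,\a)$ and is itself a diffeomorphism of $\M$ onto $\P$.
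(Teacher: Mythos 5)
Your argument is correct and is precisely the one the paper intends: the corollary is stated there as an immediate consequence of Theorem \ref{sumthm}, with the details (transport by $\Phi$ to the leaf $\Pba$, where $I_n>0$ for all $n$ cuts out the complement of finitely many codimension-two subspaces) left to the reader exactly as you have supplied them. No discrepancy with the paper's approach.
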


%Using Theorem \ref{sumthm}, in \cite{ahtk3} we explicitly computed the Birkhoff normal form of these coordinates in a neighborhood of the origin, obtaining Theorem \ref{bnftodatheoremintro} and Proposition \ref{bnf4prop} mentioned in the introduction.

For later use we describe an important ingredient in the proof of Theorem \ref{sumthm}. For any $(b,a) \in \M$ denote by $L^+(b,a)$ and $L^-(b,a)$ the symmetric $N \times N$-matrices defined by
\begin{equation} \label{jacobi}
L^\pm(b,a) := \left( \begin{array}{ccccc}
b_1 & a_1 & 0 & \ldots & \pm a_N \\
a_1 & b_2 & a_2 & \ddots & \vdots \\
0 & a_2 & b_3 & \ddots & 0 \\
\vdots & \ddots & \ddots & \ddots & a_{N-1} \\
\pm a_N & \ldots & 0 & a_{N-1} & b_N \\
\end{array} \right)
\end{equation}
and by $B$ the skew-symmetric $N \times N$-matrix
\begin{displaymath}
B = \left( \begin{array}{ccccc}
0 & a_1 & 0 & \ldots & -a_N \\
-a_1 & 0 & a_2 & \ddots & \vdots \\
0 & -a_2 & \ddots & \ddots & 0 \\
\vdots & \ddots & \ddots & \ddots & a_{N-1} \\
a_N & \ldots & 0 & -a_{N-1} & 0 \\
\end{array} \right).
\end{displaymath}
Flaschka \cite{fla1} observed that system (\ref{flaeqn}) admits the Lax pair formulation
	\[ \dot{L}^+= \partial_t L^+ = [B, L^+].
\]
As the flow of $\dot{L}^+ = [B, L^+]$ is isospectral, the eigenvalues of $L^+$ are conserved quantities of the Toda lattice. %In the sequel, we will also refer to the spectrum of $L^\pm(b,a)$ as the periodic spectrum of $(b,a)$.
%\begin{prop} \label{isosp}
%For a solution $\big( b(t), a(t) \big)$ of the periodic Toda lattice (\ref{flaeqn}), the eigenvalues $(\l_j^+)_{1 \leq j \leq N}$ of $L^+\big( b(t), a(t) \big)$ are a set of conserved quantities.
%\end{prop}
We need some results about the spectral theory of periodic Jacobi matrices (\ref{jacobi}).
%Let us now collect a few results from \cite{moer} and \cite{toda} of the spectral theory of Jacobi matrices needed in the sequel. Denote by $\M^\C$ the complexification of the phase space $\M$,
%\begin{displaymath}
%\M^{\C} = \{ (b,a) \in \C^{2N} : \textrm{Re }a_j > 0 \quad \forall \, 1 \leq j \leq N \}.
%\end{displaymath}
For $(b,a) \in \M$, consider for any complex number $\l$ the difference equation
\begin{equation} \label{diff}
a_{k-1} y(k-1) + b_k y(k) + a_k y(k+1) = \l y(k) \quad (k \in \Z)
\end{equation}
associated to $L(b,a)$.
%\begin{equation} \label{diff}
%(R_{b,a} y)(k) = \l y(k) \quad (k \in \Z)
%\end{equation}
%where $y(\cdot) = y(k)_{k \in \Z} \in \C^{\Z}$ and $R_{b,a}$ is the difference operator
%\begin{equation}
%R_{b,a} = a_{k-1} S^{-1} + b_k S^0 + a_k S^1
%\end{equation}
%with $S^m$ denoting the shift operator of order $m \in \Z$, i.e. $(S^m y)(k) = y(k+m)$ for $k \in \Z$.
The two fundamental solutions $y_1(\cdot, \l)$ and $y_2(\cdot, \l)$ of (\ref{diff}) are defined by the standard initial conditions $y_1(0, \l) = 1$, $y_1(1, \l) = 0$ and $y_2(0, \l) = 0$, $y_2(1, \l) = 1$. By solving (\ref{diff}) recursively, one sees that for any $k$, $y_i(k,\l)$ $(i=1,2)$ is a polynomial in $\l$. Denote by $\Delta(\l) \equiv \Delta(\l, b, a)$ the \emph{discriminant} of (\ref{diff}), defined by
\begin{equation} \label{discrdef}
\Delta(\l) := y_1(N, \l) + y_2(N+1, \l).
\end{equation}
By Floquet theory, for any $\l \in \R$, equation (\ref{diff}) admits a periodic respectively antiperiodic solution of period $N$ iff the discriminant $\Delta_\l \equiv \Delta(\l)$ satisfies $\Delta_\l=2$ respecitvely $\Delta_\l=-2$. %According to e.g. (\cite{moer}, \cite{toda}), $\Delta_\l$ is a polynomial in $\l$ of degree $N$ and $\Delta_\l^2-4$ 
It follows that $\Delta_\l \mp 2$ admits a product representation of the form
\begin{displaymath}
  \Delta_\l \mp 2 = \alpha^{-N} \prod_{j=1}^{N} (\l - \l_j^\pm),
\end{displaymath}
where $(\l_j^\pm)_{1 \leq j \leq N}$ are the eigenvalues of $L^\pm(b,a)$. They are real valued and we list them in increasing order and with their algebraic multiplicities. Hence
\begin{equation} \label{delta2lrepr}
  \Delta_\l^2 - 4 = \alpha^{-2N} \prod_{j=1}^{2N} (\l - \l_j),
\end{equation}
where $(\l_j)_{1 \leq j \leq 2N}$ is the combined sequence of the eigenvalues $(\l_j^+)_{1 \leq j \leq N}$ and $(\l_j^-)_{1 \leq j \leq N}$ listed in increasing order. One can show that
\begin{equation} \label{spectgeordnet}
	\l_N^+ > \l_N^- \geq \l_{N-1}^- > \l_{N-1}^+ \geq \l_{N-2}^+ > \l_{N-2}^- \geq \l_{N-3}^- > \l_{N-3}^+ \geq \ldots
\end{equation}
Again by Floquet theory, one sees that $(\l_j)_{1 \leq j \leq 2N}$ are the eigenvalues of the $2N \times 2N$ Jacobi matrix $L^+((b,b),(a,a))$. Since $\Delta_\l$ is a polynomial of degree $N$, $\dot{\Delta}_\l = \partial_\l \Delta_\l$ is a polynomial of degree $N-1$. It admits a product representation of the form
\begin{equation} \label{dotlrepr}
  \dot{\Delta}_\l = N \a^{-N} \prod_{k=1}^{N-1} (\l - \dot{\l}_k),
\end{equation}
where the zeroes $(\dot{\l}_n)_{1 \leq n \leq N-1}$ of $\dot{\Delta}_\l$ are all real valued and are listed in increasing order. They satisfy $\l_{2n} \leq \dot{\l}_n \leq \l_{2n+1}$ for any $1 \leq n \leq N-1$. The open intervals $(\l_{2n}, \l_{2n+1})$ are referred to as the \emph{$n$-th spectral gap} and $\gamma_n := \l_{2n+1} - \l_{2n}$ as the \emph{$n$-th gap length}. Note that $|\Delta_\l| > 2$ on the spectral gaps. We say that the $n$-th gap is \emph{open} if $\gamma_n > 0$ and \emph{collapsed} otherwise. The set of elements $(b,a) \in \M$ for which the $n$-th gap is collapsed is denoted by $D_n$,
\begin{equation} \label{dndef}
D_n := \{ (b,a) \in \M: \gamma_n = 0 \}.
\end{equation}
Using that $\gamma_n^2$ (unlike $\gamma_n$) is a real analytic function on $\M$, it can be shown that $D_n$ is a real analytic submanifold of $\M$ of codimension $2$ (cf. \cite{kapo} for a similar statement in the case of Hill's operator). Moreover, one can show that for any $(b,a) \in \M$ and any $1 \leq n \leq N-1$, $\g_n(b,a) = 0$ iff $I_n(b,a) = 0$ - see \cite{ahtk1} for details. Hence for any $\b \in \R$ and $\a>0$, the set $\Mba^\bullet$, introduced in Corollary \ref{mbacor}, satisfies $\Mba^\bullet = \Mba \setminus \cup_{n=1}^{N-1} D_n$.

Finally, we remark that the zeros $(\l_j)_{1 \leq j \leq 2N}$ and $(\dot{\l}_k)_{1 \leq k \leq N-1}$ of $\Delta^2_\l-4$, respectively $\dot{\Delta}_\l$, satisfy the following relation
\begin{equation} \label{lambdadotlambda}
  \sum_{k=1}^{N-1} \dot{\l}_k = \frac{N-1}{2N} \sum_{j=1}^{2N} \l_j.
\end{equation}
To prove (\ref{lambdadotlambda}), one computes the $\l$-derivative of
\begin{displaymath}
  \Delta_\l = \pm 2 + \a^{-N} \prod_{j=1}^N (\l - \l_j^\pm)
\end{displaymath}
and compares the coefficients of the expansions of $\dot{\Delta}_\l$, obtained in this way, with (\ref{dotlrepr}).

\section{Krichever's theorem} \label{krichever}

In this section, we present a version of Krichever's theorem suited to prove Theorem \ref{nekhtodatheorem}. Krichever's theorem concerns the period map of certain meromorphic differentials of a hyperelliptic Riemann surface. In \cite{krich}, Krichever stated his theorem for a parameter family of hyperelliptic curves having the property that one of the ramification points is at infinity. In the version we need we have to consider a parameter family of hyperelliptic curves with no ramification points at infinity. %concerning the period map of certain Abelian differentials in a set-up suited for applications to the KdV equation. It turns out that the scheme of proof of Krichever's theorem presented by Bikbaev and Kuksin in \cite{biku} also applies to the version of the theorem needed for our purpose.

Let $E = (E_1, \ldots, E_{2N})$ be a sequence of distinct, but otherwise arbitrary real numbers which we list in increasing order, $E_1 < E_2 < \ldots < E_{2N-1} < E_{2N}$. Introduce
	\[ R(\l) = \prod_{i=1}^{2N} (\l - E_i), \; \l \in \C
\]
and denote by $\mathcal{C}_E$ the affine curve
	\[ \mathcal{C}_E = \{ (\l,w) \in \C^2 : w^2 = \sigma R(\l) \},
\]
where $\sigma \in \R_{>0}$ is a scaling parameter. In our application to the Toda lattice it will be given by $\a^{-2N}$. Then $\mathcal{C}_E$ is a two-sheeted curve with ramification points $(E_i, 0)_{1 \leq i \leq 2N}$, identified with $E_i$ in the sequel. By $\Sigma_E$ we denote the Riemann surface obtained from $\mathcal{C}_E$
 by adding the two (unramified) points at infinity, $\infty^+$ and $\infty^-$, one on each of the two sheets. The sheet of $\Sigma_E$ which contains $\infty^-$ is also referred to as the canonical sheet and denoted by $\Sigma_E^c$. It is characterized by
	\[ w = \sqrt{R(\l - i0)} < 0 \quad \forall \; \l \in \R \textrm{ with } \l > E_{2N}.
\]

The variable $z$ around $z=0$ gives a complex chart in a neighborhood of $\infty^+$ or $\infty^-$ of $\Sigma_E$ via the substitution $\l = \frac{1}{z}$. By construction, these charts at $\infty^+$ and $\infty^-$ are defined in a unique way and are referred to as standard charts of $\infty^\pm$. %When expressed in the standard chart of $\infty^+$ or $\infty^-$, an Abelian (i.e. meromorphic) differential $\eta$ on $\Sigma_E$ has the form $\eta = \eta^\pm(z) dz$ with
%	\[ \eta^\pm(z) = \sum_{k=1}^m \eta_k^\pm z^{-k} + \eta_0^\pm(z),
%\]
%where $\eta_k^\pm$ ($1 \leq k \leq m$) are complex numbers and $\eta_0^\pm(z)$ are holomorphic functions of $z$, defined in a neighborhood of $z=0$. We refer to $\sum_{k=1}^m \eta_k^\pm z^{-k} dz$ as the principal part of $\eta$ at $\infty^\pm$.

It is convenient to introduce the projection $\pi \equiv \pi_E: \mathcal{C}_E \to \C$ onto the $\l$-plane, i.e. $\pi_E(\l,w) = \l$ and its extension to a map $\pi_E: \Sigma_E \to \C \cup \{ \infty \}$, where $\pi_E(\infty^\pm) = \infty$. Denote by $(c_k)_{\1N1}$ the cycles on the canonical sheet of $\mathcal{C}_E$ so that $\pi(c_k)$ is a counterclockwise oriented closed curve in $\C$, containing in its interior the two ramification points $E_{2k}$ and $E_{2k+1}$, whereas all other ramification points are outside of $\pi(c_k)$. The following result is straightforward to prove.

\begin{lemma} \label{abeldiffexist}
There exist Abelian differentials $\Omega_1$ and $\Omega_2$ on $\Sigma_E$, uniquely determined by the following properties:
\begin{itemize}
	\item[(i)] $\Omega_1$ and $\Omega_2$ are holomorphic on $\Sigma_E$ except at $\infty^+$ and $\infty^-$ where in the standard charts, $\Omega_i$ admit an expansion of the form
	\[ \Omega_1 = \mp \left( -\frac{1}{z} + e_1 + O(z) \right) dz \left( = \mp \left( \frac{1}{\l} - \frac{e_1}{\l^2} + O \left( \frac{1}{\l^3} \right) \right) d\l \right)
\]
and
	\[ \Omega_2 = \mp \left( -\frac{1}{z^2} + f_1 + O(z) \right) dz \left( = \mp \left( 1 + O \left( \frac{1}{\l^2} \right) \right) d\l \right).
\]
\item[(ii)] $\Omega_1$ and $\Omega_2$ satisfy the normalization condtions
\begin{equation} \label{omegainorm}
	\int_{c_k} \Omega_i = 0 \quad \forall \, \1N1, \; i=1,2.
\end{equation}
\end{itemize}
On $\mathcal{C}_E \setminus E$, $\Omega_1$ and $\Omega_2$ take the form $\Omega_i = \frac{\chi_i(\l)}{\sqrt{R(\l)}} \, d\l$ ($i=1,2$), where $\chi_i(\l)$ are polynomials in $\l$ of the form $\chi_1(\l) = \l^{N-1} + e \l^{N-2} + \ldots$ and %$\chi_2(\l)$ is a polynomial of degree $N$ of the form 
$\chi_2(\l) = \l^N + f \l^{N-1} + \ldots$, with $f = -\frac{1}{2} \sum_{n=1}^{2N} E_n$. In particular, $\Omega_1$ and $\Omega_2$ do not depend on the scaling parameter $\sigma$.
\end{lemma}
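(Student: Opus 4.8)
The plan is to realize $\Omega_1$ and $\Omega_2$ explicitly as hyperelliptic differentials $\frac{\chi_i(\l)}{\sqrt{R(\l)}}\,d\l$ and to reduce both existence and uniqueness to the invertibility of the classical period matrix of the holomorphic differentials on $\Sigma_E$. First note that $\Sigma_E$ is a compact hyperelliptic surface of genus $g=N-1$: since $\deg R=2N$ is even, neither point at infinity is a branch point, and a basis of holomorphic differentials is given by $\eta_j:=\frac{\l^{j}}{\sqrt{R(\l)}}\,d\l$ for $0\le j\le N-2$, so that $\dim H^0(\Sigma_E,\Omega^1)=N-1=g$. Any differential of the form $\frac{\chi(\l)}{\sqrt{R(\l)}}\,d\l$ with $\chi$ a polynomial is automatically holomorphic at each ramification point $E_i$ --- in the local parameter $t$ with $\l=E_i+t^2$ the zero of $\sqrt{R}$ is cancelled by $d\l=2t\,dt$ --- so its only possible poles lie over $\infty^{\pm}$, and there the pole order is governed solely by $\deg\chi$.

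Next I would read off the data at infinity. In the standard chart $\l=1/z$ one has $d\l=-z^{-2}\,dz$, and on each sheet
\[
  \sqrt{R(\l)}=\l^{N}\Bigl(1-\frac{1}{2}\bigl(\textstyle\sum_{n=1}^{2N}E_n\bigr)\l^{-1}+O(\l^{-2})\Bigr),
\]
with opposite global signs on the two sheets. Hence $\frac{\chi(\l)}{\sqrt{R(\l)}}\,d\l$ has a pole of order $\deg\chi-N+2$ at $\infty^{\pm}$. Requiring a simple pole for $\Omega_1$ forces $\deg\chi_1=N-1$, requiring a double pole for $\Omega_2$ forces $\deg\chi_2=N$, and matching the prescribed leading terms $-1/z$ and $-1/z^2$ forces both $\chi_i$ to be monic. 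Writing $\chi_2=\l^N+f\l^{N-1}+\ldots$ and dividing by the expansion above, the coefficient of $d\l/\l$ in $\Omega_2$ equals $f+\frac12\sum_{n}E_n$; demanding that $\Omega_2$ be of the second kind, i.e. that this residue term vanish so that $\Omega_2=\mp(1+O(\l^{-2}))\,d\l$, yields exactly $f=-\frac12\sum_{n=1}^{2N}E_n$. This fixes every quantity asserted in the lemma except the lower-order coefficients of $\chi_1$ and $\chi_2$.

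It remains to determine the $N-1$ free coefficients of each $\chi_i$ (those of $\l^{N-2},\ldots,\l^{0}$, the top coefficients being already fixed) from the $N-1$ normalization conditions. Writing $\Omega_i=\Omega_i^{0}+\sum_{j=0}^{N-2}\mu_j\,\eta_j$, where $\Omega_i^{0}$ is a particular differential carrying the prescribed singular part, the conditions $\int_{c_k}\Omega_i=0$ become the linear system $\sum_{j=0}^{N-2}\mu_j\int_{c_k}\eta_j=-\int_{c_k}\Omega_i^{0}$ for $\1N1$. The coefficient matrix $\bigl(\int_{c_k}\eta_j\bigr)_{k,j}$ is the matrix of $a$-periods of the holomorphic differentials with respect to the cycles $c_k$, which form the $a$-part of a canonical homology basis of $\Sigma_E$. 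The one genuinely non-elementary input is that this matrix is nonsingular: this is the classical fact that the $a$-period matrix of a basis of holomorphic differentials is invertible, for otherwise some nonzero holomorphic differential $\omega$ would have all $a$-periods zero, contradicting the Riemann bilinear relation $\frac{i}{2}\int_{\Sigma_E}\omega\wedge\bar\omega>0$. Hence each system has a unique solution, proving existence and uniqueness of $\Omega_1$ and $\Omega_2$. Finally, $\sigma$-independence is immediate, since the data determining $\chi_i$ --- its degree, its fixed leading coefficients, and the integrals $\int_{c_k}\frac{\l^{j}}{\sqrt{R(\l)}}\,d\l$ --- refer to $R$ and the $\l$-projections $\pi(c_k)$ only, never to the scaling parameter $\sigma$.
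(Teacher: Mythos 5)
Your proof is correct. The paper gives no proof of this lemma (it is declared ``straightforward to prove''), and your argument --- writing $\Omega_i=\chi_i(\l)/\sqrt{R(\l)}\,d\l$, reading off $\deg\chi_1=N-1$, $\deg\chi_2=N$ and the leading coefficients from the prescribed principal parts at $\infty^\pm$ (with $f=-\frac{1}{2}\sum_{n=1}^{2N}E_n$ forced by the vanishing of the residue there), and then pinning down the remaining $N-1$ coefficients of each $\chi_i$ via the $N-1$ normalization conditions together with the nonsingularity of the $a$-period matrix of the holomorphic differentials $\l^j\,d\l/\sqrt{R(\l)}$, $0\le j\le N-2$ --- is precisely the standard argument the authors intend, with the one genuinely non-elementary ingredient (invertibility of the period matrix via the Riemann bilinear relations) correctly identified and justified.
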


Denote by $(d_k)_{\1N1}$ pairwise disjoint cycles on $\mathcal{C}_E \setminus E$ so that for any $1 \leq n,k \leq N-1$, the intersection indices with the cycles $(c_n)_{1 \leq n \leq N-1}$ with respect to the orientation on $\Sigma_E$, induced by the complex structure, are given by $c_n \circ d_k = \delta_{nk}$. In order to be more precise, choose
the cycles $d_k$ in such a way that (i) the projection $\pi_E(d_k)$ of $d_k$ is a smooth, convex counterclockwise oriented curve in $\C \setminus ((E_1, E_{2k}) \cup (E_{2k+1}, \infty))$ and (ii) the points of $d_k$ whose projection by $\pi_E$ onto the $\l$-plane have a negative imaginary part lie on the canonical sheet of $\Sigma_E$. For any $\1N1$, introduce the $d_k$-periods of $\Omega_1$ and $\Omega_2$,
\begin{equation} \label{ukvkdef}
	U_k := \int_{d_k} \Omega_1; \qquad V_k := \int_{d_k} \Omega_2
\end{equation}
and for any $p \in \mathcal{C}_E$ define the Abel integrals ($i=1,2$)
	\[ J_i(p) = \frac{1}{2} \int_{\gamma_p} \Omega_i,
\]
where $\gamma_p$ is any path in $\mathcal{C}_E$ from $p_*$ to $p$. The map $\iota: \mathcal{C}_E \to \mathcal{C}_E$, $p \mapsto p_*$ interchanges the two sheets of $\mathcal{C}_E$,
	\[ p_* = (\l,-w) \quad \forall \, p=(\l,w) \in \mathcal{C}_E.
\]
Note that for any $i=1,2$, the function $p \mapsto J_i(p)$ is multi-valued. Actually, $J_i(p)$ is well defined up to half periods of $\Omega_i$. Hence locally it is a well defined smooth function. In particular, its differential $d J_i$ is well defined. Note that for $i=1,2$ and $1 \leq n \leq 2N$, zero is one of the possible values of $J_i(E_n)$. For any $p \in \mathcal{C}_E$, denote by $\gamma_p^0$ a path on $\mathcal{C}_E$ from $E_{2N} \equiv (E_{2N},0)$ to $p$ and define $\gamma_p$ to be the path from $p_*$ to $p$ obtained by concatenating $-\iota \left( \gamma_p^0 \right)$ and $\gamma_p^0$. Here $-\iota \left( \gamma_p^0 \right)$ denotes the path from $p_*$ to $E_{2N}$ obtained by reversing the orientation of $\iota \left( \gamma_p^0 \right)$ and $\iota(\gamma_p^0)$ is the path obtained by applying to $\gamma_p^0$ the map $\iota$. In Lemma \ref{omegaiji} we state the properties of $\Omega_i$ and $J_i$ needed in the sequel. 

\begin{lemma} \label{omegaiji}
\begin{itemize}
	\item[(i)] The differential forms $\Omega_1$ and $\Omega_2$ are odd with respect to the map $\iota$, i.e. the pullback $\iota^* \Omega_i$ of $\Omega_i$ satisfies $\iota^* \Omega_i = -\Omega_i$.
	\item[(ii)] For $i=1,2$, 
	\[ \frac{1}{2} \int_{-\iota \left( \gamma_P^0 \right) \circ \gamma_P^0} \Omega_i = \int_{\gamma_P^0} \Omega_i.
\]
	%the multi-valued functions $J_i$, defined on $\mathcal{C}_E$, are well defined up to half periods of $\Omega_i$. In particular, the differential $d J_i$ is well defined and equals $\Omega_i$.
	\item[(iii)] When expressed in the local coordinate $\l$, on each of the two sheets, $\int_{E_{2N}}^\l \Omega_i$ admits an asymptotic expansion as $\l \to \infty$ ($\l$ real) of the form
\begin{equation} \label{intomega1}
	\int_{E_{2N}}^\l \Omega_1 = \mp \left( \log \l + e_0 + e_1 \frac{1}{\l} + \ldots \right)
\end{equation}
and
\begin{equation} \label{intomega2}
	\int_{E_{2N}}^\l \Omega_2 = \mp \left( \l + f_0 + \ldots \right),
\end{equation}
where $e_0$ and $e_1$ are real valued.
\end{itemize}
\end{lemma}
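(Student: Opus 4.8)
The plan is to prove the three assertions in order, deducing (ii) formally from (i), and reducing (iii) to integrating the local expansions of Lemma \ref{abeldiffexist} term by term.

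For (i), I would work on the dense open subset $\mathcal{C}_E \setminus E$, where by Lemma \ref{abeldiffexist} the forms are given explicitly by $\Omega_i = \frac{\chi_i(\l)}{\sqrt{R(\l)}}\,d\l$. The involution $\iota(\l,w)=(\l,-w)$ fixes the base coordinate $\l$, so $\iota^*(d\l)=d\l$ and $\iota^*\chi_i(\l)=\chi_i(\l)$, whereas $w=\sqrt{\sigma R(\l)}$ is sent to $-w$, i.e. $\iota^*\sqrt{R(\l)}=-\sqrt{R(\l)}$. Hence $\iota^*\Omega_i=-\Omega_i$ on $\mathcal{C}_E \setminus E$, and since both sides are meromorphic differentials on $\Sigma_E$ that agree on a dense set, the identity extends to all of $\Sigma_E$; alternatively the oddness at $\infty^\pm$ can be read off directly from the standard-chart expansions, which $\iota$ interchanges.

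For (ii), I would use only the additivity of path integration under concatenation together with (i). Writing $\gamma_p=-\iota(\gamma_p^0)\circ\gamma_p^0$, we have $\int_{\gamma_p}\Omega_i=\int_{-\iota(\gamma_p^0)}\Omega_i+\int_{\gamma_p^0}\Omega_i$. Reversing orientation flips the sign, and the change-of-variables formula gives $\int_{\iota(\gamma_p^0)}\Omega_i=\int_{\gamma_p^0}\iota^*\Omega_i=-\int_{\gamma_p^0}\Omega_i$ by (i); the two signs combine to yield $\int_{-\iota(\gamma_p^0)}\Omega_i=\int_{\gamma_p^0}\Omega_i$. Therefore $\frac{1}{2}\int_{\gamma_p}\Omega_i=\int_{\gamma_p^0}\Omega_i$, which is exactly the claimed identity.

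For (iii), I would fix one of the two sheets and integrate the asymptotic expansions of $\Omega_i$ from Lemma \ref{abeldiffexist} along the real axis from $E_{2N}$ outward, splitting the contour at a fixed large $\l_0$ so that term-by-term integration of the expansion past $\l_0$ is legitimate. In the coordinate $\l$, $\Omega_1=\mp\bigl(\frac{1}{\l}-\frac{e_1}{\l^2}+O(\l^{-3})\bigr)\,d\l$, whose $\frac{1}{\l}$ term (the residue of $\Omega_1$ at $\infty^\pm$) produces the logarithm, so integration gives $\int_{E_{2N}}^\l\Omega_1=\mp\bigl(\log\l+e_0+\frac{e_1}{\l}+\ldots\bigr)$, the fixed contribution from $[E_{2N},\l_0]$ together with the constant of integration being absorbed into $e_0$. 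Similarly $\Omega_2=\mp\bigl(1+O(\l^{-2})\bigr)\,d\l$ has vanishing residue at $\infty^\pm$ (only a double pole in the $z$-chart), so no logarithm appears and integration yields $\int_{E_{2N}}^\l\Omega_2=\mp(\l+f_0+\ldots)$. Reality of $e_0,e_1$ (and $f_0$) follows from the real structure: the $E_i$ are real, hence $R$ and the polynomials $\chi_i$ have real coefficients, and on the canonical sheet $w=\sqrt{R(\l-i0)}<0$ is real for real $\l>E_{2N}$, so the integrand $\Omega_i$ is a real $1$-form along the contour and every coefficient in the expansion is real.

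The main obstacle is the justification in (iii): one must verify that the primitive of $\Omega_i$ inherits an asymptotic expansion obtained by integrating the expansion of the form, cleanly isolating the logarithmic contribution of the simple-pole (residue) term of $\Omega_1$ from its single-valued part, and tracking the sheet-dependent sign $\mp$ consistently through both the integration and the reality argument. Parts (i) and (ii) are then essentially formal.
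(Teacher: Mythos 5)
Your proposal is correct and follows essentially the same route as the paper: (ii) is the same formal computation from (i), and (iii) is obtained by integrating the expansions of Lemma \ref{abeldiffexist} with reality deduced from the realness of the $E_i$ and the positivity of $R(\l)$ for $\l>E_{2N}$. The only variation is in (i), where you compute $\iota^*\Omega_i=-\Omega_i$ directly from the explicit representation $\Omega_i=\chi_i(\l)\,d\l/\sqrt{R(\l)}$ and the sign flip of $w$ under $\iota$, whereas the paper observes that $-\iota^*\Omega_i$ satisfies the defining normalization and asymptotics and invokes the uniqueness statement of Lemma \ref{abeldiffexist}; both arguments are valid and equally short.
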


\begin{proof}
(i) Let $1 \leq i \leq 2$. The claimed identity $\iota^* \Omega_i = -\Omega_i$ follows from the uniqueness of the differential $\Omega_i$ stated in Lemma \ref{abeldiffexist}, as $-\iota^* \Omega_i$ is a meromorphic differential which is holomorphic on $\mathcal{C}_E$ and satisfies the same asymptotics at $\infty^\pm$ and the same normalization condition (\ref{omegainorm}) as the differential $\Omega_i$. (ii) In view of statement (i) we conclude that for any $p \in \mathcal{C}_E$,
	\[ \frac{1}{2} \int_{-\iota \left( \gamma_P^0 \right) \circ \gamma_P^0} \Omega_i = \frac{1}{2} \left( -\int_{\iota \left( \gamma_P^0 \right)} \Omega_i + \int_{\gamma_P^0} \Omega_i \right) = \int_{\gamma_P^0} \Omega_i.
\]

(iii) The stated asymptotics follow from the asymptotics of $\Omega_i$ of Lemma \ref{abeldiffexist}. The claim of $e_0$ and $e_1$ being real follows from the assumption that $E_1, \ldots, E_{2N}$ are real and that for $\l$ real with $\l > E_{2N}$, one has $R(\l) > 0$.
\end{proof}

To state the main result of this section, introduce the extended period map, defined on the space of sequences $E = (E_1 < \ldots < E_{2N})$ as follows
\begin{equation} \label{mathcalfdef}
	\mathcal{F}: E \mapsto ((U_i, V_i)_{1 \leq i \leq N-1}, e_1, e_0),
\end{equation}
where $e_1$ and $e_0$ are the coefficients in the asymptotic expansion (\ref{intomega1}). It is straightforward to see that $\mathcal{F}$ is a smooth map with values in $\R^{2N}$. The version of Krichever's theorem needed for our purposes is the following one.

\begin{theorem} \label{krichevertheorem}
At each point $E = (E_1 < \ldots < E_{2N})$, the map $\mathcal{F}$ is a local diffeomorphism, i.e. the differential $d_E \mathcal{F}: \R^{2N} \to \R^{2N}$ of $\mathcal{F}$ at $E$ is a linear isomorphism.
\end{theorem}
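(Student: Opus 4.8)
The plan is to prove that $d_E\mathcal{F}\colon\R^{2N}\to\R^{2N}$ is injective; as domain and target have equal dimension, injectivity gives the isomorphism. So I would take $\dot E=(\dot E_1,\dots,\dot E_{2N})$ in the kernel and set $\dot\Omega_i:=\sum_{j=1}^{2N}\dot E_j\,\partial_{E_j}\Omega_i$ $(i=1,2)$, the derivatives taken in the explicit form $\Omega_i=\chi_i(\l)/\sqrt{R(\l)}\,d\l$ of Lemma \ref{abeldiffexist}. A direct computation gives $\dot\Omega_i=\frac{1}{2R^{3/2}}\bigl(2\dot\chi_i R+\chi_i P\bigr)\,d\l$ with $P(\l):=\sum_j\dot E_j\prod_{l\ne j}(\l-E_l)$. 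Reading off this expression, $\dot\Omega_i$ is a meromorphic differential on $\Sigma_E$ that is odd under $\iota$ (inherited from Lemma \ref{omegaiji}(i), since $\iota$ does not depend on $E$), holomorphic at $\infty^\pm$ (the singular parts of $\Omega_i$ at $\infty^\pm$ are prescribed independently of $E$), and which at each branch point $E_j$ has a second–kind pole (a double pole with no residue in the local coordinate $\sqrt{\l-E_j}$) and no other singularity. Differentiating the normalization $\int_{c_k}\Omega_i=0$ shows the $a$-periods of $\dot\Omega_i$ vanish, and the kernel hypothesis says precisely that the $b$-periods $\dot U_k=\int_{d_k}\dot\Omega_1$, $\dot V_k=\int_{d_k}\dot\Omega_2$, as well as $\dot e_1,\dot e_0$, all vanish. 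Having no residues and all periods zero, each $\dot\Omega_i$ is \emph{exact}: $\dot\Omega_i=dg_i$ for a single-valued meromorphic $g_i$ that is odd, holomorphic at $\infty^\pm$, and has at most a simple pole at each $E_j$. Every such $g_i$ has the form $g_i=\tilde s_i(\l)/\sqrt{R(\l)}$ with $\tilde s_i$ a polynomial of degree $\le N$.

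Next I would extract the decisive algebraic constraints. Writing $\dot\Omega_i=dg_i$ as the polynomial identity $2\dot\chi_i R+\chi_i P=2\tilde s_i'R-\tilde s_i R'$ and evaluating at a branch point $E_j$ (where $R(E_j)=0$ and $P(E_j)=\dot E_j R'(E_j)$) collapses everything to the clean system $\tilde s_i(E_j)=-\dot E_j\,\chi_i(E_j)$ for all $j$ and $i=1,2$. Cross-multiplying, $\chi_2\tilde s_1-\chi_1\tilde s_2$ vanishes at all $2N$ points $E_j$ and has degree $\le 2N$, so $\chi_2\tilde s_1-\chi_1\tilde s_2=\tilde s_1^{(N)}R$, where $\tilde s_1^{(N)}$ is the leading coefficient of $\tilde s_1$.

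I would then spend the remaining two kernel conditions to kill the top of $\tilde s_1$. Expanding $g_1=\tilde s_1/\sqrt R$ at $\infty^+$ gives $g_1=\tilde s_1^{(N)}+\bigl(\tilde s_1^{(N-1)}+\frac12\tilde s_1^{(N)}\sum_j E_j\bigr)\l^{-1}+O(\l^{-2})$, while comparing $\dot\Omega_1=dg_1$ with the expansion (\ref{intomega1}) of $\int_{E_{2N}}^\l\Omega_1$ identifies these two coefficients with $\dot e_0$ and $\dot e_1$. Hence $\dot e_0=\dot e_1=0$ forces $\tilde s_1^{(N)}=\tilde s_1^{(N-1)}=0$, so $\deg\tilde s_1\le N-2$ and $\chi_2\tilde s_1=\chi_1\tilde s_2$ identically. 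Since $\chi_1$ and $\chi_2$ have no common zero — each has exactly one simple zero in every open gap $(E_{2k},E_{2k+1})$, and these two families of in-gap zeros are distinct, a fact forced by the reality of $E$ together with the normalization (\ref{omegainorm}) — and since $\deg\tilde s_1\le N-2<N-1=\deg\chi_1$, divisibility yields $\tilde s_1\equiv 0$, whence $\tilde s_2\equiv 0$. Finally $\tilde s_i(E_j)=-\dot E_j\chi_i(E_j)=0$ for all $j$, and as $\chi_1,\chi_2$ never vanish simultaneously at an $E_j$, we conclude $\dot E_j=0$ for every $j$. Thus $\ker d_E\mathcal{F}=0$.

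The step I expect to be the main obstacle is the rigorous identification of $\dot e_0$ (and, to a lesser degree, $\dot e_1$) with the leading coefficients of $g_1$ at infinity. Because $\dot\Omega_1$ acquires a genuine double pole at the \emph{moving} base point $E_{2N}$, the naive formula $\frac{d}{dt}\int_{E_{2N}}^\l\Omega_1=\int_{E_{2N}}^\l\dot\Omega_1$ produces a divergent integral that must be regularized against the endpoint variation $\dot E_{2N}$; one has to track this cancellation carefully to see that the finite remainder is exactly $\mp\tilde s_1^{(N)}$ (respectively the $\l^{-1}$-coefficient). A secondary technical point, needed to close the algebraic endgame, is the coprimality of $\chi_1$ and $\chi_2$, which is where the hypothesis that the $E_j$ are real and ordered enters essentially; I would establish it from the sign analysis of the real normalized differentials on the gaps, in the spirit of the argument of Bikbaev and Kuksin \cite{biku}.
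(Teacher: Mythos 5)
Your argument is correct in outline and is, at bottom, the same proof as the paper's, executed with explicit polynomial algebra in place of function theory on the compact surface $\Sigma_E$. Your primitive $g_i=\tilde s_i(\lambda)/\sqrt{R(\lambda)}$ is exactly the paper's variation $\delta J_i$ of the Abel integrals: its simple poles at the $E_j$ with coefficients $-\dot E_j\chi_i(E_j)$ are the residues $r_i(n)\,\delta E_n$ of Lemma \ref{propdtomega1}, and your identity $\chi_2\tilde s_1=\chi_1\tilde s_2$ is the paper's relation $\delta J_1\cdot\Omega_2=\delta J_2\cdot\Omega_1$, which it obtains by introducing the auxiliary function $\delta K=\delta J_2-(\Omega_2/\Omega_1)\,\delta J_1$ and showing $\delta K\equiv 0$ by counting zeroes and poles. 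Where you then finish by divisibility ($\gcd(\chi_1,\chi_2)=1$ together with $\deg\tilde s_1\le N-2<\deg\chi_1$), the paper runs a second zero/pole count on $\delta J_2$ (Proposition \ref{dj1equiv0}); both closings rest on the same input, Lemma \ref{chi1chi2}, and your version buys explicitness and avoids the case distinction there. The cost is that the two points you flag must actually be supplied. For the identification of $\dot e_0,\dot e_1$ you need not regularize the divergent integral: $g_1$ and $\delta J_1$ are both single-valued primitives of $\dot\Omega_1$ that are odd under the sheet involution $\iota$ (oddness of $\delta J_1$ follows from $\iota^*\Omega_1=-\Omega_1$ and the first-order constancy of the half-periods), hence they coincide, and comparing the expansion of $\tilde s_1/\sqrt R$ at $\infty^\pm$ with the $\tau$-derivative of (\ref{intomega1}) gives $\tilde s_1^{(N)}=\pm\dot e_0$ and the analogous statement for $\dot e_1$; this is precisely the content of the local computations in Lemma \ref{propdtomega1}. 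The coprimality of $\chi_1$ and $\chi_2$ is Lemma \ref{chi1chi2}(iii), and your sketch of it is slightly too optimistic: $\chi_2$ has degree $N$ but there are only $N-1$ gaps, so one root of $\chi_2$ is not localized by the normalization (\ref{omegainorm}) and could a priori collide with an in-gap root of $\chi_1$. The paper excludes a common root $\tau$ by forming $\chi_2+\xi\chi_1$ with $\xi=-\chi_2'(\tau)/\chi_1'(\tau)$, which would then have a double root in a gap and could not change sign there, contradicting $\int_{c_k}(\chi_2+\xi\chi_1)/\sqrt{R}\,d\lambda=0$. With these two lemmas supplied, your proof is complete.
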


The proof of Theorem \ref{krichevertheorem} follows the scheme used in \cite{biku} to prove Krichever's theorem. First we need to derive some auxiliary results. For any $1 \leq i \leq 2$, denote by $N_{\Omega_i}$ the set of zeroes of $\Omega_i$ and by $N_{\chi_i}$ the set of zeroes of the polynomials $\chi_i$, where in both cases the zeroes are listed with their multiplicities. Note that $| N_{\chi_1} | = N-1$ and $| N_{\chi_2} | = N$, whereas for $i=1,2$
	\[ | N_{\Omega_i} | \leq 2 | N_{\chi_i} |.
\]
The following result is due to \cite{biku}.

\begin{lemma} \label{chi1chi2}
The zero sets $N_{\chi_i}$ and $N_{\Omega_i}$ have the following properties:
\begin{itemize}
	\item[(i)] All zeroes of $N_{\chi_1}$ are simple and real, and $N_{\chi_1} \cap \{ E_1, \ldots, E_{2N} \} = \emptyset$. Moreover, $N_{\Omega_1} = \pi_E^{-1}(N_{\chi_1})$ and $|N_{\Omega_1}| = 2N-2$.
	\item[(ii)] All zeroes of $N_{\chi_2}$ are simple except possibly one which then has multiplicity two. Furthermore,
\begin{displaymath} %\label{nchi2nomega2}
	|N_{\chi_2} \setminus \{ E_1, \ldots, E_{2N} \}| \geq N-1 \quad \textrm{and} \quad |N_{\Omega_2} \setminus \{ E_1, \ldots, E_{2N} \}| \geq 2N-2.
\end{displaymath}
	 %$|N_{\chi_2} \setminus \{ E_1, \ldots, E_{2N} \}| \geq N-1$ and $|N_{\Omega_2} \setminus \{ E_1, \ldots, E_{2N} \}| \geq 2N-2$.
	\item[(iii)] $N_{\chi_1} \cap N_{\chi_2} = \emptyset$, and hence $N_{\Omega_1} \cap N_{\Omega_2} = \emptyset$ as well.
\end{itemize}
\end{lemma}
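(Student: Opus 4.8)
The plan is to distill from the two normalization conditions (\ref{omegainorm}) a single sign/zero-counting mechanism on the real axis and then to run pure degree counts. The engine is the following reduction. Since $E_1<\ldots<E_{2N}$ are real and $R(\lambda)=\prod_i(\lambda-E_i)>0$ on every open gap $(E_{2k},E_{2k+1})$, I would first show that for any real polynomial $\chi$ the period $\int_{c_k}\frac{\chi}{\sqrt R}\,d\lambda$ equals a \emph{nonzero real} multiple of the real integral $\int_{E_{2k}}^{E_{2k+1}}\frac{\chi(\lambda)}{\sqrt{R(\lambda)}}\,d\lambda$, where $\sqrt R>0$ denotes the positive root on the gap. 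This reduction — collapsing $c_k$ onto the slit $[E_{2k},E_{2k+1}]$ and tracking the two boundary values of the canonical root — is where the reality of the $E_i$ is essential, and I expect it to be the main technical point. Granting it, the vanishing of (\ref{omegainorm}) for $i=1,2$ forces in each of the $N-1$ gaps a sign change of $\chi_i$, since the weight $1/\sqrt R$ is strictly positive there, and hence at least one zero of $\chi_i$ in every open gap.

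For (i), $\chi_1$ is monic of degree $N-1$ by Lemma~\ref{abeldiffexist}, so the $N-1$ zeros just produced, one in each of the $N-1$ disjoint open gaps, exhaust all of its zeros: they are simple, real, and lie strictly inside the gaps, whence $N_{\chi_1}\cap\{E_1,\ldots,E_{2N}\}=\emptyset$. To pass to $\Omega_1=\frac{\chi_1}{\sqrt R}\,d\lambda$, I would note that $\frac{d\lambda}{\sqrt R}$ is holomorphic and nowhere zero at the ramification points (in a local coordinate $t$ with $\lambda-E_i=t^2$ both $d\lambda$ and $\sqrt R$ vanish to first order and the zeros cancel) and that $\Omega_1$ has poles, not zeros, at $\infty^\pm$. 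Hence on $\Sigma_E$ the zeros of $\Omega_1$ are exactly the two preimages under $\pi_E$ of each zero of $\chi_1$, giving $N_{\Omega_1}=\pi_E^{-1}(N_{\chi_1})$ and $|N_{\Omega_1}|=2(N-1)$.

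For (ii), $\chi_2$ is monic of degree $N$, so the same argument deposits at least one zero in each of the $N-1$ open gaps; these are distinct and avoid the $E_i$, which already yields $|N_{\chi_2}\setminus\{E_1,\ldots,E_{2N}\}|\ge N-1$ and, lifting to the two sheets as above, $|N_{\Omega_2}\setminus\{E_1,\ldots,E_{2N}\}|\ge 2N-2$. Only one zero of $\chi_2$ remains unlocated; since $\chi_2$ is real, a non-real zero would force its complex conjugate to be a further zero, exceeding the degree, so the remaining zero is real. A degree count then shows that at most one zero of $\chi_2$ can fail to be simple, and if one does, it is a double zero, which is the assertion.

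Finally, for (iii) I would argue by contradiction. A common zero of $\chi_1$ and $\chi_2$ is, by (i), a simple zero $\nu$ of $\chi_1$ lying in some gap. Put $c:=\chi_2'(\nu)/\chi_1'(\nu)$, so that $\psi:=\chi_2-c\chi_1$ is monic of degree $N$ and has a zero of multiplicity at least $2$ at $\nu$, while $\frac{\psi}{\sqrt R}\,d\lambda=\Omega_2-c\Omega_1$ still satisfies the normalization (\ref{omegainorm}). Running the engine on $\psi$ forces a sign change of $\psi$ in every gap; but an even-order zero produces no sign change, so the gap containing $\nu$ must carry an additional odd-order zero, i.e.\ total multiplicity at least $3$ there, whereas each of the remaining $N-2$ gaps carries multiplicity at least $1$. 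This gives at least $N+1$ zeros of the degree-$N$ polynomial $\psi$, a contradiction. Hence $N_{\chi_1}\cap N_{\chi_2}=\emptyset$; and since $\Omega_1$ has no zeros at the ramification points and $\pi_E$ sends common zeros of $\Omega_1,\Omega_2$ at ordinary points to common zeros of $\chi_1,\chi_2$, also $N_{\Omega_1}\cap N_{\Omega_2}=\emptyset$. The clever point here is the choice of $c$, which turns a hypothetical coincidence into a forbidden even-order gap zero; the heavy lifting remains the contour-to-real-integral reduction underlying the engine, following the scheme of \cite{biku}.
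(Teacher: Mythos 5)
Your proposal is correct and follows essentially the same route as the paper: the sign-change/zero-count mechanism extracted from the normalization conditions (which the paper uses implicitly, and records explicitly in (\ref{psiproperty})), the degree counts for $\chi_1$ and $\chi_2$, and for (iii) the same auxiliary combination $\chi_2 - \frac{\chi_2'(\nu)}{\chi_1'(\nu)}\chi_1$ with a double root at the hypothetical common zero, contradicting the normalization via a parity-of-zeros count. The only differences are presentational (you phrase the contradiction in (iii) as an excess of zeros rather than a missing sign change, and you spell out the lifting to $N_{\Omega_i}$ in the local coordinate at the ramification points, which the paper leaves to the reader).
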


\begin{proof}[Proof of Lemma \ref{chi1chi2}]
The statements about the zero sets $N_{\Omega_i}$ of $\Omega_i$ are easily obtained from the ones about $N_{\chi_i}$ in view of the representation $\Omega_i = \chi_i(\l) / \sqrt{R(\l)} \,\, d\l$ and the property that $\Omega_i$ has a pole at $\infty^+$ and $\infty^-$. Hence we prove only the claimed statements for $N_{\chi_i}$.

By the normalization condition (\ref{omegainorm}), for any $\1N1$, $\chi_1(\l)$ has at least one real zero $\tau_{1,k}$ satisfying $E_{2k} < \tau_{1,k} < E_{2k+1}$. As $\chi_1(\l)$ is a polynomial of degree $N-1$, it follows that all zeroes $\tau_{1,k}$ are simple and that
	\[ N_{\chi_1} = \{ \tau_{1,k} | \1N1 \}.
\]
In particular $N_{\chi_1} \cap \{ E_1, \ldots, E_{2N} \} = \emptyset$. Similarly, (\ref{omegainorm}) implies that for any $\1N1$, $\chi_2(\l)$ has at least one real zero $\tau_{2,k}$ satisfying $E_{2k} < \tau_{2,k} < E_{2k+1}$. As $\chi_2(\l)$ is a polynomial of degree $N$,
	\[ N_{\chi_2} \setminus \{ \tau_{2,k} | \1N1 \}
\]
consists of one point $\tau_0 \in \C$. It is not excluded that $\tau_0$ coincides with one of the zeroes $(\tau_{2,k})_{\1N1}$. 
%	\[ N_{\chi_2} = \{ \tau_{2,k} | \1N1 \} \cup \{ \tau_0 \}.
%\]
%It might be possible that $\tau_0$ coincides with $(\tau_{2,k})_{\1N1}$. 
In any case, $|N_{\chi_2} \cap \{ E_1, \ldots, E_{2N} \}| \leq 1$. % or $|N_{\chi_1} \setminus \{ E_1, \ldots, E_{2N} \}| \geq N-1$. 
It remains to prove (iii). Assume that $\tau$ is a common zero of $\chi_1(\l)$ and $\chi_2(\l)$, i.e. $\tau \in N_{\chi_1} \cap N_{\chi_2}$. Then there exists $\1N1$ with $E_{2k} < \tau < E_{2k+1}$. As all roots of $\chi_1(\l)$ are \emph{simple}, one has $\chi_1'(\tau) \neq 0$ ($' = \frac{d}{d \l}$). Hence we can choose the real parameter $\xi$ in such a way that the polynomial $\chi_2 + \xi \chi_1$ has a double root at $\tau$. Indeed, for $\xi = -\chi_2'(\tau) / \chi_1'(\tau)$ one has $\chi_2(\tau) + \xi \chi_1(\tau) = 0$ and $\chi_2'(\tau) + \xi \chi_1'(\tau) = 0$. %Note that $\chi_2 + \xi \chi_1$ is a polynomial of degree $N$. 
As
	\[ \int_{c_j} (\chi_2(\l) + \xi \chi_1(\l)) / \sqrt{R(\l)} \, d\l = 0 \quad \forall \; 1 \leq j \leq N-1,
\]
the $N$ roots of $\chi_2 + \xi \chi_1$ are given by $\t$ and $(\t_{\xi,j})_{j \neq k}$, where $\t$ is a double root and for any $j \neq k$, $E_{2j} < \t_{\xi,j} < E_{2j+1}$ is simple. Therefore, $\chi_2(\l) + \xi \chi_1(\l)$ does not change sign in the interval $[E_{2k}, E_{2k+1}]$, contradicting the normalization condition $\int_{c_k} (\chi_2(\l) + \xi \chi_1(\l)) \, d\l = 0$. Hence $\chi_1$ and $\chi_2$ have no zero in common, as claimed.
\end{proof}

%Here we prove Theorem \ref{integnondeg}.

\begin{proof}[Proof of Theorem \ref{krichevertheorem}]
Assume that Theorem \ref{krichevertheorem} does not hold. Then there exists a smooth $1$-parameter family $E(\t)$, $-1 < \t < 1$, so that for some $1 \leq n \leq 2N$, $\delta E_n \equiv \partial_\t |_{\t=0} E_n(\t) \neq 0$, but
\begin{eqnarray*}% \label{uvtau0}
  U(\tau) & = & U(0) + O(\tau^2), \quad V(\tau) = V(0) + O(\tau^2),\\
  e_0(\t) & = & e_0(0) + O(\t^2), \quad e_1(\t) = e_1(0) + O(\t^2).
\end{eqnarray*}
We will now prove that $\delta E_k = 0$ for any $1 \leq k \leq 2N$, leading to a contradiction. As above, we introduce for $p \in \mathcal{C}_{E(\t)}$ the multi-valued functions $J_i(p,\t)$, defined up to half periods of $\Omega_i(\t)$,
\begin{displaymath} \label{abelintdef}
  J_i(p, \t) = \frac{1}{2} \int_{\gamma_p} \Omega_i(\t)
\end{displaymath}
where $\Omega_i(\t)$ denote the Abelian differentials of Lemma \ref{abeldiffexist}, corresponding to the Riemann surface $\Sigma_{E(\t)} = \mathcal{C}_{E(\t)} \cup \{ \infty^+, \infty^- \}$. By Lemma \ref{omegaiji} (ii), $J_i(p,\t) = \int_{\g_p^0} \Omega_i(\t)$. In particular, the differential $dJ_i(p,\t)$ is well defined and equals the restriction of $\Omega_i(\t)$ to $\mathcal{C}_{E(\t)}$. Near any point $p=(\l,w) \in \mathcal{C}_E \setminus E$, $\l$ is a local coordinate. This remains true for $\t$ sufficiently close to $0$, and hence for any $p \in \mathcal{C}_E \setminus E$ we can define $(i=1,2)$
\begin{equation} \label{deltajdef}
	\delta J_i(p) := \partial_\t|_{\t=0} J_i(p,\t).
\end{equation}
By Lemma \ref{propdtomega1} below, $\delta J_1$ is single-valued, extends to a meromorphic function on $\Sigma_E$ and is holomorphic on $\Sigma_E \setminus E$. At a ramification point $E_k$, the function $\delta J_1$ might have a simple pole with residue of the form $r_1(k) \d E_k$, where $r_1(k) \neq 0$. But by Proposition \ref{dj1equiv0} below, $\delta J_1 \equiv 0$ and hence, in particular, $\d E_k = 0$ for any $1 \leq k \leq 2N$. This contradicts the assumption made above that $\d E_n \neq 0$.
\end{proof}

It remains to prove Lemma \ref{propdtomega1} and Proposition \ref{dj1equiv0} mentioned in the proof of Theorem \ref{krichevertheorem}. Throughout the rest of this section we assume that the $1$-parameter family $E(\t)$ satisfies the assumption made in the proof of Theorem \ref{krichevertheorem}.

\begin{lemma} \label{propdtomega1}
The functions $\d J_1$ and $\d J_2$ defined by (\ref{deltajdef}) are single-valued and extend to meromorphic functions on $\Sigma_E$. They are holomorphic on $\Sigma_E \setminus E$. At the ramification points $(E_n)_{1 \leq n \leq 2N}$, they might have poles of order $1$ with residue of the form ($i=1,2; 1 \leq n \leq 2N$)
%\begin{equation} \label{domegairesformula}
%  \textrm{Res}_{P = \l_m} \partial_\t \Omega_i(P) = x_{-1}^i(m) \partial_\t \l_m, \quad i=1,2, \quad m=1, \ldots, 2N,
%\end{equation}
	\[ \textrm{Res}_{p=E_n} \d J_i = r_i(n) \d E_n
\]
where for $i=1$, $r_1(n) \neq 0$ for any $1 \leq n \leq 2N$. Moreover $\d J_1$ has a zero of order $2$ at $\infty^\pm$.
\end{lemma}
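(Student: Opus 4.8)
The plan is to compute $\delta J_i$ by differentiating the Abel integrals under the integral sign and to read off its analytic structure on the fixed surface $\Sigma_E \equiv \Sigma_{E(0)}$. On the canonical sheet, using Lemma \ref{omegaiji}(ii) and the representation $\Omega_i = \chi_i(\l)/\sqrt{R(\l)}\,d\l$ of Lemma \ref{abeldiffexist}, I would write for a point $p$ lying over a fixed $\l \in \C \setminus \{E_1,\ldots,E_{2N}\}$
\[ J_i(p,\t) = \int_{E_{2N}(\t)}^{\l} \frac{\chi_i(\mu,\t)}{\sqrt{R_\t(\mu)}}\,d\mu, \qquad R_\t(\mu) = \prod_{m=1}^{2N}(\mu - E_m(\t)), \]
and differentiate at $\t=0$ keeping $\l$ fixed. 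Since $\partial_\t R_\t(\mu) = -R_\t(\mu)\sum_m \delta E_m/(\mu-E_m)$, one obtains (with all quantities now evaluated at $\t=0$)
\[ \delta\!\left(\frac{\chi_i}{\sqrt{R}}\right) = \frac{\delta\chi_i}{\sqrt{R}} + \frac{\chi_i}{2\sqrt{R}}\sum_{m=1}^{2N}\frac{\delta E_m}{\mu - E_m}, \]
so that $\delta J_i$ is, up to a boundary contribution from the moving base point $E_{2N}(\t)$, the primitive of a rational multiple of $1/\sqrt{R}$.

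Next I would establish single-valuedness. The function $J_i(\cdot,\t)$ is multivalued, its ambiguities being the periods of $\Omega_i(\t)$ over $H_1(\Sigma_{E(\t)})$. The cycles $(c_k,d_k)_{\1N1}$ form a homology basis, since $c_n\circ d_k = \delta_{nk}$ and $\Sigma_E$ has genus $N-1$; moreover the normalization (\ref{omegainorm}) gives $\int_{c_k}\Omega_i = 0$ for all $\t$, while $\int_{d_k}\Omega_1 = U_k$ and $\int_{d_k}\Omega_2 = V_k$. Hence the ambiguity of $\delta J_i$ around any cycle equals $\partial_\t$ of the corresponding period: the $c_k$-periods vanish identically, and $\delta U_k = \delta V_k = 0$ by the standing assumption of the proof of Theorem \ref{krichevertheorem}. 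Therefore $\delta J_1$ and $\delta J_2$ are single-valued on $\Sigma_E$.

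To locate the singularities I would treat the special points separately. Away from $E$ and $\infty^\pm$, $\l$ is a local coordinate and $\partial_\t J_i$ is manifestly holomorphic. At $\infty^\pm$ I would differentiate the asymptotic expansions of Lemma \ref{omegaiji}(iii): the coefficients of $\log\l$, respectively $\l$, are the $\t$-independent constants $\mp 1$, and since $\delta e_0 = \delta e_1 = 0$ by assumption, the expansion of $\delta J_1$ starts at order $1/\l^2$, i.e. $\delta J_1$ vanishes to second order in the standard coordinate $z = 1/\l$; while $\delta J_2 \to \mp\delta f_0$ stays finite and hence (being single-valued) extends holomorphically across $\infty^\pm$. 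At a ramification point $E_n$ the local coordinate is $\zeta_n$ with $\zeta_n^2 = \l - E_n$, and writing $\sqrt{R} = \zeta_n\, h_n(\l)$ with $h_n$ holomorphic and nonvanishing near $E_n$, the only non-integrable contribution to $\delta J_i$ is the $m=n$ term $\tfrac{\chi_i\,\delta E_n}{2(\mu-E_n)\sqrt{R}}$, whose primitive behaves like $-\chi_i(E_n)\delta E_n/(h_n(E_n)\,\zeta_n)$, while all remaining terms and the base-point correction contribute only $O(\zeta_n)$. Thus $\delta J_i$ has at most a simple pole at $E_n$ with coefficient $r_i(n)\delta E_n$, $r_i(n) = -\chi_i(E_n)/h_n(E_n)$; for $i=1$, Lemma \ref{chi1chi2}(i) gives $\chi_1(E_n)\neq 0$ for every $n$, so $r_1(n)\neq 0$. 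Collecting these facts shows that $\delta J_1,\delta J_2$ are meromorphic on $\Sigma_E$, holomorphic on $\Sigma_E\setminus E$, with the stated poles, and that $\delta J_1$ vanishes to second order at $\infty^\pm$.

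The step I expect to be most delicate is the rigorous differentiation under the integral near the square-root singularities together with the moving base point $E_{2N}(\t)$: the formal integrand $\delta(\chi_i/\sqrt{R})$ contains $(\mu - E_n)^{-3/2}$ terms that are not absolutely integrable, so the interchange of $\partial_\t$ and $\int$ must be justified and the divergent boundary term at $E_{2N}$ shown to combine with the singular integrand to leave exactly the simple pole computed above. The cleanest way to handle this is to work directly on the Riemann surface in the coordinate $\zeta_n$, where $\Omega_i$ is a genuine holomorphic differential and its $\t$-variation $\delta\Omega_i$ acquires a \emph{double} pole at $E_n$ (in $\zeta_n$); integrating this double pole yields precisely the simple pole of $\delta J_i$, and the choice of base point enters only as an additive $\t$-dependent constant that does not affect the poles at the other $E_m$. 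This regularization, following the scheme of \cite{biku}, is the technical heart of the argument.
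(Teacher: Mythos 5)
Your proposal is correct and follows essentially the same route as the paper: single-valuedness from the $O(\tau^2)$-constancy of the periods, differentiation of the expansions (\ref{intomega1})--(\ref{intomega2}) at $\infty^\pm$ using $\delta e_0=\delta e_1=0$, and a local computation at each $E_n$ in the coordinate $z=(\lambda-E_n)^{1/2}$ giving a simple pole with residue proportional to $\chi_i(E_n)\,\delta E_n$, nonzero for $i=1$ by Lemma \ref{chi1chi2}(i). The delicate point you flag is resolved in the paper exactly as you suggest in your last paragraph, except that the paper first integrates the local expansion of $\Omega_i(\tau)$ to get an honest analytic expression $x_0^i(E_n,\tau)(\lambda-E_n(\tau))^{1/2}+\ldots$ and only then differentiates in $\tau$, which avoids the non-integrable $(\mu-E_n)^{-3/2}$ terms altogether.
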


\begin{proof}
Let $1 \leq i \leq 2$ be given. Although the integral $J_i(p,\t)$, defined for $p \in \mathcal{C}_E$, is multi-valued in the sense that it is defined only up to half-periods of $\Omega_i(\t)$, the derivative $\partial_\t|_{\t=0} J_i(p,\t)$ is single-valued, since by assumption, the periods of $\Omega_i(\t)$ are constant up to $O(\t^2)$. To simplify notation we write $\Omega_i$ instead of $\Omega_i(\t)$ and $E_n$ instead of $E_n(\t)$. To see that $\d J_i(\t)$ extends meromorphically to any branching point $E_n$, note that near $E_n$, $\Omega_i$ admits an expansion in terms of $z = (\l - E_n)^{1/2}$,
\begin{eqnarray*}% \label{domegaexpansion}
  \Omega_i(z,\t) & = & (x_0^i(E_n,\t) + x_1^i(E_n,\t) z + \ldots) \, dz \\
  & = & \frac{1}{2} (x_0^i(E_n,\t) (\l - E_n)^{-1/2} + x_1^i(E_n,\t) + \ldots) \, d\l
\end{eqnarray*}
where we used that $dz = \frac{1}{2z} d\l$. Since by item (i) of Lemma \ref{chi1chi2}, $\Omega_1(E_n,\t) \neq 0$, it follows that $x_0^1(E_n,\t) \neq 0$. We now integrate $\Omega_i(\t)$ to get that
%Since by Proposition \ref{abdeldiffzerosetprop} (i) the branching points are not zeroes of $d\Omega_1$, the coefficients $x_{-1}^1(\l_m,0)$ are nonzero (note that in the chart $u = (\l - \l_m)^\frac{1}{2}$ near $\l_m$, the expansions (\ref{domegaexpansion}) for $i=1$ become $d\Omega_i(u,\t) = 2 \sum_{k=-1}^\infty u^{k+1} x_k^i(\l_m,\t) du$, hence $x_{-1}^1(\l_m,0)=0$ would imply $d\Omega_1(\l_m) = 0$). To obtain a Laurent expansion of $\partial_\t \Omega_i(\l,\t)$, we integrate (\ref{domegaexpansion}) with respect to $\l$ and then differentiate with respect to $\t$,
%\begin{equation}
%  \partial_\t \Omega_i(\l,0) \!=\! \sum_{k=1}^\infty \! \left( \! \frac{2}{k} \partial_\t x_{k\!-\!2}^i (\l_m,0) (\l \!-\! \l_m)^\frac{k}{2}\!+\! x_{k\!-\!2}^i(\l_m,0) \partial_\t \l_m(\l \!-\! \l_m)^\frac{k\!-\!2}{2} \! \right) \!.
%\end{equation}
\begin{eqnarray*}
	\int_{E_n}^z \Omega_i(\t) & = & \int_{E_n}^\l \left( \frac{1}{2} \frac{x_0^i(E_n,\t)}{(\l - E_n)^{1/2}} + \frac{1}{2} x_1^i(E_n,\t) + \ldots \right) \, d\l \\
	& = & x_0^i(E_n,\t) (\l - E_n)^{1/2} + \frac{1}{2} x_1^i(E_n,\t) (\l - E_n) + \ldots
\end{eqnarray*}
is a value of the multi-valued function $J_i(p,\t)$. Then the $\t$-derivative of $J_i(p,\t)$ at $\t=0$ satisfies
	\[ \d J_i(p) = \left\{ -\frac{x_0^i(E_n,0)}{2} (\l - E_n)^{-1/2} + O(\l - E_n)^0 \right\} \delta E_n.
\]
Hence $\d J_i$ admits in $E_n$ a Laurent expansion and therefore is meromorphic near $E_n$. At $E_n$, it might have a pole of order $1$ with residue $r_i(n) \d E_n$ and $r_i(n) = -\frac{1}{2} x_0^i(E_n,0)$. Moreover $r_1(n) \neq 0$ as $x_0^1(E_n,0) \neq 0$ by the observation above.

To see that $\d J_i$ extends meromorphically to $\infty^+$ and $\infty^-$, use the expansions (\ref{intomega1}) and (\ref{intomega2}) to conclude that for $\l \to \infty$,
%Hence (\ref{domegairesformula}) follows with $x_{-1}^i(m) = x_{-1}^i(\l_m,0)$.
	\[ J_1(\l,\t) = \mp \left( \log \l + e_0(\t) + e_1(\t) \frac{1}{\l} + \ldots \right)
\]
and
	\[ J_2(\l,\t) = \mp (\l + f_0(\t) + \ldots).
\]
In view of the assumption that $\d e_0 = 0$ and $\d e_1 = 0$ it follows that $\d J_1(\l) = O\left( \frac{1}{\l^2} \right)$, and hence $\d J_1$ has a zero of order $2$ at $\infty^\pm$.
%It remains to investigate the behaviour of $\partial_\t \Omega_i(P)$ at infinity.
\end{proof}

%The fact that the residues of $\partial_\t \Omega_1(P)$ at $P = \l_m$ are given by $x_{-1}^1(m) \partial_\t \l_m$ with nonzero numbers $x_{-1}^1(m)$ has the following immediate consequence:

The most important ingredient for the proof of Theorem \ref{krichevertheorem} is the following

\begin{prop} \label{dj1equiv0}
$\d J_1 \equiv 0$.
%	\[ \d J_1 \equiv 0.
%\]
\end{prop}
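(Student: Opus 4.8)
The plan is to show that $\delta J_1$ vanishes identically by pairing it with $\delta J_2$ into a single meromorphic function whose poles are forced to cancel at all ramification points, after which a degree count against the \emph{coprime} polynomials $\chi_1,\chi_2$ of Lemma~\ref{abeldiffexist} closes the argument.

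First I would record the structure of $\delta J_1$ and $\delta J_2$ as meromorphic functions on $\Sigma_E$. By Lemma~\ref{propdtomega1} both are single-valued (this is exactly where the hypotheses $\delta U_i=0$ and $\delta V_i=0$ enter, respectively), holomorphic on $\Sigma_E\setminus E$, with at most simple poles at the $E_n$, where the coefficient of $(\l-E_n)^{-1/2}$ equals $r_i(n)\,\delta E_n$. Moreover $\delta J_1$ has a double zero at $\infty^\pm$, while $\delta J_2$ stays finite there by the expansion (\ref{intomega2}). Since $\iota^*\Omega_i=-\Omega_i$ (Lemma~\ref{omegaiji}(i)), the primitives $J_i$ and hence $\delta J_i$ are odd under $\iota$, so each is of the form $\delta J_i=P_i(\l)/\sqrt{R(\l)}$ for a polynomial $P_i$. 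Reading off the orders at the branch points and at $\infty^\pm$ pins down the degrees: $\deg P_1\le N-2$ (double zero at infinity) and $\deg P_2\le N$ (finiteness at infinity).

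Next comes the key step. Writing $\Omega_i=\chi_i(\l)/\sqrt{R(\l)}\,d\l$ and expanding near $E_n$ in the local parameter $z=(\l-E_n)^{1/2}$, the residue coefficients of Lemma~\ref{propdtomega1} become $r_i(n)=-\chi_i(E_n)/\rho_n$, where $\rho_n=\lim_{\l\to E_n}\sqrt{R(\l)}/(\l-E_n)^{1/2}\neq 0$. Hence at each $E_n$ the principal parts of $\delta J_1$ and $\delta J_2$ are proportional with ratio $\chi_1(E_n):\chi_2(E_n)$, so the combination
\[
  G:=\chi_2\,\delta J_1-\chi_1\,\delta J_2=\frac{\chi_2 P_1-\chi_1 P_2}{\sqrt{R}}
\]
has no pole at any ramification point. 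Equivalently, the polynomial $\chi_2 P_1-\chi_1 P_2$ vanishes at all $2N$ distinct points $E_1,\dots,E_{2N}$.

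Finally I would close by degrees. Since $\deg(\chi_2 P_1)\le 2N-2$ and $\deg(\chi_1 P_2)\le 2N-1$, the polynomial $\chi_2 P_1-\chi_1 P_2$ has degree at most $2N-1<2N$; vanishing at $2N$ distinct points, it is identically zero, so $\chi_2 P_1=\chi_1 P_2$. By Lemma~\ref{chi1chi2}(iii) the polynomials $\chi_1$ and $\chi_2$ have no common zero, hence are coprime, whence $\chi_1\mid P_1$. As $\deg P_1\le N-2<N-1=\deg\chi_1$, this forces $P_1\equiv 0$, i.e. $\delta J_1\equiv 0$. The main obstacle is the bookkeeping of the first two steps: correctly justifying the rational form $\delta J_i=P_i/\sqrt R$ with the right degree bounds (using oddness together with the pole/zero data of Lemma~\ref{propdtomega1}) and the residue identity $r_i(n)=-\chi_i(E_n)/\rho_n$. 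Once the combination $\chi_2\,\delta J_1-\chi_1\,\delta J_2$ is identified, the remainder is pure degree counting together with the coprimality furnished by Lemma~\ref{chi1chi2}.
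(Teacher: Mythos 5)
Your argument is correct, and all the ingredients you invoke are available: oddness of $\delta J_i$ under $\iota$ follows from Lemma \ref{omegaiji}(i) (the half-period ambiguity drops out of the $\tau$-derivative because the periods are constant to first order), the representation $\delta J_i=P_i(\lambda)/\sqrt{R(\lambda)}$ with $\deg P_1\le N-2$, $\deg P_2\le N$ follows from Lemma \ref{propdtomega1} together with the structure of odd meromorphic functions on a hyperelliptic curve, and the residue identity $r_i(n)=-\chi_i(E_n)/\rho_n$ is exactly the paper's $r_i(n)=-\tfrac12 x_0^i(E_n,0)$ made explicit. The route is, however, genuinely different in mechanics from the paper's. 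The paper introduces an auxiliary function $\delta K=\delta J_2-(\Omega_2/\Omega_1)\,\delta J_1$ via the implicit function theorem, proves $\delta K\equiv 0$ by counting zeroes against poles of a meromorphic function on the compact surface $\Sigma_E$ (Lemma \ref{omega2equiv0}), and then extracts $\delta J_1\equiv 0$ from the resulting identity $\delta J_1\cdot\Omega_2=\delta J_2\cdot\Omega_1$ by a second zero/pole count that requires a case distinction according to whether $\Omega_2$ vanishes at some ramification point. Your combination $G=\chi_2\,\delta J_1-\chi_1\,\delta J_2$ equals $-\chi_1\,\delta K$, so your first step is equivalent to the paper's Lemma \ref{omega2equiv0}, but you prove it by elementary polynomial interpolation ($\chi_2P_1-\chi_1P_2$ has degree at most $2N-1$ and $2N$ roots), and your endgame replaces the surface-theoretic count by coprimality of $\chi_1,\chi_2$ (Lemma \ref{chi1chi2}(iii)) plus the degree bound $\deg P_1\le N-2<\deg\chi_1$. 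This is cleaner in that it avoids the case distinction entirely and makes visible exactly where each hypothesis ($\delta U=\delta V=0$ for single-valuedness, $\delta e_0=\delta e_1=0$ for the double zero of $\delta J_1$ at $\infty^\pm$) enters; the price is that it leans on the explicit hyperelliptic model $f=A(\lambda)+B(\lambda)w$, whereas the paper's scheme, following Bikbaev--Kuksin, transfers more directly to other families of curves.
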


To prove Proposition \ref{dj1equiv0} we first need to introduce an auxiliary function. For $p \in \mathcal{C}_E \setminus N_{\Omega_1}$, $d J_1(p) = \Omega_1(p) \neq 0$. Hence by the implicit function theorem, there exists a smooth curve $\t \mapsto q(\t) := q(\t,p)$ with $q(0) = p$ defined for $\t$ sufficiently close to zero so that $J_1(q(\t),\t) = J_1(p)$. In particular, for $p=E_n$ one has $q(\t) = E_n(\t)$. Then introduce for $p \in \mathcal{C}_E \setminus N_{\Omega_1}$
	\[ \d K(p) := \frac{d}{d\t} \Big|_{\t=0} J_2(q(\t),\t).
\]
As the periods of $\Omega_2$ are constant up to $O(\t^2)$ and $J_2(p,\t)$ is well defined up to half periods of $\Omega_2$, $\delta K$ is single-valued. Moreover, $\d K$ admits a meromorphic extension to $\Sigma_E$. Indeed, as $J_1(q(\t),\t) = J_1(p)$, one has for any $p \in \mathcal{C}_E \setminus N_{\Omega_1}$
	\[ \d J_1(p) + \langle \Omega_1(p), \d q \rangle = 0
\]
where $\langle \cdot, \cdot \rangle$ denotes the dual pairing between $T_p^* \Sigma_E$ and $T_p \Sigma_E$. Hence
	\[ \d K(p) = \frac{d}{d\t} \Big|_{\t=0} J_2(q(\t),\t) = \d J_2(p) + \langle \Omega_2(p), \d q \rangle
\]
leads to
\begin{equation} \label{dkpeqn}
	\d K(p) = \d J_2(p) - \frac{\Omega_2(p)}{\Omega_1(p)} \, \d J_1(p).
\end{equation}
By Lemma \ref{chi1chi2} we know that $\frac{\Omega_2(p)}{\Omega_1(p)}$ extends to a meromorphic function on $\Sigma_E$ with possible poles at the zeroes of $\Omega_1$. In view of Lemma \ref{propdtomega1}, $\d K$ admits a meromorphic extension to $\Sigma_E$.

\begin{lemma} \label{omega2equiv0}
$\d K \equiv 0$.
%\begin{displaymath}
%\d K \equiv 0.
%\end{displaymath}
\end{lemma}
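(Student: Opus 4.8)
The plan is to use the explicit algebraic form of $\d J_1$ and $\d J_2$ that is forced by their symmetry and singularity structure, and then to show that the numerator of $\d K$, written over a common denominator, is a polynomial of degree strictly less than $2N$ which vanishes at all $2N$ branch points $E_n$; hence it vanishes identically and $\d K \equiv 0$.

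First I would record the global form of $\d J_1$ and $\d J_2$. By Lemma \ref{omegaiji}(i) the differentials $\Omega_i$ are odd under $\iota$, so the Abel integrals $J_i$, and therefore $\d J_i$, are odd meromorphic functions on $\Sigma_E$. By Lemma \ref{propdtomega1} they are single-valued, holomorphic on $\Sigma_E \setminus E$, and have at most simple poles at the ramification points. An odd meromorphic function whose only finite poles lie at the branch points must be of the form $\psi_i(\l)/\sqrt{R(\l)}$ with $\psi_i$ a polynomial. The asymptotics at infinity pin down the degrees: since $\d J_1$ has a double zero at $\infty^\pm$ (this is where the normalization $\d e_0 = \d e_1 = 0$ enters, via Lemma \ref{propdtomega1}), one gets $\deg \psi_1 \leq N-2$, while $\d J_2$ stays finite at $\infty^\pm$ (only the constant $f_0$ in (\ref{intomega2}) varies, the leading coefficient being fixed), so $\deg \psi_2 \leq N$.

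Next I would match residues at each $E_n$. Writing $R_n(\l) = \prod_{m \neq n}(\l - E_m)$, the representation $\Omega_i = \chi_i(\l)/\sqrt{R(\l)}\, d\l$ of Lemma \ref{abeldiffexist} gives $x_0^i(E_n) = 2\chi_i(E_n)/\sqrt{R_n(E_n)}$, hence $r_i(n) = -\tfrac12 x_0^i(E_n) = -\chi_i(E_n)/\sqrt{R_n(E_n)}$ in the notation of Lemma \ref{propdtomega1}. Near $E_n$ the residue of $\psi_i/\sqrt{R}$ equals $\psi_i(E_n)/\sqrt{R_n(E_n)}$, and comparing this with $\textrm{Res}_{E_n}\d J_i = r_i(n)\,\d E_n$ yields $\psi_i(E_n) = -\chi_i(E_n)\,\d E_n$ for $i=1,2$ and all $1 \leq n \leq 2N$. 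The crucial point is that the \emph{same} $\d E_n$ occurs for both $i$, so that $(\psi_2 \chi_1 - \psi_1 \chi_2)(E_n) = 0$ for every $n$; this is exactly the residue cancellation that makes $\d K$ holomorphic at the ramification points.

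Finally a degree count closes the argument: $\deg(\psi_2 \chi_1) \leq N + (N-1) = 2N-1$ and $\deg(\psi_1 \chi_2) \leq (N-2) + N = 2N-2$, so $\psi_2 \chi_1 - \psi_1 \chi_2$ has degree at most $2N-1$ yet vanishes at the $2N$ distinct points $E_1 < \cdots < E_{2N}$, and must therefore vanish identically. Consequently $\psi_2/\sqrt{R} = (\chi_2/\chi_1)\,(\psi_1/\sqrt{R})$, i.e. $\d J_2 = (\Omega_2/\Omega_1)\,\d J_1$, which by (\ref{dkpeqn}) means $\d K \equiv 0$. I expect the main obstacle to be the residue bookkeeping of the previous step, namely getting the shared-$\d E_n$ proportionality $\psi_i(E_n) = -\chi_i(E_n)\,\d E_n$ precisely right (equivalently, the holomorphy of $\d K$ at the $E_n$), together with ensuring the resulting degree is \emph{strictly} below $2N$; the latter is exactly why the decay of $\d J_1$ at infinity (the conditions $\d e_0 = \d e_1 = 0$) is indispensable, since without it the degree would reach $2N$ and the vanishing conclusion would fail.
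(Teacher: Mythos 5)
Your proof is correct, but it takes a genuinely different route from the paper's. The paper argues directly on the meromorphic function $\d K$ living on the compact surface $\Sigma_E$: it observes that $\d K(E_n)=0$ at all $2N$ ramification points (because there $q(\t)=E_n(\t)$, so $J_i(E_n(\t),\t)$ contains zero for every $\t$), that $\d K$ also vanishes at $\infty^\pm$, and that its poles are confined to the $2N-2$ simple zeroes of $\Omega_1$ (this is where Lemma \ref{chi1chi2}(i) is used); having at least $2N$ zeroes and at most $2N-2$ poles, $\d K$ vanishes identically. You instead use the oddness of $\d J_i$ under the sheet involution to write $\d J_i=\psi_i(\l)/\sqrt{R(\l)}$ with $\psi_i$ polynomial, pin down $\deg\psi_1\le N-2$ and $\deg\psi_2\le N$ from the behaviour at $\infty^\pm$ (the double zero of $\d J_1$ being exactly where $\d e_0=\d e_1=0$ enters), match residues at the $E_n$ to get $\psi_i(E_n)=-\chi_i(E_n)\,\d E_n$ with the \emph{same} $\d E_n$ for $i=1,2$, and kill the degree-$\le 2N-1$ polynomial $\psi_2\chi_1-\psi_1\chi_2$ by its $2N$ roots. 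The residue bookkeeping is right, and the whole scheme reduces to explicit polynomial algebra, never needing the simplicity or the count of the zeroes of $\Omega_1$. Two remarks. First, the oddness of $\d J_i$ is not part of Lemma \ref{propdtomega1}; you should justify it from Lemma \ref{omegaiji}(i)--(ii) together with the fact that the half-periods of $\Omega_i(\t)$ are constant to first order in $\t$ (the same fact that makes $\d J_i$ single-valued), since $J_i$ itself is only odd modulo half-periods. Second, your argument proves more than the lemma: the identity $\psi_2\chi_1\equiv\psi_1\chi_2$ is precisely (\ref{j1omega2j2omega1}), and since $\chi_1$ and $\chi_2$ have no common zero (Lemma \ref{chi1chi2}(iii)) while $\deg\psi_1\le N-2<N-1=\deg\chi_1$, it immediately forces $\psi_1\equiv 0$, i.e.\ Proposition \ref{dj1equiv0}, bypassing the paper's two-case analysis there. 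What you give up is the paper's short conceptual reason that $\d K$ vanishes at the branch points (they are carried along by the deformation); what you gain is a more elementary and self-contained computation.
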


\begin{proof}[Proof of Lemma \ref{omega2equiv0}]
We show that, when counted with their orders, the number of poles of $\d K$ does not match the number of zeroes. First note that $\d K(E_n) = 0$ for any $1 \leq n \leq 2N$. Indeed, if $p=E_n$ for some $1 \leq n \leq 2N$, one has $q(\t) = E_n(\t)$ and hence for $i=1,2$, $J_i(E_n(\t), \t)$ contains zero for any $\t$, implying that $\d K(E_n) = 0$. On the other hand, by (\ref{dkpeqn}), the poles of $\d K$ in $\mathcal{C}_E$ are contained in the set $N_{\Omega_1}$ of zeroes of $\Omega_1$. By Lemma \ref{chi1chi2}, all these zeroes are simple and $|N_{\Omega_1}| = 2N-2$. Now let us investigate the values of $\d K$ at $\infty^+$ and $\infty^-$. Using the standard charts $z = \frac{1}{\l}$ we have by Lemma \ref{abeldiffexist}
	\[ \frac{\Omega_2(z)}{\Omega_1(z)} = O\left( \frac{1}{z} \right)
\]
and by Lemma \ref{propdtomega1}, $\d J_1(z) = O(z^2)$. Hence
	\[ \frac{\Omega_2(z)}{\Omega_1(z)} \d J_1(z) = O(z).
\]
It means that $\frac{\Omega_2(z)}{\Omega_1(z)} \d J_1(z)$ vanishes at $\infty^+$ and $\infty^-$. In addition, again by Lemma \ref{propdtomega1}, $\d J_2$ is holomorphic at $\infty^+$ and $\infty^-$. Alltogether we have shown that the meromorphic function $\d K$ has at least $2N$ zeroes and at most $2N-2$ poles (counted with their multiplicities). As $\Sigma_E$ is a compact surface it follows that $\d K \equiv 0$.
\end{proof}

\begin{proof}[Proof of Proposition \ref{dj1equiv0}]
By Lemma \ref{omega2equiv0}, formula (\ref{dkpeqn}) implies that
\begin{equation} \label{j1omega2j2omega1}
  \d J_1 \cdot \Omega_2 \equiv \d J_2 \cdot \Omega_1.
\end{equation}
By comparing poles and zeroes of $\d J_2 \cdot \Omega_1$ and $\d J_1 \cdot \Omega_2$ we want to conclude that $\d J_1 \equiv 0$ (and hence $\d J_2 \equiv 0$ as well). Indeed, by Lemma \ref{propdtomega1}, any pole of $\d J_1$ has to be a ramification point of $\Sigma_E$ and is of order $1$. By Lemma \ref{chi1chi2}, at least $2N-2$ zeroes of $\Omega_2$ are contained in $\mathcal{C}_E \setminus E$. We now have to distinguish between two cases. If $\Omega_2(E_n) \neq 0$ for any $1 \leq n \leq 2N$, then $\Omega_2$ has $2N$ zeroes and they are all contained in $\Sigma_E \setminus (E \cup \{ \infty^+, \infty^- \})$. By Lemma \ref{chi1chi2}, the zeroes of $\Omega_2$ cannot be zeroes of $\Omega_1$ and hence (\ref{j1omega2j2omega1}) implies that they must be zeroes of $\d J_2$. In addition, by Lemma \ref{propdtomega1}, $\d J_1$ vanishes at $\infty^\pm$ of order $2$ whereas $\Omega_2$ has a pole of order $2$. Hence $\d J_1 \cdot \Omega_2$ is holomorphic at $\infty^\pm$. By (\ref{j1omega2j2omega1}), $\d J_2 \cdot \Omega_1$ is then holomorphic at $\infty^\pm$. As $\Omega_1$ has a pole of order $1$ at $\infty^\pm$ it follows that $\d J_2$ vanishes at $\infty^\pm$. Alltogether, $\d J_2$ has at least $2N+2$ zeroes on $\Sigma_E$. On the other hand, by Lemma \ref{propdtomega1}, $\d J_2$ has at most $2N$ poles (all of them simple). As $\Sigma_E$ is a compact Riemann surface, the meromorphic function $\d J_2$ vanishes identically, and hence by (\ref{j1omega2j2omega1}), $\d J_1$ as well.

It remains to consider the case where there exists $1 \leq n \leq 2N$ so that $\Omega_2(E_n) = 0$. By Lemma \ref{propdtomega1}, $\d J_1$ is either holomorphic near $E_n$ or has a pole of order $1$. Hence $\d J_1 \cdot \Omega_2$ is holomorphic near $E_n$. By (\ref{j1omega2j2omega1}), $\d J_2 \cdot \Omega_1$ is then holomorphic at $E_n$ as well. By Lemma \ref{chi1chi2}, $\Omega_1(E_n) \neq 0$, hence $\d J_2$ is holomorphic near $E_n$. Again by Lemma \ref{propdtomega1}, we then see that $\d J_2$ has at most $2N-1$ poles in $\Sigma_E$. On the other hand, in view of Lemma \ref{chi1chi2}, $\d J_2$ has at least $2N-2$ zeroes in $\mathcal{C}_E \setminus E$. We have already seen that $\d J_2$ vanishes at $\infty^+$ and $\infty^-$. Hence $\d J_2$ has at least $2N$ zeroes and at most $2N-1$ poles in $\Sigma_E$. As $\Sigma_E$ is a compact Riemann surface, the meromorphic function $\d J_2$ vanishes identically, and so does $\d J_1$.
\end{proof}

\section{Formulas for the Toda frequencies} \label{freqtoda}

%In this section we summarize our results obtained in \cite{ahtk1, ahtk2, ahtk3} which we will need in the sequel. First we have to introduce some more notation.
In this section we derive formulas for the frequencies of the periodic Toda lattice in terms of periods of the Abelian differential $\Omega_2$ introduced in section \ref{krichever}. These formulas will be used in an essential way to show that the frequency map is nondegenerate on $\R_{>0}^{N-1}$.

%\emph{Curve $\Sigma_{b,a}$:} Denote by $\Sigma_{b,a}$ the curve obtained as the compactification of the affine curve $\mathcal{C}_{b,a}$ defined by
%\begin{equation} \label{algcurve}
%\{ (\l,z) \in \C^2 : z^2 = \Delta^2_{\l} (b, a) - 4 \}.% = \{ (\l,z) \in \C^2 : z^2 = \alpha^{-2N} \prod_{n=1}^{2N} (\l - \l_n) \}.
%\end{equation}
%Note that $\mathcal{C}_{b,a}$ and $\Sigma_{b,a}$ are spectral invariants. 
Introduce $\M^\bullet = \cup_{\a>0, \b \in \R} \Mba^\bullet$. As pointed out at the end of section \ref{prelim}, $\M^\bullet = \M \setminus \cup_{n=1}^{N-1} D_n$, i.e. for any $(b,a) \in \M^\bullet$, all the roots $(\l_i)_{1 \leq i \leq 2N}$ of $\Delta_\l^2(b,a) - 4$ are simple. As above, we list these roots in increasing order, $\l_1 < \l_2 < \ldots < \l_{2N}$. By Corollary \ref{mbacor}, $\M^\bullet$ is open and dense in $\M$. Given $(b,a) \in \M^\bullet$, denote by $\Sigma_{b,a}$ the Riemann surface $\Sigma_E$ with $E = (\l_1 < \ldots < \l_{2N})$, and scaling factor $\sigma = \a^{-2N}$, where $\a = \left( \prod_{i=1}^N a_i \right)^{1/N}$. In view of the product representation (\ref{delta2lrepr}) of $\Delta_\l^2(b,a)-4$, $\Sigma_{b,a} = \mathcal{C}_{b,a} \cup \{ \infty^+, \infty^- \}$, where %obtained by compactifying the affine curve $\mathcal{C}_{b,a}$,
\begin{equation} \label{algcurve}
\mathcal{C}_{b,a} := \{ (\l,z) \in \C^2 : z^2 = \Delta^2_{\l} (b, a) - 4 \}.% = \{ (\l,z) \in \C^2 : z^2 = \alpha^{-2N} \prod_{n=1}^{2N} (\l - \l_n) \}.
\end{equation}

%It turns out that the frequencies of the Toda lattice can be expressed in terms of the periods of a conveniently normalized Abelian differential on $\Sigma_{b,a}$ which is holomorphic except at $\infty^+$ and $\infty^-$ where it has a pole of order two. 
To obtain a formula for the differential $\Omega_2$ we first consider an auxiliary differential.

\begin{lemma} \label{tildeomega2lemma}
Assume that $(b,a) \in \Mba^\bullet$ with $\b=0$. Then the differential
	\[ \tilde{\Omega}_2 = \frac{\l \dot{\Delta}_{\l}}{\sqrt{\Delta_{\l}^2-4}} \, d\l \]
	is holomorphic on $\mathcal{C}_{b,a}$ and has an expansion of the form $\left( \mp \frac{N}{z^2} + O(1) \right) dz$ at $\infty^\pm$ when expressed in the standard chart $z = \frac{1}{\l}$ of $\infty^\pm$.
\end{lemma}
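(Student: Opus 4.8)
The plan is to prove the two assertions separately: holomorphicity on the affine curve $\mathcal{C}_{b,a}$, and the precise behaviour at the two points at infinity. The hypothesis $\b=0$ will be needed only for the second part.

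For holomorphicity, I would use that $(b,a) \in \M^\bullet$, so all roots $\l_j$ of $\Delta_\l^2-4$ are simple and $\mathcal{C}_{b,a}$ is smooth with only simple ramification points. Away from the $\l_j$, the projection $\l$ is a local coordinate and $\sqrt{\Delta_\l^2-4} \neq 0$, so $\tilde{\Omega}_2 = \l \dot{\Delta}_\l / \sqrt{\Delta_\l^2-4}\, d\l$ is manifestly holomorphic, its numerator being a polynomial. At a ramification point $\l_j$ I would pass to the local coordinate $\zeta$ with $\zeta^2 = \l - \l_j$, so $d\l = 2\zeta\, d\zeta$, while by the product representation (\ref{delta2lrepr}) with simple roots one has $\sqrt{\Delta_\l^2-4} = c\,\zeta\,(1 + O(\zeta^2))$ with $c \neq 0$. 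The factor $\zeta$ coming from $d\l$ cancels the $\zeta$ in the denominator, so $\tilde{\Omega}_2$ extends holomorphically across $\l_j$. Hence $\tilde{\Omega}_2$ is holomorphic on all of $\mathcal{C}_{b,a}$.

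For the behaviour at infinity, I would expand as $\l \to \infty$ on each sheet, using (\ref{dotlrepr}) and (\ref{delta2lrepr}). Writing $\l \dot{\Delta}_\l = N \a^{-N}\big(\l^N - (\sum_k \dot{\l}_k)\l^{N-1} + \ldots\big)$ and $\sqrt{\Delta_\l^2-4} = \pm \a^{-N}\big(\l^N - \tfrac12(\sum_j \l_j)\l^{N-1} + \ldots\big)$, where the sign depends on the sheet (on the canonical sheet, containing $\infty^-$, the square root is negative for large real $\l$), division gives
\[
  \frac{\l \dot{\Delta}_\l}{\sqrt{\Delta_\l^2-4}} = \pm N\left(1 + \frac{1}{\l}\Big(\tfrac12 \textstyle\sum_j \l_j - \sum_k \dot{\l}_k\Big) + O\big(\tfrac{1}{\l^2}\big)\right).
\]
The crux is the coefficient of $1/\l$, and this is exactly where $\b=0$ enters. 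By relation (\ref{lambdadotlambda}), $\sum_k \dot{\l}_k = \tfrac{N-1}{2N}\sum_j \l_j$, so that $\tfrac12 \sum_j \l_j - \sum_k \dot{\l}_k = \tfrac{1}{2N}\sum_j \l_j$. Now $\sum_{j=1}^{2N}\l_j$ is the trace of the $2N\times 2N$ matrix $L^+((b,b),(a,a))$, hence equals $2\sum_{n=1}^N b_n$; since $C_1 = -\tfrac1N \sum_n b_n = \b = 0$, we get $\sum_j \l_j = 0$ and the $1/\l$ term drops out, leaving $\l\dot{\Delta}_\l/\sqrt{\Delta_\l^2-4} = \pm N + O(1/\l^2)$.

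To conclude, I would pass to the standard chart $\l = 1/z$, where $d\l = -z^{-2}\, dz$ and $O(1/\l^2) = O(z^2)$, obtaining $\tilde{\Omega}_2 = (\pm N + O(z^2))(-z^{-2})\, dz = (\mp N z^{-2} + O(1))\, dz$, with the upper sign at $\infty^+$ and the lower at $\infty^-$, as claimed. The main obstacle is precisely the third step: one must notice that the $1/\l$ coefficient is \emph{not} automatically zero, and that it is the trace condition $\b = 0$ — together with (\ref{lambdadotlambda}) — that removes it. Without this hypothesis $\tilde{\Omega}_2$ would carry a nonzero residue term of order $1/z$ at infinity, and the stated expansion would fail.
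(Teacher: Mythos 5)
Your proposal is correct and follows essentially the same route as the paper's proof in Appendix A: expand $\l\dot{\Delta}_\l$ and $\sqrt{\Delta_\l^2-4}$ at infinity via the product representations (\ref{dotlrepr}) and (\ref{delta2lrepr}), identify the $1/\l$ (equivalently $1/z$) coefficient, and kill it using (\ref{lambdadotlambda}) together with the trace identity $\sum_j\l_j=2\sum_n b_n=-2N\b=0$. The only differences are cosmetic: you spell out the holomorphicity at the ramification points (which the paper dismisses as clear) and organize the expansion as a quotient of series in $1/\l$ rather than a product of $(1-\l_n z)^{-1/2}$ factors.
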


The proof of Lemma \ref{tildeomega2lemma} is straightforward. For the convenience of the reader it is given in Appendix \ref{abeliandiff2}.

The Abelian differential $\tilde{\Omega}_2$ has to be appropriately normalized. For this purpose introduce the \emph{$\psi$-functions}. Let $(b,a) \in \M^\bullet$ and $1 \leq n \leq N-1$. Then there exists a unique polynomial $\psi_n(\l)$ %Assume that for a given element $(b,a) \in \M$ the spectrum of $Q(b,a)$ is simple. Then
%the $N-1$ differentials $\frac{z^j dz}{\sqrt{\Delta^2_\l-4}}$ $(0 \leq j \leq N-2)$ are a basis of the space of holomorphic differentials on the surface $\Sigma_{b,a}$ (see e.g. \cite{fakr}). Hence
%$\Sigma_{b,a}$ admits a basis of holomorphic differentials of the form $\frac{\psi_n(\l)}{\sqrt{\Delta^2_\l-4}} d\l$ $(1 \leq n \leq N-1)$ where the $\psi_n(\l)$ are polynomials 
of degree at most $N-2$ such that for any $1 \leq k \leq N-1$
\begin{equation} \label{psi}
\frac{1}{2\pi} \int_{c_k} \frac{\psi_n(\l)}{\sqrt{\Delta^2_\l-4}} \, d\l = \delta_{kn}.
\end{equation}
Here $(c_k)_{\1N1}$ denote the cycles on the canonical sheet $\Sigma_{b,a}^c$ of $\Sigma_{b,a}$ introduced at the beginning of section \ref{krichever}. For any $k \neq n$ it follows from (\ref{psi}) that
\begin{equation} \label{psiproperty}
\frac{1}{\pi} \int_{\l_{2k}}^{\l_{2k+1}} \frac{\psi_n(\l)}{\sqrt[+]{\Delta^2_\l - 4}} \, d\l = 0.
\end{equation}
As $(b,a) \in \M^\bullet$, $\g_k = \l_{2k+1} - \l_{2k} > 0$ for any $\1N1$ and hence in every gap $(\l_{2k}, \l_{2k+1})$ with $k \neq n$ the polynomial $\psi_n$ has a zero which we denote by $\sigma_k^n$. %If the spectrum of $Q(b,a)$ is not simple, i.e. if $\gamma_k = 0$ for some $1 \leq k \leq N-1$, it can be shown (as explained e.g. in \cite{kato} for KdV) that (\ref{psi}) still holds, and 
%If $\l_{2k} = \l_{2k+1}$ then it follows from (\ref{psi}) and Cauchy's theorem that $\sigma_k^n = \l_{2k} = \l_{2k+1}$. 
As $\psi_n(\l)$ is a polynomial of degree at most $N-2$, one has
\begin{equation} \label{psiprodrepr}
  \psi_n(\l) = M_n \prod_{1 \leq k \leq N-1 \atop k \neq n} (\l - \sigma_k^n),
\end{equation}
%with nonzero constants $(K_n(b,a))_{1 \leq n \leq N-1}$ and zeroes $(\sigma_k^n)_{1 \leq k \leq N-1, k \neq n}$ satisfying $\l_{2k} \leq \sigma_k^n \leq \l_{2k+1}$.
where $M_n \equiv M_n(b,a) \neq 0$. % (In fact, $M_n(b,a)$ can be computed to be $2 (-1)^N s_n \a^{-(N-1)}$.)
%In particular it follows that $(-1)^{N+n+1} \psi_n(\l) > 0$ on $[\l_{2n}, \l_{2n+1}]$.
%In a straightforward way one can prove that there exists a neighborhood $\W$ of $\M$ in $\M_{\C}$, so that for any $(b,a) \in \W$ and any $1 \leq n \leq N-1$, there is a unique polynomial $\psi_n(\l)$ of degree at most $N-2$ satisfying (\ref{psi}) for any $\1N1$. It admits the product representation (\ref{psiprodrepr}), and the zeroes are analytic functions on $\W$.
%To obtain the desired formula for the frequencies, the Abelian differential $\tilde{\Omega}_2$ has to be appropriately normalized. For this purpose introduce the 
Clearly, the differential forms ($1 \leq n \leq N-1$)
\begin{equation} \label{zetandef}
  \zeta_n = \frac{\psi_n(\l)}{\sqrt{\Delta^2_\l - 4}} \, d\l
\end{equation}
%where the polynomials $\psi_n(\l)$ have been introduced in (\ref{psi}) and are uniquely determined by the normalization conditions \ref{psi}. Clearly, the $\zeta_n$'s 
are holomorphic on $\Sigma_{b,a} \setminus \{ \infty^+, \infty^- \}$. As the $\psi_n$'s are polynomials in $\l$ of degree at most $N-2$, they are also holomorphic at $\infty^+$ and $\infty^-$. Further, the action variables $I_n = I_n(b,a)$, $1 \leq n \leq N-1$, introduced in Theorem \ref{sumthm} for any $(b,a) \in \M$, are given by
\begin{equation} \label{actsba}
I_n = \frac{1}{2\pi} \int_{c_n} \l \frac{\dot{\Delta}_\l}{\sqrt{\Delta^2_\l-4}} \; d\l.
\end{equation}
%where $\dot{\Delta}_\l = \partial_\l \Delta_\l$ is the $\l$-derivative of the discriminant $\Delta_\l = \Delta(\l, b, a)$ - see e.g. \cite{ahtk1} for details. 
They can be interpreted as period integrals of $\tilde{\Omega}_2$,
\begin{equation} \label{incnformel}
	I_n = \frac{1}{2\pi} \int_{c_n} \tilde{\Omega}_2 \qquad (1 \leq n \leq N-1).
\end{equation}
%where $c_n$ is the lift of the contour $\Gamma_n$ to the canonical sheet of $\Sigma_{b,a}$.

Now introduce the meromorphic differential
	\[ \Omega := \tilde{\Omega}_2 - \sum_{n=1}^{N-1} I_n \zeta_n
\]
with $(\zeta_n)_{1 \leq n \leq N-1}$ as given by (\ref{zetandef}).

\begin{lemma} \label{domega2formula}
Assume that $(b,a) \in \Mba^\bullet$ with $\b=0$ but $\a>0$ arbitrary. Then the meromorphic differentials $\Omega_2$ and $\Omega$ are related by $\Omega = -N \Omega_2$.
\end{lemma}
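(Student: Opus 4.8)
The plan is to invoke the uniqueness of $\Omega_2$ asserted in Lemma \ref{abeldiffexist}: I would show that the differential $-\frac{1}{N}\Omega$ satisfies the two characterizing properties (i) and (ii) of $\Omega_2$, whence $-\frac{1}{N}\Omega=\Omega_2$, i.e. $\Omega=-N\Omega_2$. Since $\Omega=\tilde{\Omega}_2-\sum_{n=1}^{N-1}I_n\zeta_n$, the verification splits into controlling the behavior of $\tilde{\Omega}_2$ and of the $\zeta_n$ separately, at the finite ramification points, at $\infty^\pm$, and along the cycles $c_k$.

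First I would check property (i), holomorphy together with the prescribed expansion at $\infty^\pm$. By Lemma \ref{tildeomega2lemma} (where the hypothesis $\b=0$ enters), $\tilde{\Omega}_2$ is holomorphic on $\mathcal{C}_{b,a}$ and has the expansion $\left(\mp\frac{N}{z^2}+O(1)\right)dz$ at $\infty^\pm$ in the standard chart $z=\frac{1}{\l}$; in particular its singular part is a pure double pole with \emph{no} residue. The differentials $\zeta_n$ from (\ref{zetandef}) are holomorphic on all of $\Sigma_{b,a}$, including $\infty^\pm$, where (since $\psi_n$ has degree at most $N-2$ while $\sqrt{\Delta_\l^2-4}\sim\a^{-N}\l^N$) they contribute only an $O(1)\,dz$ term. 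Hence $\Omega$ is holomorphic on $\mathcal{C}_{b,a}$ and $-\frac{1}{N}\Omega=\left(\pm\frac{1}{z^2}+O(1)\right)dz=\mp\left(-\frac{1}{z^2}+O(1)\right)dz$ at $\infty^\pm$, which is exactly the principal part required of $\Omega_2$ in Lemma \ref{abeldiffexist}(i) (the constant $f_1$ there being unconstrained). The consistency of the sign pattern $\mp$ on the two sheets is automatic, since $\tilde{\Omega}_2$ and each $\zeta_n$ are odd under the sheet-exchange $\iota$ and the $I_n$ are constants, so $\Omega$, and hence $-\frac{1}{N}\Omega$, is odd in the sense of Lemma \ref{omegaiji}(i).

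Next I would verify property (ii), the vanishing of the $c_k$-periods. Using (\ref{incnformel}), $\int_{c_k}\tilde{\Omega}_2=2\pi I_k$, while (\ref{psi}) gives $\int_{c_k}\zeta_n=2\pi\delta_{kn}$. Therefore $\int_{c_k}\Omega=2\pi I_k-\sum_{n=1}^{N-1}I_n\cdot 2\pi\delta_{kn}=2\pi I_k-2\pi I_k=0$ for every $1\le k\le N-1$, so $-\frac{1}{N}\Omega$ satisfies the normalization (\ref{omegainorm}). With (i) and (ii) established, the uniqueness clause of Lemma \ref{abeldiffexist} yields $-\frac{1}{N}\Omega=\Omega_2$.

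The only genuinely delicate point is the absence of a residue (a $\frac{1}{z}$ term) at $\infty^\pm$ in the expansion of $\tilde{\Omega}_2$: were such a term present, $-\frac{1}{N}\Omega$ would be a differential of the third kind and could not equal the second-kind differential $\Omega_2$. This is precisely the content of Lemma \ref{tildeomega2lemma}, and it is here that $\b=0$ is essential — it forces $\sum_j\l_j=0$ (equivalently $\mathrm{tr}\,L^\pm=0$) and, via (\ref{lambdadotlambda}), $\sum_k\dot{\l}_k=0$, which kills the $\frac{1}{\l}$ term in $\frac{\l\dot{\Delta}_\l}{\sqrt{\Delta_\l^2-4}}$. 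Granting Lemma \ref{tildeomega2lemma}, the rest is routine matching of principal parts and periods.
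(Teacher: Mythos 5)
Your proposal is correct and follows essentially the same route as the paper: both invoke the uniqueness clause of Lemma \ref{abeldiffexist} after checking holomorphy away from $\infty^\pm$ and the double-pole expansion there (via Lemma \ref{tildeomega2lemma} and the holomorphy of the $\zeta_n$), and then verify the vanishing of the $c_k$-periods by combining (\ref{incnformel}) with (\ref{psi}). Your added remarks on the vanishing residue at $\infty^\pm$ and the role of $\b=0$ correctly identify where that hypothesis enters, namely in the proof of Lemma \ref{tildeomega2lemma}.
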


\begin{proof}
In view of the uniqueness statement of Lemma \ref{abeldiffexist} it suffices to show that $\Omega$ is a meromorphic differential so that (i) $\Omega$ is holomorphic on $\Sigma_{b,a} \setminus \{ \infty^+, \infty^- \}$; (ii) when expressed in the standard chart $\l = \frac{1}{z}$ near $\infty^+$ and $\infty^-$, $\Omega$ has a Laurent expansion of the form $\Omega = \left( \mp \frac{N}{z^2} + O(1) \right) dz$; (iii) $\int_{c_k} \Omega = 0$ for any $\1N1$.
%\begin{itemize}
%	\item[(i)] $\Omega_2$ is holomorphic on $\Sigma_{b,a} \setminus \{ \infty^+, \infty^- \}$.
%	\item[(ii)] When expressed in the standard chart $\l = \frac{1}{z}$ near $\infty^+$ and $\infty^-$, the principal part of $\Omega_2$ is $\mp \frac{N}{z^2} dz$.
%	\item[(iii)] $\int_{c_k} \Omega_2 = 0$ for any $\1N1$.
%\end{itemize}

Statements (i) and (ii) follow from Lemma \ref{tildeomega2lemma} and the above mentioned fact that the $\zeta_n$'s are holomorphic differentials on $\Sigma_{b,a}$. To see that the normalization conditions are satisfied, we use the identity (\ref{incnformel}) and the normalization conditions (\ref{psi}) to conclude that for any $\1N1$,
	\[ \int_{c_k} \Omega = \int_{c_k} \tilde{\Omega}_2 - \sum_{n=1}^{N-1} I_n \int_{c_k} \zeta_n = 2 \pi I_k - \sum_{n=1}^{N-1} I_n 2 \pi \delta_{kn} = 0,
\]
proving (iii).
\end{proof}

%The differential $\Omega_2$ is used to describe the frequencies $(\omega_n)_{1 \leq n \leq N-1}$ of the Toda lattice.  
Recall that the Toda frequencies are given by
\begin{equation} \label{freqdef}
	\omega_n = \partial_{I_n} H_\a \qquad (1 \leq n \leq N-1),
\end{equation}
%An important role in perturbation theory is played by the frequency map
%\begin{equation}
%\begin{array}{ccll}
 %\omega(\cdot;\a): & \R_{\geq 0}^{N-1} & \to & \R^{N-1} \\
 %& I & \mapsto & \omega(I;\a) = (\omega_k(I;\a))_{\1N1} := \partial_I \Hba.
%\end{array}
%\end{equation}
where $H_\a = H_\a(I_1, \ldots, I_{N-1})$ is, up to an additive constant given in Theorem \ref{sumthm}, the Hamiltonian of the Toda lattice expressed in terms of the action variables $I = (I_1, \ldots, I_{N-1})$ and the value $\a$ of the Casimir $C_2$. In particular, it follows that the frequencies $\omega_n$ ($1 \leq n \leq N-1$) are independent of $\b$. Without loss of generality we can therefore assume that $\b=0$. 
Expressing the element $(b,a) \in \M^\bullet$ with $\b=0$ in terms of the Birkhoff coordinates $(x,y)$ of Theorem \ref{sumthm} we may view $\Delta_\l$ as an analytic function of $\l$, $\a$, and $(x,y)$. As $\Delta$ is a spectral invariant, it is indeed an analytic function of $\l$, $\a$, and the action variables alone. Consider its gradient with respect to $I = (I_n)_{1 \leq n \leq N-1}$ and introduce the one-forms
\begin{equation} \label{etandef}
  \eta_n := -\frac{\partial_{I_n} \Delta}{\sqrt{\Delta^2 - 4}} \, d\l.
\end{equation}
These are holomorphic one-forms on $\Sigma_{b,a}$ except possibly at $\infty^\pm$. As
\begin{eqnarray*}
	\Delta_\l & = & 2 + \a^{-N} \prod_{j=1}^N (\l - \l_j^+) \\
	& = & 2 + \a^{-N} \l^N + \a^{-N} \! \left( \! \sum_{j=1}^N \l_j^+ \right) \l^{N-1} + O(\l^{N-2})
\end{eqnarray*}
and $\sum_{j=1}^N \l_j^+ = \sum_{n=1}^N b_n = -N \b = 0$ by assumption, $\partial_{I_n} \Delta$ is a polynomial in $\l$ of degree at most $N-2$, and hence $\eta_n$ is holomorphic at $\infty^+$ and $\infty^-$ as well.

In view of the definition of $\eta_n$,
	\[ \eta_n = \partial_{I_n} \left( \textrm{arcosh } \frac{\Delta_\l}{2} \right) \, d\l.
\]
To analyze $\eta_n$ near $\infty^+$ and $\infty^-$, we need to compute the asymptotic expansion of arcosh $\frac{\Delta_\l}{2}$ for $\l > \l_{2N}$ large. Denote by arcosh $x$ the positive branch of arcosh, i.e. arcosh $x > 0 \; \forall \, x > 1$. In Appendix \ref{asymtotic} we prove

\begin{prop} \label{arcoshlemma}
For any $(b,a) \in \Mba$, arcosh $\frac{\Delta_\l}{2}$ admits the asymptotic expansion ($\l \in \R, \l \to \infty$)
	\[ \textrm{arcosh } \frac{\Delta_\l}{2} = N \log \l - N \log \a + \frac{N \b}{\l} - \frac{H_{Toda}}{\l^2} + O(\l^{-3}).
\]
\end{prop}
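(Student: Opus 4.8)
The plan is to reduce the asymptotic expansion of $\textrm{arcosh}\,\frac{\Delta_\l}{2}$ to an expansion of $\Delta_\l$ itself as $\l\to\infty$, and then expand the arcosh function using the known large-argument behavior of $\textrm{arcosh}$. First I would recall from the product representation $\Delta_\l = 2 + \a^{-N}\prod_{j=1}^N(\l-\l_j^+)$ that $\Delta_\l$ is a monic (up to the factor $\a^{-N}$) polynomial of degree $N$ in $\l$, and I would expand the product to the first several orders:
\[
  \Delta_\l = \a^{-N}\l^N\Bigl(1 - \frac{\sigma_1}{\l} + \frac{\sigma_2}{\l^2} - \cdots\Bigr) + 2,
\]
where $\sigma_1 = \sum_j \l_j^+$ and $\sigma_2 = \sum_{i<j}\l_i^+\l_j^+$ are the elementary symmetric functions of the $(\l_j^+)$. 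The additive constant $2$ is lower order and will only contribute at order $O(\l^{-N})$, hence is negligible for $N\geq 3$ (and must be tracked with slight care for $N=2$, but still does not affect the $O(\l^{-2})$ terms).

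Next I would use the identity $\textrm{arcosh}\,\frac{\Delta_\l}{2} = \log\bigl(\frac{\Delta_\l}{2} + \sqrt{(\Delta_\l/2)^2 - 1}\bigr)$ for the positive branch. For large $\l$ one has $\frac{\Delta_\l}{2}\to+\infty$, so $\sqrt{(\Delta_\l/2)^2-1} = \frac{\Delta_\l}{2}\sqrt{1 - 4/\Delta_\l^2} = \frac{\Delta_\l}{2}\bigl(1 - \frac{2}{\Delta_\l^2} + O(\Delta_\l^{-4})\bigr)$, whence
\[
  \textrm{arcosh}\,\frac{\Delta_\l}{2} = \log\Delta_\l - \frac{2}{\Delta_\l^2} + O(\Delta_\l^{-4})
  = \log\Delta_\l + O(\l^{-2N}).
\]
So the correction coming from the arcosh (as opposed to the plain logarithm) is of order $\l^{-2N}$ and does not enter the expansion up to $O(\l^{-3})$. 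It therefore suffices to expand $\log\Delta_\l$. Writing $\log\Delta_\l = -N\log\a + N\log\l + \log\bigl(1 - \frac{\sigma_1}{\l} + \frac{\sigma_2}{\l^2} - \cdots\bigr)$ and using $\log(1+x) = x - \frac{x^2}{2} + \cdots$, I would read off the coefficients of $\l^{-1}$ and $\l^{-2}$ in terms of the symmetric functions $\sigma_1,\sigma_2$.

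The main obstacle — really the only substantive point — is to identify those symmetric-function coefficients with the quantities $N\b$ and $-H_{Toda}$ claimed in the statement. For the $\l^{-1}$ coefficient, $-\sigma_1 = -\sum_{j=1}^N\l_j^+ = -\sum_{n=1}^N b_n = N\b$ by the trace identity for $L^+$ and the definition $C_1 = -\frac1N\sum b_n = \b$; this gives the $\frac{N\b}{\l}$ term directly. For the $\l^{-2}$ coefficient the $\log(1+x)$ expansion yields $\sigma_2 - \frac12\sigma_1^2 = -\frac12\bigl(\sum_j (\l_j^+)^2\bigr)$ (Newton's identity), i.e. $-\frac12\,\mathrm{tr}\,(L^+)^2$. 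I would then compute $\mathrm{tr}\,(L^+)^2 = \sum_n b_n^2 + 2\sum_n a_n^2 = 2H_{Toda}$ from the Jacobi structure (\ref{jacobi}) and the form (\ref{htodaflaschka}) of the Hamiltonian, giving the coefficient $-H_{Toda}$ and hence the claimed $-\frac{H_{Toda}}{\l^2}$ term. Finally I would note that all these manipulations hold for $\l$ real with $\l>\l_{2N}$, where $\Delta_\l>2$ so the positive branch of arcosh is the relevant one, completing the expansion to $O(\l^{-3})$.
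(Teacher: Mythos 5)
Your argument is correct and reaches the stated expansion, but by a genuinely different route from the paper. The paper does not expand the polynomial $\Delta_\l$ at all: it constructs the expanding Floquet solution $u(\cdot,\l)$ of the difference equation (\ref{diff}), shows $\textrm{arcosh}\,\frac{\Delta_\l}{2}=\log u(N,\l)=\sum_{n=0}^{N-1}\log\phi(n,\l)$ with $\phi(n)=u(n+1)/u(n)$, and extracts the coefficients from a power-series ansatz for $\phi$ in the discrete Riccati equation $a_n\phi(n)\phi(n-1)+(b_n-\l)\phi(n-1)+a_{n-1}=0$, which yields $\phi_1(n)=-a_{n-1}^2$ and hence the $-H_{Toda}/\l^2$ term after summing the logarithms. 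Your route --- expanding $\Delta_\l=2+\a^{-N}\prod_j(\l-\l_j^+)$ in elementary symmetric functions, passing from arcosh to $\log\Delta_\l$ up to $O(\l^{-2N})$, and identifying $\sigma_1=\textrm{tr}\,L^+=-N\b$ and $\sigma_2-\frac{1}{2}\sigma_1^2=-\frac{1}{2}\textrm{tr}\,(L^+)^2=-H_{Toda}$ via Newton's identities --- is shorter and needs only the trace identities for $L^+$, whereas the paper's Riccati scheme produces all higher-order coefficients recursively as local expressions in $(b,a)$. (A minor slip: the arcosh-versus-log correction is $-1/\Delta_\l^2$, not $-2/\Delta_\l^2$; either way it is $O(\l^{-2N})$ and harmless.)

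The one place you must be careful is $N=2$, which the paper admits. There your parenthetical claim that the additive constant $2$ ``does not affect the $O(\l^{-2})$ terms'' is false: after factoring out $\a^{-2}\l^2$ it contributes $2\a^2/\l^2$, exactly at the order you need. At the same time, for $N=2$ the corner and off-diagonal entries of $L^+$ coincide, so $\textrm{tr}\,(L^+)^2=b_1^2+b_2^2+2(a_1+a_2)^2=2H_{Toda}+4a_1a_2=2H_{Toda}+4\a^2$, not $2H_{Toda}$. These two discrepancies cancel exactly, so the stated expansion still holds, but as written your proof relies on two false intermediate assertions in this case; either treat $N=2$ separately or restrict the symmetric-function computation to $N\geq 3$.
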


Proposition \ref{arcoshlemma} leads to the following asymptotic expansion of $\eta_n$
\begin{equation} \label{etaomega}
  \eta_n = \pm \left( \frac{\omega_n}{\l^2} + O(\l^{-3}) \right) d\l
\end{equation}
with respect to the local coordinate $\l$ near $\infty^\pm$.

Finally, we show that for any $1 \leq n \leq N-1$, the holomorphic one-form $\eta_n$ coincides with the one-form $\zeta_n$ introduced earlier.

\begin{lemma} \label{etazetalemma}
For any $(b,a) \in \M^\bullet$ with $\b=0$ and any $1 \leq n \leq N-1$,
\begin{equation} \label{etazeta}
  \eta_n = \zeta_n.
\end{equation}
\end{lemma}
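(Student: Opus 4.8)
The plan is to recognize $\eta_n$ and $\zeta_n$ as holomorphic differentials on the compact genus-$(N-1)$ surface $\Sigma_{b,a}$ having identical $c_k$-periods, and then to appeal to the standard fact that a holomorphic differential all of whose $c_k$-periods vanish is identically zero (the cycles $c_1,\ldots,c_{N-1}$ forming the $a$-part of a canonical homology basis). Both forms are holomorphic on all of $\Sigma_{b,a}$: the form $\zeta_n$ because $\psi_n$ has degree $\le N-2$, and $\eta_n$ because $\partial_{I_n}\Delta$ has degree $\le N-2$, as established just above. Since $\zeta_n$ satisfies $\frac1{2\pi}\int_{c_k}\zeta_n=\delta_{kn}$ by its defining normalization (\ref{psi}), the whole lemma reduces to the single identity
\[
  \frac1{2\pi}\int_{c_k}\eta_n=\delta_{kn},\qquad 1\le k\le N-1.
\]
Throughout I keep $(b,a)\in\M^\bullet$ with $\b=0$ fixed and use that, by Theorem \ref{sumthm} and the spectral invariance of $\Delta$, the discriminant $\Delta_\l$ is a real-analytic function of $\l$, $\a$ and the actions $I$, which legitimizes the derivatives $\partial_{I_n}$ below.

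For the computation I would exploit the representation $\eta_n=\partial_{I_n}\big(\mathrm{arcosh}\,\tfrac{\Delta_\l}{2}\big)\,d\l$. On the $k$-th gap $(\l_{2k},\l_{2k+1})$ one has $\epsilon_k\Delta_\l>2$ for a fixed sign $\epsilon_k\in\{\pm1\}$, so $g_k(\l,I):=\mathrm{arcosh}(\epsilon_k\Delta_\l/2)$ is real, nonnegative, and vanishes at both endpoints, where $\epsilon_k\Delta_\l/2=1$. Reducing the $c_k$-period to twice the gap integral exactly as in the passage from (\ref{psi}) to (\ref{psiproperty}) gives $\int_{c_k}\eta_n=-2\epsilon_k\int_{\l_{2k}}^{\l_{2k+1}}\partial_{I_n}g_k\,d\l$, using $\partial_{I_n}\Delta/\sqrt[+]{\Delta_\l^2-4}=\epsilon_k\,\partial_{I_n}g_k$. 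Because $g_k$ vanishes at the two (moving) endpoints, the boundary terms in the Leibniz rule drop out and I may pull $\partial_{I_n}$ through the integral, obtaining $\int_{c_k}\eta_n=-2\epsilon_k\,\partial_{I_n}G_k$ with $G_k(I):=\int_{\l_{2k}}^{\l_{2k+1}}g_k\,d\l$. Finally I would identify $G_k$ with the action: integrating (\ref{actsba}) by parts over the $k$-th gap and again using $g_k(\l_{2k})=g_k(\l_{2k+1})=0$ yields $I_k=\frac1\pi\int_{\l_{2k}}^{\l_{2k+1}}\l\,\epsilon_k\,\partial_\l g_k\,d\l=-\frac{\epsilon_k}\pi G_k$, whence $\partial_{I_n}G_k=-\epsilon_k\pi\,\delta_{kn}$ and therefore $\int_{c_k}\eta_n=2\pi\,\delta_{kn}$, as required.

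The genuinely substantive point — and the reason the argument closes so cleanly — is the vanishing of $\mathrm{arcosh}(\Delta_\l/2)$ at the gap endpoints: it is precisely what annihilates the boundary contributions both when differentiating $G_k$ in $I_n$ and when integrating (\ref{actsba}) by parts, so that the $I$-dependence of the moving integration limits becomes harmless. The remaining work is routine but must be done carefully: fixing the positive branch $\sqrt[+]{\Delta_\l^2-4}$ on the gaps and checking that the reduction of the $c_k$-period to twice the gap integral carries the same orientation used for $\zeta_n$, and tracking the signs $\epsilon_k$ (which cancel in pairs since $\epsilon_k^2=1$) so that the normalization comes out to $+\delta_{kn}$ and not $-\delta_{kn}$. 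One should also note that $\partial_{I_n}g_k$ has only an integrable $(\l-\l_{2k})^{-1/2}$-type singularity at each endpoint — the roots $\l_j$ being simple on $\M^\bullet$ — so that all the integrals and differentiations above are well defined.
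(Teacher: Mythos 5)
Your argument is correct and is essentially the paper's proof: the paper likewise reduces the lemma, via holomorphy of $\eta_n$ and the uniqueness of the normalized form $\zeta_n$, to the identity $\frac{1}{2\pi}\int_{c_k}\eta_n=\delta_{nk}$, writes the integrand as $-\partial_{I_n}\kappa$ for a branch of $\log\bigl((-1)^{N-k}(\Delta_\l-\sqrt[c]{\Delta_\l^2-4})\bigr)$ (your $\mathrm{arcosh}(\epsilon_k\Delta_\l/2)$ up to sign and an additive constant), and then identifies $\int_{\pi(c_k)}\kappa\,d\l$ with $2\pi I_k$ by the same integration by parts using $\partial_\l\kappa=-\dot\Delta/\sqrt[c]{\Delta^2-4}$. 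The only (cosmetic) difference is that the paper keeps the computation on the closed complex contour $\pi(c_k)$, which sidesteps the moving endpoints, the integrable endpoint singularities of $\partial_{I_n}g_k$, and the $\epsilon_k$ sign bookkeeping that your reduction to the real gap requires (and where your intermediate formula $I_k=-\epsilon_k G_k/\pi$ has a sign that still needs to be reconciled with $I_k>0$, $G_k\ge 0$).
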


\begin{proof}
Let $1 \leq n \leq N-1$ be fixed. We already know that $\eta_n$ is a holomorphic one-form on $\Sigma_{b,a}$. To show that it coincides with $\zeta_n$ it suffices to prove that it satisfies the normalizing conditions (\ref{psi}),
	\[ \frac{1}{2\pi} \int_{c_k} \eta_n = \delta_{nk} \qquad \forall \, \1N1
\]
or
\begin{equation} \label{etazetaid}
  \int_{\pi(c_k)} \frac{\partial_{I_n} \Delta_\l}{\sqrt[c]{\Delta_\l^2 - 4}} \, d\l = -2 \pi \delta_{nk} \qquad \forall \, \1N1,
\end{equation}
where $\pi: \Sigma_{b,a} \to \C \cup \{ \infty \}$ is the projection introduced at the beginning of section \ref{krichever}. Note that the principal branch of the logarithm
\begin{displaymath}
  \kappa(\l) = \log \left( (-1)^{N-k} \left( \Delta_\l - \sqrt[c]{\Delta_\l^2 - 4} \right) \right)
\end{displaymath}
is well-defined for $\l$ near $\pi(c_k)$ and depends analytically on $(I_n)_{1 \leq n \leq N-1}$. By a straightworward computation, for $\l$ near $\pi(c_k)$
	\[ \partial_{I_n} \kappa = \frac{\partial_{I_n} \Delta - \frac{\Delta \, \partial_{I_n} \Delta}{\sqrt[c]{\Delta^2 - 4}}}{\Delta - \sqrt[c]{\Delta^2 - 4}} = - \frac{\partial_{I_n} \Delta}{\sqrt[c]{\Delta^2 - 4}}.
\]
Hence the left hand side of the identity (\ref{etazetaid}) can be computed to be
\begin{equation} \label{etazetaid2}
	\int_{\pi(c_k)} \frac{\partial_{I_n} \Delta}{\sqrt[c]{\Delta^2 - 4}} \, d\l = - \int_{\pi(c_k)} \partial_{I_n} \kappa(\l) d\l = -\partial_{I_n} \int_{\pi(c_k)} \kappa(\l) d\l.
\end{equation}
On the other hand, for $\l$ near $\pi(c_k)$,
	\[ \partial_\l \kappa = \frac{\dot{\Delta} - \frac{\Delta \, \dot{\Delta}}{\sqrt[c]{\Delta^2 - 4}}}{\Delta - \sqrt[c]{\Delta^2 - 4}} = - \frac{\dot{\Delta}}{\sqrt[c]{\Delta^2 - 4}}
\]
and thus, by integration by parts,
	\[ 2 \pi I_k = \int_{\pi(c_k)} \l \frac{\dot{\Delta}}{\sqrt[c]{\Delta^2 - 4}} = \int_{\pi(c_k)} \l (-\partial_\l \kappa) = \int_{\pi(c_k)} \kappa(\l) d\l.
\]
Combined with (\ref{etazetaid2}), we get the claimed identity (\ref{etazetaid}).
\end{proof}

\begin{theorem} \label{freqintrepr}
For any $(b,a) \in \M^\bullet$ and any $1 \leq n \leq N-1$, the Toda frequency $\omega_n = \partial_{I_n} H_\a$ satisfies
\begin{equation} \label{omegakformel}
  \omega_n = \frac{i}{2} \int_{d_n} \Omega_2.
\end{equation}
\end{theorem}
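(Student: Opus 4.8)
The plan is to read off $\omega_n$ from the asymptotics of $\zeta_n = \eta_n$ at $\infty^\pm$ and to convert a $d_n$-period into a residue via the Riemann bilinear relations on the compact surface $\Sigma_{b,a}$, applied to the holomorphic differential $\zeta_n$ and the second-kind differential $\Omega_2$. Since the frequencies are independent of $\b$, and since the shift $b \mapsto b - \b\mathbf{1}$ translates all branch points $(\l_i)$ by $-\b$ (so that, after the change of variable $\l \mapsto \l - \b$, the normalized differential $\Omega_2$ and its $d_n$-periods are unchanged by the uniqueness in Lemma \ref{abeldiffexist}), both sides of \eqref{omegakformel} are $\b$-invariant and it suffices to treat $\b = 0$. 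I therefore assume $\b = 0$, so that Lemma \ref{etazetalemma} gives $\eta_n = \zeta_n$ and the asymptotics \eqref{etaomega} hold.

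The cycles $(c_k, d_k)_{1 \leq k \leq N-1}$ form a canonical homology basis of $\Sigma_{b,a}$: the $c_k$ are mutually disjoint, the $d_k$ are pairwise disjoint, and $c_n \circ d_k = \delta_{nk}$. Applying the Riemann bilinear relation to the holomorphic $\zeta_n$ and to $\Omega_2$ (which by Lemma \ref{abeldiffexist} has double poles without residue at $\infty^\pm$, hence is of the second kind) yields
\[
  \sum_{k=1}^{N-1}\left( \int_{c_k}\zeta_n \int_{d_k}\Omega_2 - \int_{d_k}\zeta_n \int_{c_k}\Omega_2 \right) = 2\pi i \sum_{p \in \{\infty^+,\infty^-\}} \textrm{Res}_p\!\left( f_n\,\Omega_2 \right),
\]
where $f_n$ is a local primitive of $\zeta_n$ (single-valued near each $\infty^\pm$ because $\zeta_n$ is holomorphic there). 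By the normalization \eqref{psi} one has $\int_{c_k}\zeta_n = 2\pi\delta_{kn}$, and by \eqref{omegainorm} one has $\int_{c_k}\Omega_2 = 0$; hence the left-hand side collapses to $2\pi \int_{d_n}\Omega_2$.

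It then remains to compute the residues. In the standard chart $z = 1/\l$ the expansion \eqref{etaomega} gives $\zeta_n = \eta_n = \mp(\omega_n + O(z))\,dz$ at $\infty^\pm$, so a primitive is $f_n = \mp\omega_n z + \textrm{const} + O(z^2)$, while by Lemma \ref{abeldiffexist} one has $\Omega_2 = \mp\big(-z^{-2} + O(1)\big)\,dz$. Multiplying, the constant term of $f_n$ meets only the $z^{-2}$ part of $\Omega_2$ and contributes nothing, while the leading terms produce a $z^{-1}$-coefficient equal to $-\omega_n$ at each of $\infty^+$ and $\infty^-$; the residue sum is thus $-2\omega_n$. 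Substituting back gives $2\pi \int_{d_n}\Omega_2 = -4\pi i\,\omega_n$, that is $\omega_n = -\tfrac{1}{2i}\int_{d_n}\Omega_2 = \tfrac{i}{2}\int_{d_n}\Omega_2$, as claimed.

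The hard part will be the sign bookkeeping rather than any deep difficulty: the $\mp$-conventions in the expansions of $\zeta_n$ and of $\Omega_2$ are opposite on the two sheets, and the orientation convention implicit in the bilinear relation (fixed by $c_n \circ d_k = \delta_{nk}$) must be matched consistently with that of the residue theorem so that the factor $i/2$ emerges with the correct sign. One should also record explicitly that $(c_k, d_k)$ is genuinely a canonical basis and that $\Omega_2$ has vanishing residues at $\infty^\pm$, which is what legitimizes the use of the second-kind bilinear relation in the first place.
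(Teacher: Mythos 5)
Your proof is correct and follows essentially the same route as the paper: reduce to $\b=0$, identify $\zeta_n=\eta_n$ with asymptotics $\mp(\omega_n+O(z))\,dz$ at $\infty^\pm$, and apply the Riemann bilinear relations together with the normalizations $\int_{c_k}\zeta_n=2\pi\delta_{nk}$ and $\int_{c_k}\Omega_2=0$. The only (harmless) difference is that you apply the bilinear relation directly to the pair $(\zeta_n,\Omega_2)$, whereas the paper applies it to $(\zeta_n,\Omega)$ with $\Omega=-N\Omega_2$ from Lemma \ref{domega2formula}; your residue bookkeeping at $\infty^\pm$ and the resulting identity $\int_{d_n}\Omega_2=-2i\,\omega_n$ agree with the paper's.
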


%The proof of Prosition \ref{freqintrepr} is based on the following lemma on the computation of the $b$-periods of certain differentials on a compact Riemann surface, which can be found e.g. in Appendix A of \cite{teschl}.

\begin{proof}
To prove (\ref{omegakformel}) we use the Riemann bilinear relations. Fix $1 \leq n \leq N-1$ and $(b,a) \in \M$. We have already observed that $\omega_n$ does not depend on $\b$. Without loss of generality we therefore can assume that $\b=0$ for the given element $(b,a) \in \M^\bullet$. Combining (\ref{etaomega}) with Lemma \ref{etazetalemma} we conclude that for $\l$ near $\infty^\pm$
	\[ \zeta_n = \pm \left( \frac{\omega_n}{\l^2} + O(\l^{-3}) \right) d\l.
\]
When expressed in the standard chart $\l = \frac{1}{z}$, we have $\zeta_n = f_n^\pm(z) dz$ for $z$ near $0$ with
\begin{equation} \label{fnzomega}
  f_n^\pm(z) = \mp \omega_n + O(z).
\end{equation}

By the Riemann bilinear relations, applied to $\Omega$ and $\zeta_n$, we then get (cf. e.g. \cite{grha}, p. 241)
	\[ \sum_{k=1}^{N-1} \left( \int_{c_k} \zeta_n \int_{d_k} \Omega - \int_{d_k} \zeta_n \int_{c_k} \Omega \right) = 2 \pi i \left( -N f_n^+(0) + N f_n^-(0) \right) = 4 \pi i N \omega_n.
\]
Using that $\int_{c_k} \Omega = 0$ and $\int_{c_k} \zeta_n = 2 \pi \delta_{nk}$ for any $\1N1$ the left hand side of the above identity equals $2 \pi \int_{d_n} \Omega$. Hence
\begin{equation} \label{intdnomega2}
	\int_{d_n} \Omega = 2 N i \omega_n.
\end{equation}
By Lemma \ref{domega2formula} and the assumption $\b=0$, the one-forms $\Omega$ and $\Omega_2$ are related by $\Omega = -N \Omega_2$. Together with (\ref{intdnomega2}) 
%\begin{equation} \label{gpmmpn}
%	g^\pm = \mp N.
%\end{equation}
the claimed identity then follows.
\end{proof} %Note that this application of the Riemann bilinear relations follows from a slightly more general statement by Teschl \cite{teschl}.
We remark that in the seventies, Its and Matveev have obtained a formula for the frequencies of the KdV equation similar to (\ref{omegakformel}) - see e.g. \cite{kapo} for a detailed exposition. For the Toda lattice, computations similar to the ones in the proof of Theorem \ref{freqintrepr} can be found in \cite{teschl}.

Finally we note that $\int_{d_n} \Omega$ can be written as
	\[ \int_{d_n} \Omega = 2 \int_{\l_1}^{\l_{2n}} \frac{\l \dot{\Delta}_{\l}}{\sqrt[c]{\Delta_{\l - i 0}^2 - 4}} \, d\l - 2 \sum_{k=1}^{N-1} I_k \int_{\l_1}^{\l_{2n}} \frac{\psi_k(\l)}{\sqrt[c]{\Delta_{\l - i 0}^2 - 4}} \, d\l.
\]
As $\M^\bullet$ is dense in $\M$ it then follows that for any $(b,a) \in \M$
	\[ \omega(I;\a) = \frac{1}{N i} \left( \int_{\l_1}^{\l_{2n}} \! \frac{\l \dot{\Delta}_{\l}}{\sqrt[c]{\Delta_{\l \!-\! i 0}^2 \!-\! 4}} \, d\l - \sum_{k=1}^{N-1} I_k \int_{\l_1}^{\l_{2n}} \! \frac{\psi_k(\l)}{\sqrt[c]{\Delta_{\l \!-\! i 0}^2 \!-\! 4}} \, d\l \right) \Bigg|_{(b + \b 1_N, a)},
\]
where $I = I(b,a)$ is given by (\ref{actsba}), $\a = \left( \prod_{i=1}^N a_i \right)^{1/N}$, $\b = -\frac{1}{N} \sum_{k=1}^N b_k$, and $1_N \in \R^N$ is the vector $1_N = (1, \ldots, 1)$.

\section{Proof of Theorem \ref{nekhtodatheorem}} \label{mainsection}

In this section we prove Theorem \ref{nekhtodatheorem}. The main ingredients are the Birkhoff normal form of the Toda lattice (Theorem \ref{bnf4prop} and Corollary \ref{strictconvexlocal}) and Krichever's theorem (Theorem \ref{krichevertheorem}).

We begin by computing the components of the period map $\mathcal{F}$, defined by (\ref{mathcalfdef}) in section \ref{krichever}, for sequences $\l_1 < \ldots < \l_{2N}$ given by the spectrum of the $2N \times 2N$-Jacobi matrix $L^+((b,b),(a,a))$ with $(b,a) \in \M^\bullet$. To compute the period $\int_{d_n} \Omega_1$ of $\Omega_1$ we need

\begin{lemma} \label{omega1lemma}
For any $(b,a) \in \M^\bullet$
\begin{equation} \label{omega1formel}
	\Omega_1 = -\frac{1}{N} \frac{\dot{\Delta}}{\sqrt{\Delta^2 - 4}} d\l.
\end{equation}
\end{lemma}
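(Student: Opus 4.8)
The plan is to identify $\Omega_1$ by appealing to the uniqueness statement of Lemma \ref{abeldiffexist}. Recall that $\Omega_1$ is the \emph{unique} Abelian differential that is holomorphic on $\Sigma_E \setminus \{\infty^+, \infty^-\}$, has the prescribed simple-pole asymptotics $\Omega_1 = \mp(1/\l + O(1/\l^2))\,d\l$ at $\infty^\pm$, and satisfies the normalization $\int_{c_k}\Omega_1 = 0$ for all $\1N1$. So it suffices to verify that the candidate differential
\begin{equation} \label{candidate}
	\tilde\Omega_1 := -\frac{1}{N} \frac{\dot{\Delta}_\l}{\sqrt{\Delta_\l^2 - 4}} \, d\l
\end{equation}
enjoys exactly these three properties; the equality \eqref{omega1formel} then follows from uniqueness.

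First I would check holomorphy away from infinity. On $\mathcal{C}_{b,a} \setminus E$ the radical $\sqrt{\Delta_\l^2 - 4}$ is nonvanishing, so $\tilde\Omega_1$ is manifestly holomorphic there. At a ramification point $\l_j$, one uses the product representation \eqref{delta2lrepr}, which shows $\Delta_\l^2 - 4$ has a \emph{simple} zero at each $\l_j$ (as the roots are simple for $(b,a) \in \M^\bullet$); in the local coordinate $z = (\l - \l_j)^{1/2}$ the factor $d\l = 2z\,dz$ cancels the $1/\sqrt{\Delta_\l^2 - 4} \sim \mathrm{const}/z$ singularity, so $\tilde\Omega_1$ extends holomorphically across each $\l_j$, provided $\dot\Delta_\l$ does not vanish there (and indeed $\dot\Delta$ need not vanish at the $\l_j$). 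Next, the asymptotics at $\infty^\pm$: from the degree count, $\dot\Delta_\l = N\a^{-N}\l^{N-1} + \ldots$ by \eqref{dotlrepr}, while $\sqrt{\Delta_\l^2 - 4} = \a^{-N}\l^N(1 + O(1/\l^2))$ from \eqref{delta2lrepr}, so the ratio behaves like $N/\l$ to leading order, and the prefactor $-1/N$ yields $\tilde\Omega_1 = \mp(1/\l + O(1/\l^2))\,d\l$, matching the required expansion in Lemma \ref{abeldiffexist} (the sign $\mp$ being fixed by the choice of sheet through the canonical determination of the square root).

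The normalization $\int_{c_k} \tilde\Omega_1 = 0$ is the step I expect to carry the real content. Here I would use that the action integrals and the counting of $\dot\Delta$ are already controlled by the preliminaries: the cycle $c_k$ projects to a loop enclosing the gap $(\l_{2k}, \l_{2k+1})$, and $\int_{c_k} \dot\Delta_\l / \sqrt{\Delta_\l^2 - 4}\,d\l$ can be evaluated by recognizing the integrand as (up to sign) the $\l$-derivative of $\mathrm{arcosh}(\Delta_\l/2)$, as in the computation in the proof of Lemma \ref{etazetalemma}. Since $\Delta_\l = \pm 2$ at the endpoints $\l_{2k}, \l_{2k+1}$ (periodic/antiperiodic spectrum), $\mathrm{arcosh}(\Delta_\l/2)$ returns to the same value around the closed cycle $c_k$, forcing the period to vanish; equivalently, the integral of an exact differential over a closed contour is zero. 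The main obstacle is bookkeeping the branch of the square root consistently along $c_k$ on the canonical sheet so that this antiderivative argument is valid and the contributions do not pick up a residue; once the single-valuedness of $\mathrm{arcosh}(\Delta_\l/2)\,d\l$ (modulo the correct $2\pi i$ accounting) is secured, all three conditions hold and uniqueness delivers $\Omega_1 = \tilde\Omega_1$.
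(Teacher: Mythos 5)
Your proposal is correct and follows essentially the same route as the paper: verify holomorphy on $\Sigma_{b,a}\setminus\{\infty^\pm\}$, match the prescribed pole asymptotics at $\infty^\pm$, check the $c_k$-normalization via the antiderivative $\textrm{arcosh}\,\frac{\Delta_\l}{2}$ (which vanishes at the gap endpoints where $\Delta_\l=\pm2$), and invoke the uniqueness of Lemma \ref{abeldiffexist}. The only cosmetic slips are that $\sqrt{\Delta_\l^2-4}=\a^{-N}\l^N(1+O(1/\l))$ rather than $1+O(1/\l^2)$ (irrelevant to the leading order you need), and that holomorphy at the ramification points does not require $\dot\Delta$ to be nonzero there.
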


\begin{proof}
Let $(b,a) \in \M^\bullet$ be given. Clearly, $-\frac{1}{N} \frac{\dot{\Delta}}{\sqrt{\Delta^2 - 4}} d\l$ is a holomorphic one-form on $\Sigma_{b,a} \setminus \{ \infty^+, \infty^- \}$. We claim that it has poles of order $1$ at $\infty^\pm$. Indeed, in the standard chart $z = \frac{1}{\l}$ near $\infty^\pm$ one has
\begin{eqnarray*}
	\frac{\dot{\Delta}}{\sqrt{\Delta^2 - 4}} d\l & = & N \, \frac{z^{-(N-1)} \prod_{i=1}^{N-1} (1 - \dot{\l}_i z)}{z^{-N} \sqrt{\prod_{i=1}^{2N} (1 - \l_i z)}} \frac{dz}{-z^2} \\
	& = & \mp \left( \frac{N}{z} + O(1) \right) dz.
\end{eqnarray*}
In view of the uniqueness statement of Lemma \ref{abeldiffexist}, it remains to show that the normalization conditions (\ref{omegainorm}) are satisfied. One computes for any $\1N1$
\begin{eqnarray*}
	\int_{c_k} \frac{\dot{\Delta}}{\sqrt{\Delta^2 - 4}} \, d\l & = & \int_{\pi(c_k)} \frac{\dot{\Delta}}{\sqrt[c]{\Delta^2 - 4}} \, d\l \\
	& = & 2 \, \textrm{arcosh } \left( (-1)^{N-k} \frac{\Delta_\l}{2} \right) \Bigg|_{\l = \l_{2k}}^{\l = \l_{2k+1}} \\ & = & 0.
\end{eqnarray*}
%	\[ \int_{c_k} \frac{\dot{\Delta}}{\sqrt{\Delta^2 - 4}} \, d\l = 2 \, \textrm{arcosh } \left( (-1)^{N-k} \frac{\Delta(\l)}{2} \right) \Bigg|_{\l = \l_{2k}}^{\l = \l_{2k+1}} = 0.
%\]
Now identity (\ref{omega1formel}) follows from the uniqueness statement of Lemma \ref{abeldiffexist}.
\end{proof}

\begin{cor} \label{intdnomega1}
For any $(b,a) \in \M^\bullet$
\begin{equation}
	\int_{d_n} \Omega_1 = \frac{2 \pi n}{N} i.
\end{equation}
\end{cor}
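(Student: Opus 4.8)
The plan is to combine the explicit formula of Lemma \ref{omega1lemma} with the observation that $\frac{\dot{\Delta}}{\sqrt{\Delta^2-4}}\,d\l$ is the logarithmic differential of the Floquet multiplier. By Lemma \ref{omega1lemma},
\[
\int_{d_n}\Omega_1 = -\frac{1}{N}\int_{d_n}\frac{\dot{\Delta}}{\sqrt{\Delta^2-4}}\,d\l,
\]
so it suffices to evaluate the period of $\frac{\dot{\Delta}}{\sqrt{\Delta^2-4}}\,d\l$ along $d_n$. First I would note that on $\mathcal{C}_{b,a}$ the function $\rho = \tfrac12\bigl(\Delta + \sqrt{\Delta^2-4}\bigr)$ (the larger Floquet multiplier) is single-valued and meromorphic, since $w=\sqrt{\Delta^2-4}$ is a global coordinate on the curve; it satisfies $\rho\cdot\tfrac12\bigl(\Delta-\sqrt{\Delta^2-4}\bigr) = 1$, so $\rho$ has neither zeros nor poles on the finite part of $\Sigma_{b,a}$ and, by the asymptotics $\Delta\sim\a^{-N}\l^N$, a pole of order $N$ at one point at infinity and a zero of order $N$ at the other. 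A direct computation gives $\frac{\dot{\Delta}}{\sqrt{\Delta^2-4}}\,d\l = d\log\rho = d\,\mathrm{arcosh}(\Delta/2)$, whence
\[
\int_{d_n}\frac{\dot{\Delta}}{\sqrt{\Delta^2-4}}\,d\l = \oint_{d_n} d\log\rho = 2\pi i\,m_n,
\]
where $m_n\in\Z$ is the winding number of $\rho$ around $0$ as $\l$ traverses $d_n$; this integer is well defined because $\rho$ is single-valued and $d_n$ is a closed cycle.

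The core of the argument is the evaluation $m_n = -n$. Here I would exploit the precise description of the cycle $d_n$: its projection is a convex counterclockwise loop meeting the real axis once in $(-\infty,\l_1)$ and once in the $n$-th gap $(\l_{2n},\l_{2n+1})$, with its lower part lying on the canonical sheet and its upper part on the other sheet. Since $\Omega_1$, and hence $d\log\rho = -N\Omega_1$, is odd under the sheet interchange $\iota$ (Lemma \ref{omegaiji}(i)), the contributions of the two halves of $d_n$ coincide, exactly as in the reduction of $\int_{c_k}$ to a single-sheet integral in the proof of Lemma \ref{omega1lemma}; this produces a factor $2$ and reduces the computation to tracking $\arg\rho$ along the real segment joining the two crossing points. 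On the bands $(\l_{2j-1},\l_{2j})$ one has $|\rho|=1$, while on the gaps $\rho$ is real; at each ramification point $\rho = \Delta/2 = \pm 1$, and by the spectral ordering (\ref{spectgeordnet}) the value of $\Delta$ at consecutive band edges alternates between $+2$ and $-2$. Consequently $\rho$ performs exactly one half-turn on the unit circle across each of the $n$ bands contained in the segment, all in the same rotational sense fixed by the canonical sheet, so that $\arg\rho$ changes by $n\pi$ along each half of $d_n$ and by $2\pi n$ in total, the orientation fixing the sign to be negative.

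The main obstacle is precisely this sign-and-winding bookkeeping: one must verify that each band contributes a half-turn rather than a full turn or none (which is where the alternation of $\Delta=\pm 2$ forced by (\ref{spectgeordnet}) enters), that all $n$ half-turns are coherently oriented on the canonical sheet, and that the orientation conventions for $d_n$ together with the choice of canonical branch $\sqrt[c]{\Delta^2-4}$ combine to give $m_n=-n$ rather than $+n$. Granting this, Lemma \ref{omega1lemma} yields
\[
\int_{d_n}\Omega_1 = -\frac{1}{N}\cdot 2\pi i\,(-n) = \frac{2\pi n}{N}\,i,
\]
as claimed. A useful consistency check, which I would carry out to confirm the sign, is the asymptotic behaviour recorded in Lemma \ref{omegaiji}(iii): the residue $\mp 1$ of $\Omega_1$ at $\infty^\pm$ fixes the direction in which $\arg\rho$ rotates and should agree with the count above.
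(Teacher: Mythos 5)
Your proposal is correct, and it is essentially the paper's computation in a different packaging. The paper likewise starts from Lemma \ref{omega1lemma}, collapses $d_n$ onto the doubled real segment from $\l_1$ to the $n$-th gap (using the normalization conditions (\ref{omegainorm}) to discard the gap portions), and then evaluates each band integral explicitly as an increment of $\frac{1}{i}\arcsin\bigl((-1)^{N-k}\Delta_\l/2\bigr)$, each of the $n$ bands contributing an increment of $\pi$ to the arcsine and hence $\frac{2\pi}{N}i$ to the period. You instead observe that $\frac{\dot\Delta}{\sqrt{\Delta^2-4}}\,d\l=d\log\rho$ for the single-valued, zero- and pole-free Floquet multiplier $\rho$ on $\mathcal{C}_{b,a}$, so that $\frac{N}{2\pi i}\int_{d_n}\Omega_1$ is a winding number, and you count half-turns of $\rho$ on the unit circle across the bands. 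The underlying mechanism is identical in both proofs: $\Delta$ runs monotonically from $\pm2$ to $\mp2$ on each band (monotonicity because all zeros of $\dot\Delta$ lie in the closed gaps), with the alternation of signs at consecutive band edges forced by (\ref{spectgeordnet}); your sign bookkeeping, which you rightly flag as the delicate point, is consistent with the paper's explicit evaluation. What your framing buys is that the integrality of $\frac{N}{2\pi i}\int_{d_n}\Omega_1$ is manifest before any computation, and the vanishing of the gap contributions is structural ($\rho$ is real and of constant sign on each gap, so $\arg\rho$ is locally constant there) rather than an appeal to (\ref{omegainorm}); what the paper's antiderivative buys is that the exact value, including the sign, drops out of a two-line evaluation with no separate orientation discussion.
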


\begin{proof}
By Lemma \ref{omega1lemma} and the normalization conditions (\ref{omegainorm}) one gets for any $1 \leq n \leq N-1$
	\[ \int_{d_n} \Omega_1 = -\frac{2}{N} \left( \int_{\l_1}^{\l_2} + \int_{\l_3}^{\l_4} + \ldots + \int_{\l_{2n-1}}^{\l_{2n}} \right) \frac{\dot{\Delta}_\l}{\sqrt[c]{\Delta^2_{\l - i0} - 4}} \, d\l.
\]
For any $\l_{2k-1} \leq \l \leq \l_{2k}$,
	\[ \frac{\dot{\Delta}_\l}{\sqrt[c]{\Delta^2_{\l - i0} - 4}} = \frac{(-1)^{N-k} \dot{\Delta}/2}{i \sqrt[+]{1 - (\Delta/2)^2}} \, d\l = \frac{1}{i} \, \partial_\l \left( \arcsin \left((-1)^{N-k} \frac{\Delta_\l}{2} \right) \right)
\]
and thus
	\[ \int_{d_n} \Omega_1 = -\frac{2}{Ni} \sum_{k=1}^n \arcsin \left( (-1)^{N-k} \frac{\Delta_\l}{2} \right) \Bigg|_{\l_{2k-1}}^{\l_{2k}} = \frac{2 n \pi}{N} \, i,
\]
%\begin{eqnarray*}
%	\int_{d_n} \Omega_1 & = & \frac{2}{N} \int_{\l_1}^{\l_{2k}} \frac{\dot{\Delta}}{\sqrt{\Delta^2 - 4}} d\l \\
%	& = & \frac{2}{N} \left( \int_{\l_1}^{\l_2} + \int_{\l_3}^{\l_4} + \ldots + \int_{\l_{2k-1}}^{\l_{2k}} \right) \frac{(\Delta/2)^\bullet}{i \sqrt{1 - (\Delta/2)^2}} \, d\l \\
%	& = & \frac{2}{Ni} \sum_{j=1}^k \arcsin \left( (-1)^{N-j} \frac{\Delta(\l)}{2} \right) \Bigg|_{\l_{2j-1}}^{\l_{2j}} \\
%	& = & \frac{2k\pi}{Ni}.
%\end{eqnarray*}
as claimed.
\end{proof}

To obtain the last two components of the period map $\mathcal{F}$ we compute the asymptotics of $\int_{\l_{2N}}^\l \Omega_1$ with respect to the local coordinate $\l$ near $\infty^\pm$. By Lemma \ref{omega1lemma}, we get, near $\infty^\pm$,
	\[ \int_{\l_{2N}}^\l \Omega_1 = -\frac{1}{N} \int_{\l_{2N}}^\l \frac{\dot{\Delta}_\l}{\sqrt{\Delta_\l^2 - 4}} \, d\l = \mp \frac{1}{N} \, \textrm{arcosh} \frac{\Delta_\l}{2}
\]
and hence by Proposition \ref{arcoshlemma},
\begin{equation} \label{intl2nlomega1}
	\int_{\l_{2N}}^\l \Omega_1 = \mp \left( \log \l - \log \a + \frac{\b}{\l} + O\left( \l^{-2} \right) \right),
\end{equation}
or, in the notation of section \ref{krichever}, $e_1 = \b$ and $e_0 = -\log \a$. Taking into account that by Theorem \ref{freqintrepr},
	\[ \int_{d_n} \Omega_2 = -2 i \omega_n \qquad \forall \, 1 \leq n \leq N-1
\]
and that $\omega_n = \partial_{I_n} H_\a$ we therefore have proved

\begin{prop} \label{mathcalflambdaprop}
For any $(b,a) \in \M^\bullet$
	\[ \mathcal{F} (\l_1 < \ldots < \l_{2N}) = \left( \left( \frac{2 n \pi i}{N}, -2i \, \partial_{I_n} H_\a \right)_{1 \leq n \leq N-1}, \b, -\log \a \right).
\]
\end{prop}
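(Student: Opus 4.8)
The plan is to identify each of the $2N$ components of $\mathcal{F}$ for the spectral data $\l_1 < \ldots < \l_{2N}$ of $L^+((b,b),(a,a))$ and show they equal the asserted values. Recalling the definition $\mathcal{F}(E) = ((U_i, V_i)_{1 \leq i \leq N-1}, e_1, e_0)$ from (\ref{mathcalfdef}), with $U_i = \int_{d_i} \Omega_1$ and $V_i = \int_{d_i} \Omega_2$, the proof reduces to four computations that have, in fact, already been carried out in the preceding results. **First I would** assemble the $d_n$-periods of $\Omega_1$: by Corollary \ref{intdnomega1} one has directly
\[
  U_n = \int_{d_n} \Omega_1 = \frac{2\pi n}{N} i,
\]
which gives the first slot of each pair. **Next,** for the $d_n$-periods of $\Omega_2$, I would invoke Theorem \ref{freqintrepr}, which states $\omega_n = \frac{i}{2} \int_{d_n} \Omega_2$, hence
\[
  V_n = \int_{d_n} \Omega_2 = -2i\,\omega_n = -2i\,\partial_{I_n} H_\a,
\]
using $\omega_n = \partial_{I_n} H_\a$ from (\ref{freqdef}). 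This yields the second slot of each pair.

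**For the last two components** $e_1$ and $e_0$, I would read them off the asymptotic expansion of $\int_{\l_{2N}}^\l \Omega_1$ near $\infty^\pm$. By Lemma \ref{omega1lemma}, $\Omega_1 = -\frac{1}{N}\frac{\dot\Delta}{\sqrt{\Delta^2-4}}\,d\l$, so this integral equals $\mp\frac{1}{N}\,\textrm{arcosh}\,\frac{\Delta_\l}{2}$; substituting the expansion from Proposition \ref{arcoshlemma} produces exactly (\ref{intl2nlomega1}),
\[
  \int_{\l_{2N}}^\l \Omega_1 = \mp\left( \log\l - \log\a + \frac{\b}{\l} + O(\l^{-2}) \right).
\]
Matching this against the defining expansion (\ref{intomega1}), $\int_{E_{2N}}^\l \Omega_1 = \mp(\log\l + e_0 + e_1/\l + \ldots)$, I identify $e_0 = -\log\a$ and $e_1 = \b$, as recorded in the text following (\ref{intl2nlomega1}). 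Collecting the four pieces gives precisely the claimed formula for $\mathcal{F}(\l_1 < \ldots < \l_{2N})$.

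**The main point to watch** is consistency of orientations and branch choices across the constituent lemmas: the canonical-sheet square root $\sqrt[c]{\cdot}$, the sign conventions on the cycles $d_n$ (fixed by $c_n \circ d_k = \delta_{nk}$ and the placement of $\pi_E(d_k)$ in the lower half-plane), and the $\mp$ signs attached to the two sheets $\infty^\pm$ must all be the same in Corollary \ref{intdnomega1}, Theorem \ref{freqintrepr}, and the arcosh asymptotics. **I expect no genuine obstacle here:** since each of the four identities is supplied by an earlier result, the proof is essentially a matter of correctly pairing the components and verifying that the assumption $(b,a) \in \M^\bullet$ (so that all gaps are open and $\Sigma_{b,a} = \Sigma_E$ is a smooth hyperelliptic surface, via (\ref{delta2lrepr})) holds throughout. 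The one calculation I would double-check explicitly is the identification $e_1 = \b$, since the expansion in Proposition \ref{arcoshlemma} carries a $+N\b/\l$ term that, after the factor $-\frac{1}{N}$ from Lemma \ref{omega1lemma}, must land with the correct sign to match the $e_1/\l$ coefficient in (\ref{intomega1}).
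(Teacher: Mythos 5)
Your proposal is correct and follows essentially the same route as the paper: the paper likewise assembles the four components from Corollary \ref{intdnomega1}, Theorem \ref{freqintrepr} together with (\ref{freqdef}), and the expansion (\ref{intl2nlomega1}) obtained by combining Lemma \ref{omega1lemma} with Proposition \ref{arcoshlemma}, reading off $e_1 = \b$ and $e_0 = -\log\a$ exactly as you do. Your sign check on the $e_1$ coefficient (the factor $-\frac{1}{N}$ against the $+N\b/\l$ term) matches the paper's computation.
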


Next we define the map
\begin{eqnarray*}
	\Lambda: \quad \R_{>0}^{N-1} \times \R \times \R_{>0} & \to & \R^{2N} \\
	((I_n)_{1 \leq n \leq N-1}, \b, \a) & \mapsto & (\l_n)_{1 \leq n \leq 2N}
\end{eqnarray*}
where $(\l_n)_{1 \leq n \leq 2N}$ is the spectrum of the Jacobi matrix $L^+((b,b),(a,a))$ and $(b,a) \in \M$ is determined by the Birkhoff map $\Phi$ (cf. Theorem \ref{sumthm}),
	\[ (b,a) = \Phi^{-1} ((\sqrt[+]{2 I_n},0)_{1 \leq n \leq N-1}, \b, \a).
\]
Clearly, $\Lambda$ is $1-1$ and as $I_n > 0$ for any $1 \leq n \leq N-1$, $\Lambda$ is smooth. On its image, the inverse $\Lambda^{-1}$ of $\Lambda$ can be explicitly computed. In view of (\ref{actsba}) and (\ref{intl2nlomega1}) one has for any $(\l_n)_{1 \leq n \leq 2N} \in \textrm{im} \, \Lambda$,
	\[ \Lambda^{-1} ((\l_n)_{1 \leq n \leq 2N}) = \left( \left( \frac{1}{2\pi} \int_{c_n} \l \, \frac{\dot{f}_\l}{\sqrt{f_\l^2 - 4}} \, d\l \right)_{1 \leq n \leq N-1}, e_1, \exp(-e_0) \right),
\]
where in view of (\ref{spectgeordnet}), $f_\l = f(\l)$ is given by $f(\l) = 2 + \prod_{i=1}^N (\l - \l_i^+)$ with $\l_N^+ = \l_{2N}$, $\l_{N-1}^+ = \l_{2N-3}$, $\l_{N-2}^+ = \l_{2N-4}$, $\l_{N-3}^+ = \l_{2N-7}$, \ldots, and $e_1$, $e_0$ are the coefficients in the expansion (\ref{intomega1}) of the differential form $\Omega_1$ on the Riemann surface $\Sigma_E$ with $E = (\l_1 < \ldots < \l_{2N})$ and scaling parameter $\sigma=1$. (Note that by Lemma \ref{abeldiffexist}, $\Omega_1$ is independent of the scaling factor $\sigma$.) Hence we have shown

\begin{prop} \label{Lambdaprop}
The map
	\[ \Lambda: \R_{>0}^{N-1} \times \R \times \R_{>0} \to \R^{2N}
\]
is a smooth embedding.
\end{prop}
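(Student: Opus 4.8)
The plan is to deduce the embedding property from the explicit formula for $\Lambda^{-1}$ already displayed above. It has been verified that $\Lambda$ is injective and smooth, so it remains only to check that $\Lambda$ is an immersion and a homeomorphism onto its image. I would obtain both at once by exhibiting a smooth map $G$, defined on an open neighborhood of $\textrm{im}\,\Lambda$ in $\R^{2N}$, that is a left inverse of $\Lambda$. Indeed, once $G \circ \Lambda = \textrm{id}$ with $G$ smooth, the chain rule gives $d_x G \circ d_x \Lambda = \textrm{id}$ for every $x$ in the domain, so $d_x \Lambda$ is injective and $\Lambda$ is an immersion; moreover $\Lambda^{-1} = G|_{\textrm{im}\,\Lambda}$ is then continuous, so $\Lambda$ is a homeomorphism onto its image. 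Together these two facts are exactly the assertion that $\Lambda$ is a smooth embedding.

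The candidate for $G$ is precisely the formula for $\Lambda^{-1}$ written above, now read as a map on the open chamber
\[ W := \{ (\l_n)_{1 \leq n \leq 2N} \in \R^{2N} : \l_1 < \ldots < \l_{2N} \}. \]
For $(\l_n) \in W$, form $f_\l = 2 + \prod_{i=1}^N (\l - \l_i^+)$, where the $\l_i^+$ are selected from the $\l_j$ according to the ordering pattern (\ref{spectgeordnet}), and let $\Omega_1$ be the Abelian differential on the associated surface $\Sigma_E$ (with $E = (\l_1 < \ldots < \l_{2N})$ and $\sigma = 1$) normalized by (\ref{omegainorm}); its asymptotic coefficients $e_0, e_1$ are then defined via the expansion (\ref{intomega1}). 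Setting
\[ G((\l_n)) := \left( \left( \frac{1}{2\pi} \int_{c_n} \l \, \frac{\dot{f}_\l}{\sqrt{f_\l^2 - 4}} \, d\l \right)_{1 \leq n \leq N-1}, e_1, \exp(-e_0) \right), \]
the computation leading to the displayed expression for $\Lambda^{-1}$ shows precisely that $G \circ \Lambda = \textrm{id}$ on $\R_{>0}^{N-1} \times \R \times \R_{>0}$, using (\ref{actsba}) for the first $N-1$ components and (\ref{intl2nlomega1}) (which identifies $e_1 = \b$ and $e_0 = -\log\a$) for the last two.

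The substance of the argument therefore reduces to verifying that $G$ is smooth on all of $W$, and this is the step I expect to require the most care. The period integrals depend smoothly on $(\l_n) \in W$ because throughout $W$ the $2N$ branch points remain distinct, so the cycles $c_n$ can be chosen locally independent of the parameters and the integrand $\l \dot{f}_\l / \sqrt{f_\l^2 - 4}$ stays uniformly bounded and depends real-analytically on $(\l_n)$ along each cycle; differentiation under the integral sign then applies. For the remaining two components I would note that $\Omega_1$ is uniquely characterized by the $N-1$ linear normalization conditions $\int_{c_k} \Omega_1 = 0$ of Lemma \ref{abeldiffexist}, whose coefficients are again smooth period integrals of the basis differentials $\l^j \, d\l / \sqrt{f_\l^2-4}$; hence the polynomial $\chi_1$, and with it the coefficients $e_0$ and $e_1$ of the expansion (\ref{intomega1}), depend smoothly on $(\l_n)$. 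Granting this smoothness, the left-inverse argument of the first paragraph is purely formal and completes the proof that $\Lambda$ is a smooth embedding.
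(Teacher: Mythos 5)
Your argument is essentially the paper's own proof: the paper likewise deduces the proposition from the explicit formula for $\Lambda^{-1}$ furnished by (\ref{actsba}) and (\ref{intl2nlomega1}), and you have merely spelled out the left-inverse reasoning (smooth $G$ on an open neighborhood of the image implies immersion plus homeomorphism onto the image) and the smooth dependence of the period integrals on the branch points, both of which the paper leaves implicit. No objections.
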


With these preparations we are now ready to prove Theorem \ref{nekhtodatheorem}.

\begin{proof}[Proof of Theorem \ref{nekhtodatheorem}]
In view of Proposition \ref{mathcalflambdaprop}, the composition $\mathcal{F} \circ \Lambda: \R_{>0}^{N\!-\!1} \times \R \times \R_{>0} \to \R^{2N}$ is given by
\begin{equation} \label{FcircL}
	\mathcal{F} \circ \Lambda \left( (I_n)_{1 \leq n \leq N-1}, \b, \a \right) = \left( \left( \frac{2 \pi n i}{N}, -2i  \, \partial_{I_n} H_\a \right)_{1 \leq n \leq N-1}, \, \b, \, -\log \a \right).
\end{equation}
By Theorem \ref{krichevertheorem}, $\mathcal{F}$ is a local diffeomorphism, and by Proposition \ref{Lambdaprop}, $\Lambda$ is a smooth embedding. Hence $\mathcal{F} \circ \Lambda$ is an embedding. Therefore, at each point $\left( (I_n)_{1 \leq n \leq N-1}, \b, \a \right)$ the differential $d(\mathcal{F} \circ \Lambda)$ has rank $N+1$. By (\ref{FcircL}) it is a $2N \times (N+1)$-matrix of the form
	\[ \left( \begin{array}{c|c|c}
&&\\
0_{(N-1) \times (N-1)} & 0_{(N-1) \times 1} & 0_{(N-1) \times 1} \\
&&\\
\hline
&&\\
(-2i \partial_{I_n} \partial_{I_l} H_\a)_{1 \leq n,l \leq N-1} & 0_{(N-1) \times 1} & 0_{(N-1) \times 1} \\
&&\\
\hline
0_{1 \times (N-1)} & 1 & 0 \\
\hline
\ldots & 0 & -\a^{-1}
\end{array} \right)
\]
where $0_{N_1 \times N_2}$ denotes the $N_1 \times N_2$-matrix with all entries $0$. Hence the rank of the $(N-1) \times (N-1)$-matrix $\left( \frac{\partial^2 H_\a}{\partial I_n \partial I_l} \right)_{1 \leq n,l \leq N-1}$ has to be $N-1$. This proves Theorem \ref{nekhtodatheorem}.
\end{proof}

\appendix

\section{Appendix: Proof of Lemma \ref{tildeomega2lemma}} \label{abeliandiff2}

%\begin{proof}[Proof of Lemma \ref{tildeomega2lemma}]
By (\ref{dotlrepr}), $\dot{\Delta}_\l$ admits the product representation
	\[ \dot{\Delta}_\l = N \a^{-N} \prod_{n=1}^{N-1} (\l - \dot{\l}_n),
\]
where the roots $(\dot{\l}_n)_{1 \leq n \leq N-1}$ of $\dot{\Delta}$, when listed in increasing order, satisfy $\l_{2n} < \dot{\l}_n < \l_{2n+1}$ for any $1 \leq n \leq N-1$. Hence
\begin{equation} \label{lambdadeltalambda}
  \tilde{\Omega}_2 = N \frac{\l \, (\l - \dot{\l}_1) \cdots (\l - \dot{\l}_{N-1})}{\sqrt{(\l - \l_1) \cdots (\l - \l_{2N})}} \, d\l.
\end{equation}
It is clear that $\tilde{\Omega}_2$ is holomorphic on the set $\Sigma_{b,a} \setminus \{ \infty^+, \infty^- \}$. %One has for $\l > \l_{2N}$, by the product representations (\ref{delta2lrepr}) and (\ref{dotlrepr}) of $\Delta(\l)^2 - 4$ and $\dot{\Delta}(\l)$, respectively,
In the standard chart $\l = \frac{1}{z}$ at $\infty^+$ one has
%\begin{eqnarray}
%  \frac{\l \, \dot{\Delta}_{\l}}{\sqrt{\Delta^2(\l) - 4}} \, d\l \Big|_{\l = \frac{1}{z}} & = & \frac{-N \, (1 - \dot{\l}_1 z) \cdot \ldots \cdot (1 - \dot{\l}_{N-1} z) / z^N}{\sqrt[+]{(1 - \l_1 z) \cdot \ldots \cdot (1 - \l_{2N} z)} / z^N} \frac{dz}{z^2} \nonumber \\
%& = & \frac{-N}{z^2} \frac{(1 - \dot{\l}_1 z) \cdot \ldots \cdot (1 - \dot{\l}_{N-1} z)}{\sqrt[+]{(1 - \l_1 z) \cdot \ldots \cdot (1 - \l_{2N} z)}} \, dz. \label{diff1aroundinfty}
%\end{eqnarray}
	\[ \tilde{\Omega}_2 = N \frac{(1 - \dot{\l}_1 z) \cdots (1 - \dot{\l}_{N-1} z)}{\sqrt[+]{(1 - \l_1 z) \cdots (1 - \l_{2N} z)}} \cdot \frac{dz}{-z^2}.
\]
Using that $(1 - \l_n z)^{-\frac{1}{2}} = 1 + \frac{1}{2} \l_n z + O(z^2)$ near $z=0$ one gets
\begin{eqnarray*}
	\tilde{\Omega}_2 & = & -\frac{N}{z^2} \left( 1 - \left( \sum_{n=1}^{N-1} \dot{\l}_n \right) z + O(z^2) \right) \cdot \left( 1 + \frac{1}{2} \left( \sum_{n=1}^{2N} \l_n \right) z + O(z^2) \right) dz \\
	& = & \left( -N \frac{1}{z^2} + N \left( \sum_{n=1}^{N-1} \dot{\l}_n - \frac{1}{2} \sum_{n=1}^{2N} \l_n \right) \frac{1}{z} + O(1) \right) dz.%  \label{diff1h1}
\end{eqnarray*}
%\begin{equation}
%  \frac{\l \, \dot{\Delta}_{\l}}{\sqrt{\Delta^2(\l) - 4}} \, d\l \Big|_{\l = \frac{1}{z}} = \left( -\frac{N}{z^2} + \frac{c_1}{z} + h_1(z) \right) dz,
%\end{equation}
By (\ref{lambdadotlambda}), one has
	\[ \sum_{n=1}^{N-1} \dot{\l}_n = \frac{N-1}{2N} \sum_{n=1}^{2N} \l_n = \frac{N-1}{N} \sum_{n=1}^N b_n
 = -(N-1) \b \]
%\begin{eqnarray*}
%  c_1 & = & -N \left( -\sum_{k=1}^{N-1} \dot{\l}_k + \sum_{j=1}^{2N} \frac{\l_j}{2} \right) = -N \left( -\frac{N-1}{2N} \sum_{j=1}^{2N} \l_j + \frac{1}{2} \sum_{j=1}^{2N} \l_j \right) \\
%& = & -\frac{1}{2} \sum_{j=1}^{2N} \l_j = N \b.
%\end{eqnarray*}
Hence the coefficient of $\frac{1}{z}$ in the expansion above equals
	\[ N \left( \sum_{n=1}^{N-1} \dot{\l}_n - \frac{1}{2} \sum_{n=1}^{2N} \l_n \right) = N (-(N-1) \b + N \b) = N \b
\]
which by assumption equals zero. Alltogether we have proved that with respect to the standard chart $\l = \frac{1}{z}$ at $\infty^+$,
	\[ \tilde{\Omega}_2 = \left( -\frac{N}{z^2} + O(1) \right) dz.
\]
By a similar computation one sees that in the standard chart $\l = \frac{1}{z}$ at $\infty^-$, one has $\tilde{\Omega}_2 = \left( \frac{N}{z^2} + O(1) \right) dz$. This completes the proof of Lemma \ref{tildeomega2lemma}.
%\end{proof}

\section{Appendix: Proof of Proposition \ref{arcoshlemma}} \label{asymtotic}

To prove Proposition \ref{arcoshlemma} we first need to derive some auxiliary results. Let $(b,a) \in \Mba$ and assume that $\l > \l_{2N}$ in the sequel. Recall that the Floquet multipliers associated to the difference equation (\ref{diff}) are defined as the eigenvalues of the monodromy matrix
\begin{displaymath}
  \left( \begin{array}{cc} y_1(N,\l) & y_2(N,\l) \\ y_1(N+1,\l) & y_2(N+1,\l) \end{array} \right).
\end{displaymath}
Using the Wronskian identity, one sees that the characteristic polynomial of the monodromy matrix is given by $1 - \Delta_\l \xi + \xi^2$, hence the Floquet multipliers are $\xi_\pm(\l) = \frac{\Delta_\l}{2} \pm \frac{1}{2} \sqrt{\Delta_\l^2 - 4}$. As $\Delta_\l > 2$ for $\l > \l_{2N}$, $\xi_\pm(\l)$ are real valued and satisfy $\xi_+(\l) > 1 > \xi_-(\l) > 0$ as well as $\xi_+(\l) \cdot \xi_-(\l) = 1$. Solutions of (\ref{diff}) corresponding to the Floquet multiplier% $\xi_+(\l)$ are thus expanding, $u(n+N,\l) = \xi_+(\l) u(n,\l)$ for any $n \in \Z$. Denote the Floquet multiplier $\xi_+(\l)$ by $w(\l)$. Then
\begin{displaymath}
  w(\l) \equiv \xi_+(\l) = \frac{\Delta_\l}{2} + \frac{1}{2} \sqrt{\Delta_\l^2 - 4}
\end{displaymath}
are thus expanding. On the other hand, as $\log (x + \sqrt{x^2-1}) = \textrm{arcosh } x$ for $x>1$ one has
	\[ \log \left( \frac{\Delta_\l}{2} + \frac{1}{2} \sqrt{\Delta_\l^2 - 4} \right) = \textrm{arcosh} \, \frac{\Delta_\l}{2}
\]
and therefore
\begin{displaymath}
  \log w(\l) = \textrm{arcosh} \, \frac{\Delta_\l}{2}.
\end{displaymath}
For any $\l > \l_{2N}$ denote by $(u(n, \l))_{n \in \Z}$ a solution of (\ref{diff}) satisfying% $u(0) = 1$ and 
	\[ u(n+N,\l) = w(\l) u(n,\l) \quad \forall \; n \in \Z.
\]
We claim that% for any $\l > \l_{2N}$,
\begin{equation} \label{unlneq0}
	u(n,\l) \neq 0 \quad \forall \; n \in \Z.
\end{equation}
Indeed, if there were $k \in \Z$ with $u(k,\l) = 0$, then $\l$ would be an eigenvalue of $L_2(S^k(b,a))$ where $S^k(b,a)$ denotes the shifted element
	\[ S^k(b,a) := (b_{n+k},a_{n+k})_{1 \leq n \leq N} \in \M
\]
and $L_2(b,a)$ denotes the $(N-1) \times (N-1)$ Jacobi matrix given by
	\[ \left( \begin{array}{ccccc}
b_2 & a_2 & 0 & \ldots & 0  \\
a_2 & \ddots & \ddots & \ddots & \vdots  \\
0 & \ddots & \ddots & \ddots & 0 \\
\vdots & \ddots & \ddots & \ddots & a_{N-1} \\
0 & \ldots & 0 & a_{N-1} & b_N \\
\end{array} \right).
\]
%But this is impossible since $(b_{n-k},a_{n-k})_{k \in \Z}$ has the same periodic and antiperiodic spectrum as $(b_n,a_n)_{n \in \Z}$. Hence the Dirichlet eigenvalues of $(b_{n-k},a_{n-k})_{k \in \Z}$ are bounded by $\l_{2N-1}$. 
However, 
	\[ \textrm{spec} \, L^\pm(S^{k}(b,a)) = \textrm{spec} \, L^\pm(b,a),
\]
and spec$(L_2(S^{k}(b,a)))$ is bounded by max $L^+(S^{k}(b,a))$ (cf. \cite{moer} or \cite{ahtk1}). This leads to a contradiction, and (\ref{unlneq0}) is proved. Hence the solution $(u(n,\l))_{n \in \Z}$ can always be normalized by $u(0,\l) = 1$. Then $w(\l) = u(N,\l) / u(0,\l) = u(N,\l)$, or
\begin{equation} \label{arcoslog}
  \textrm{arcosh} \, \frac{\Delta_\l}{2} = \log u(N,\l).
\end{equation}

\begin{proof}[Proof of Proposition \ref{arcoshlemma}]
%For $\l > \l_{2N}$ consider the Floquet multiplier
%\begin{equation} \label{floquetdef}
%  w(\l) = \frac{1}{2} (\Delta(\l) + \sqrt[+]{\Delta^2(\l) - 4}),
%\end{equation}
%where $\sqrt[+]{\cdot}$ is the standard branch of the square root on $\C \setminus (-\infty, 0]$. Note that for $\l > \l_{2N}$, $\Delta(\l) > 2$ and hence $w(\l) > 1$. Moreover, it follows from (\ref{floquetdef}) that
%\begin{equation} \label{arcoshdeltafloquet}
%  \textrm{arcosh } \frac{\Delta(\l)}{2} = \log w(\l).
%\end{equation}
Let $\l > \l_{2N}$ and write $u(n) = u(n,\l)$. In view of (\ref{unlneq0}) we may define
\begin{equation} \label{phidef}
  \phi(n) \equiv \phi(n,\l) := \frac{u(n+1)}{u(n)}, \quad n \in \Z.
\end{equation}
One verifies that $\phi(n)$ satisfies the discrete Riccati equation (cf. e.g. \cite{teschl})
\begin{equation} \label{riccati}
  a_n \phi(n) \phi(n-1) + (b_n - \l) \phi(n-1) + a_{n-1} = 0.
\end{equation}
In the case $b_n = -\b$ and $a_n = \a$ for any $n \in \Z$, $u(n,\l)_{n \in \Z}$ can be computed explicitly. Indeed, making the ansatz $u(n) = e^{\kappa n}$, one concludes that $\phi(n) \equiv e^\kappa$ is given by
\begin{displaymath}
 e^\kappa = \frac{\l + \b}{2\a} \left( 1 + \sqrt{1 - \left( \frac{2\a}{\l+\b} \right)^2} \right).
\end{displaymath}
Hence for $\phi(n)$ one gets the expansion
\begin{equation} \label{phinl}
	\phi(n, \l) = \frac{\l}{\a} + \frac{\b}{\a} + O(\l^{-1}), \; \textrm{as} \, \l \to \infty.
\end{equation}
In the case of an arbitrary element $(b,a) \in \Mba$, (\ref{phinl}) suggests to make the ansatz
\begin{equation} \label{phiansatz}
  \phi(n,\l) = \frac{\l}{a_n} - \frac{b_n}{a_n} + \frac{1}{a_n} \sum_{k=1}^\infty \frac{\phi_k(n)}{\l^k}.
\end{equation}
Substituting this ansatz into (\ref{riccati}) one gets by comparison of coefficients% $(\forall \, n \in \Z)$

\begin{equation} \label{phi1phi2formel}
  \phi_1(n) = -a_{n-1}^2 \quad \forall n \in \Z.%\textrm{and} \quad \phi_2(n) = - b_{n-1} a_{n-1}^2.
\end{equation}
%and
%\begin{equation} \label{phi3formel}
%  \phi_3(n) = -a_{n-1}^2 (b_{n-1}^2 + a_{n-2}^2).
%\end{equation}

By (\ref{phidef}), we have $u(N,\l) = \prod_{n=0}^{N-1} \phi(n,\l)$, and thus by (\ref{arcoslog}),
\begin{displaymath}
  \textrm{arcosh } \frac{\Delta_\l}{2} = \sum_{n=0}^{N-1} \log \phi(n,\l).
\end{displaymath}
In view of the asymptotic expansion (\ref{phiansatz}) and the values of the coefficients $\phi_1(n)$ %and $\phi_2(n)$ 
given by (\ref{phi1phi2formel}) it then follows that
\begin{displaymath}
  \textrm{arcosh } \frac{\Delta_\l}{2} = \sum_{n=0}^{N-1} \log \frac{\l}{a_n} + \sum_{n=0}^{N-1} \log \left( 1 - \frac{b_n}{\l} - \frac{a_{n-1}^2}{\l^2} %- \frac{b_{n-1} a_{n-1}^2}{\l^3} 
  + O(\l^{-3}) \right).
\end{displaymath}
%\begin{eqnarray*}
%  \textrm{arcosh } \frac{\Delta(\l)}{2} & = & \sum_{j=0}^{N-1} \log \phi(j) \\
%& = & \sum_{j=0}^{N-1} \log \sum_{k \geq 0} \frac{\phi^{(k)}(j)}{\l^k} \\
%& = & \sum_{j=0}^{N-1} \log \left( \frac{a_j}{\l} + \frac{a_j b_{j+1}}{\l^2} + \frac{a_j (a_{j+1}^2 + b_{j+1}^2)}{\l^3} + O(\l^{-4}) \right).
%\end{eqnarray*}
Note that
\begin{equation} \label{casimirlambda}
  \sum_{n=0}^{N-1} \log \frac{\l}{a_n} = N \log \l - \log \prod_{n=0}^{N-1} a_n = N \log \l - N \log \a.
\end{equation}
Using $\log(1+x) = x - \frac{1}{2} x^2 + O(x^3)$, one sees that
	\[ \sum_{n=0}^{N-1} \log \left( 1 - \frac{b_n}{\l} - \frac{a_{n-1}^2}{\l^2} %- \frac{b_{n-1} \, a_{n-1}^2}{\l^3} 
  + O(\l^{-3}) \right) = \frac{N \b}{\l} - \frac{1}{\l^2} \sum_{n=0}^{N-1} \left( a_{n-1}^2 + \frac{1}{2} b_n^2 \right) + O(\l^{-3}),
\]
which by (\ref{htodaflaschka}) equals
\begin{equation} \label{blabla}
	\frac{N \b}{\l} - \frac{1}{\l^2} H_{Toda}(b,a) + O(\l^{-3}).
\end{equation}
%\setlength\arraycolsep{1pt}{\begin{eqnarray}
%  \sum_{n=0}^{N-1} & \log & \left( 1 - \frac{b_n}{\l} - \frac{a_{n-1}^2}{\l^2} %- \frac{b_{n-1} \, a_{n-1}^2}{\l^3} 
%  + O(\l^{-3}) \right) \nonumber\\
%& = & \frac{N \b}{\l} - \frac{1}{\l^2} \sum_{n=0}^{N-1} \left( a_{n-1}^2 + \frac{1}{2} b_n^2 \right) + O(\l^{-3}) \nonumber\\
%& = & \frac{N \b}{\l} - \frac{1}{\l^2} H_{Toda}(b,a) + O(\l^{-3}). \label{blabla}
%\end{eqnarray}}
%Recalling % the formula $H = \frac{1}{2} \sum_{n=1}^N b_n^2 + \sum_{n=1}^N a_n^2$for the Toda Hamiltonian in terms of the $b_j$'s and $a_j$'s and 
% the definition (\ref{casimirdef}) of the two Casimir functions $C_1$ and $C_2$, this leads to
Combining (\ref{casimirlambda}) and (\ref{blabla}) we get the claimed expansion
\begin{displaymath}% \label{asymtdiscrformel}
  \textrm{arcosh } \frac{\Delta_\l}{2} = N \log \l - N \log \a + \frac{N \b}{\l} - \frac{1}{\l^2} H_{Toda} + O(\l^{-3}).
\end{displaymath}
%as claimed.
%\begin{displaymath}
%   \textrm{arcosh } \frac{\Delta(\l)}{2} = -N \log \l + N \log C_2 - \frac{N C_1}{\l} + \frac{1}{\l^2} \sum_{j=0}^{N-1} \left( a_{j+1}^2 + \frac{1}{2} b_{j+1}^2 \right) + O(\l^{-3}).
%\end{displaymath}
%By the formula $H = \frac{1}{2} \sum_{n=1}^N b_n^2 + \sum_{n=1}^N a_n^2$ for the Toda Hamiltonian in terms of the $b_j$'s and $a_j$'s, this proves (\ref{asymtdiscrformel}) and hence Lemma \ref{arcoshlemma}.
This completes the proof of Proposition \ref{arcoshlemma}.
\end{proof}


\begin{thebibliography}{99}

\bibitem{beiz}
\textsc{G. P. Berman \& F. M. Izrailev}, The Fermi-Pasta-Ulam problem: $50$ years of progress. \emph{arXiv:nlin.CD/0411062 v3}

\bibitem{biku}
\textsc{R. F. Bikbaev \& S. B. Kuksin}, On the Parametrization of Finite-Gap Solutions by Frequency and Wavenumber Vectors and a Theorem of I. Krichever. \emph{Lett. Math. Phys.} \textbf{28} (1993), 115-122.

\bibitem{broer}
\textsc{H. W. Broer}, KAM theory: the legacy of Kolmogorov's 1954 paper. \emph{Bull. AMS (New Series)} \textbf{41(4)} (2004), 507-521.

%\bibitem{chirka}
%\textsc{E. M. Chirka}, \emph{Complex Analytic Sets}. Kluwer Academic Publishers, Dordrecht, The Netherlands, 1989.

\bibitem{fakr}
\textsc{H. M. Farkas \& I. Kra}, \emph{Riemann surfaces}. Graduate Texts in Mathematics \textbf{71}, 2nd ed., Springer, New York, 1992.

\bibitem{fla1}
\textsc{H. Flaschka}, The Toda lattice. I. Existence of integrals. \emph{Phys. Rev.}, Sect. B \textbf{9} (1974), 1924-1925.

\bibitem{gall}
\textsc{G. Gallavotti} (Ed.), \emph{The Fermi-Pasta-Ulam Problem: A Status Report}. Lect. Notes Phys. \textbf{728}, Springer, Berlin, Heidelberg, 2008, DOI: 10.1007/978-3-540-72995-2.

\bibitem{grha}
\textsc{P. Griffiths \& J. Harris}, \emph{Principles of algebraic geometry}. John Wiley, 1978.

\bibitem{henon}
\textsc{M. H\'enon}, Integrals of the Toda lattice. \emph{Phys. Rev.}, Sect. B \textbf{9} (1974), 1921.

\bibitem{ahtk1}
\textsc{A. Henrici \& T. Kappeler}, Global action-angle variables for the
periodic Toda lattice. \emph{Int. Math. Res. Not.} 2008; Vol. 2008: article ID rnn031, 52 pages, DOI: 10.1093/imrn/rnn031.

\bibitem{ahtk2}
\textsc{A. Henrici \& T. Kappeler}, Global Birkhoff coordinates for the periodic Toda lattice. \emph{Nonlinearity} \textbf{21}, 2008, 2731-2758.% DOI: 10.1088/0951-7715/21/12/001

\bibitem{ahtk3}
\textsc{A. Henrici \& T. Kappeler}, Birkhoff normal form for the periodic Toda lattice.  In: Integrable Systems and Random Matrices, \emph{Contemp. Math.} \textbf{458}, 2008, 11-29.


\bibitem{kapo}
\textsc{T. Kappeler \& J. P\"oschel}, \emph{KdV~\&~KAM}. Ergebnisse der Mathematik, 3. Folge, \textbf{45}, Springer, Berlin, 2003.

\bibitem{kato}
\textsc{T. Kappeler \& P. Topalov}, Global Well-Posedness of KdV in $H^{-1}(\T, \R)$. \emph{Duke Math. J.} \textbf{135}(2) (2006), 327-360.

\bibitem{krich}
\textsc{I. M. Krichever},  Perturbation theory in periodic problems for two-dimensional integrable systems. \emph{Soviet Sci. Rev. C. Math. Phys.} \textbf{9}(2) (1992), 1-103.

\bibitem{lochak}
\textsc{P. Lochak}, Hamiltonian perturbation theory: periodic orbits, resonances and intermittency. \emph{Nonlinearity} \textbf{6} (1993), 885-904.

\bibitem{lone}
\textsc{P. Lochak \& A. Neishtadt}, Estimates of stability time for nearly integrable systems with a quasi-convex Hamiltonian. \emph{Chaos} \textbf{2} (1992), 495-499.

\bibitem{mana}
\textsc{S. V. Manakov}, Complete integrability and stochastization of discrete dynamical systems. \emph{Zh. Exp. Teor. Fiz.} \textbf{67} (1974), 543-555 [Russian]. English translation: \emph{Sov. Phys. JETP} \textbf{40} (1975), 269-274.

\bibitem{moer}
\textsc{P. van Moerbeke}, The spectrum of Jacobi matrices.
\emph{Invent. Math.} \textbf{37} (1976), 45-81.

\bibitem{nekh1}
\textsc{N. N. Nekhoroshev}, An exponential estimate of the time of stability of nearly-integrable Hamiltonian systems I. \emph{Uspekhi Mat. Nauk} \textbf{32}:6 (1977), 5-66; \emph{Russian Math. Surveys} \textbf{32}:6 (1977), 1-65.

\bibitem{nekh2}
\textsc{N. N. Nekhoroshev}, An exponential estimate of the time of stability of nearly-integrable Hamiltonian systems II. \emph{Trudy Sem. Petrovsk.} \textbf{5} (1979), 5-50; \emph{Topics in Modern Mathematics, Petrovskii Sem. No. 5}. O.A. Oleinik Ed., Consultant Bureau, New York, 1985.

\bibitem{poeschel0}
\textsc{J. P\"oschel}, Nekhoroshev Estimates for Quasi-Convex Hamiltonian Systems. \emph{Math. Z.} \textbf{213} (1993), 187-216.

\bibitem{poeschel}
\textsc{J. P\"oschel}, On Nekhoroshev's Estimate at an Elliptic Equilibrium. \emph{Int. Math. Res. Not.} \textbf{4} (1999), 203-215.

\bibitem{teschl}
\textsc{G. Teschl}, \emph{Jacobi Operators and Completely Integrable Nonlinear Lattices}. Math. Surveys and Monographs \textbf{72}, Amer. Math. Soc., Providence, 2000.

\bibitem{toda}
\textsc{M. Toda}, \emph{Theory of Nonlinear Lattices}, 2nd enl. ed., Springer Series in Solid-State Sciences \textbf{20},
Springer, Berlin, 1989.




\end{thebibliography}
\end{document}